\theoremstyle{plain}
\newtheorem{thm}{Theorem}
\newtheorem{cor}{Corollary}
\newtheorem{lem}{Lemma}
\theoremstyle{definition}
\newtheorem{defn}{Definition}
\theoremstyle{remark}
\author{David Vitali e Andrea Mari}
\title{Optomechanical Devices: CV Quantum Systems}
\begin{document}
\begin{titlepage}

\begin{center}

\Large{{\huge U}NIVERSIT\'A \hspace{0.4 cm}  DEGLI \hspace{0.4 cm} {\huge S}TUDI  \hspace{0.4 cm} DI \hspace{0.4 cm} {\huge C}AMERINO} \tiny\\
\rule{13 cm}{1pt} \\

\vspace{0,4cm}

{\Large{F}\small{ACOLT\'A DI} \Large{S}\small{CIENZE E} \Large{T}\small{ECNOLOGIE}}
\vspace{0,4cm}

{\Large{\bfseries Corso di Laurea Specialistica in Fisica}}
\\ \vspace{0,3 cm}
\begin{large} (Classe 20/S)\end{large}
\vspace{0,4cm}

{\large \textit {Dipartimento di Fisica}}
\vspace{0.3cm}

\includegraphics[width=2cm]{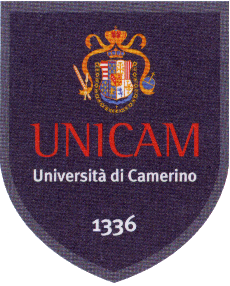}

\vspace{0.3 cm}

\Large{\bfseries OPTOMECHANICAL DEVICES}
\vspace{0,3 cm}

\Large{\bfseries ENTANGLEMENT AND TELEPORTATION}
\vspace{1 cm}
\begin{center}
 Abstract
\vspace{0.3 cm}
\end{center}

\begin{minipage}{15 cm}
\small 
In this thesis we study continuous variable systems, focusing on optomechanical devices where a laser field interacts with a macroscopic mechanical oscillator through radiation pressure. We consider the optimization problem of quantum teleportation by means of Gaussian local operations and we study the relation between the amount of entanglement shared by Alice and Bob and the maximum fidelity they can reach for the teleportation of coherent states. We also theoretically predict the possibility of creating a robust entaglement between a mechanical mode of a Fabry-Perot mirror and the Stokes sidband of the output field. We show that by driving two different optical modes supported by the cavity, a stable steady state is reachable in which the optomechanical entanglement is significantly improved.
%
\\ \end{minipage}

\vspace{1 cm}
\large Master's Thesis on \\
         Quantum Optics

\vspace{0.5 cm}

\large{Relatore:\hspace{9.2 cm}Candidato:}

\large{Prof. David Vitali\hspace{7.2 cm}Andrea Mari}

\vspace{0.5 cm}
\rule{13 cm}{1pt} \\
\normalsize{\Large A}NNO {\Large A}CCADEMICO {\Large 2007-2008}
\end{center}

\end{titlepage}

{
\pagestyle{empty}
\ 
\vspace{\stretch{1}}
\begin{flushright}
\textit{Alla prof.ssa Doriana Fabiani, \\
per tutto quello che mi ha dato \\
semplicemente facendo il suo lavoro.}
\end{flushright}
\ 
\vspace{\stretch{2}}
\clearpage
}

\frontmatter
\tableofcontents
\chapter{Introduction}

The birth and the development of quantum mechanics is a remarkable example of the scientific method introduced in the 17th century by Galileo Galilei. He defined this method as the composition of two fundamental steps: \textit{``sensate esperienze''} (sensory experience) and \textit{``necessarie dimostrazioni''} (necessary demonstrations)\cite{Galileo}. 

Unexplained phenomena like photoelectric emission, black body radiation, atomic emission lines, \textit{etc.}, encouraged the definition of a new theory called \textit{quantum mechanics}. This is the first step. However, simultaneously with the birth of the theory, scientists started to ask what are all the possible implications of the new quantum mechanical laws. This is the second step. They discovered that these  implications are very interesting and unusual, e.g. a system can be in a superposition of states, the result of a measurement can be predicted only as a probability, position and momentum of a particle cannot be simultaneously measured with infinite precision, the effect of a measurement on a system can instantaneously affect the state of another spacelike separated system, \textit{etc.}. Moreover, during the last century, other interesting questions were posed: ``Is quantum mechanics a complete theory? Or is it a manifestation of an underlying \textit{hidden variable} structure?'', ``Can we use quantum mechanics to transmit information or to perform computations, with better performances with respect to classical methods?'', ``If quantum mechanical laws are true for microscopic systems, why should not be true also in the macroscopic world?'', \textit{etc.}. This thesis is focused on the second step of the scientific method which Galilei called ``necessarie dimostrazioni''. In fact we consider some important implications and predictions of quantum mechanics and we study how they can be experimentally verified. 

In particular, in the first part of the thesis, we consider the quantum mechanical phenomenon of \textit{entanglement}, according to which the state of a system is so much correlated to the state of another spacelike separated system, that an operation performed on the first one causes a distant action on the second one. 

Within the formalism of quantum optics, we introduce also the technique of \textit{quantum teleportation}. This is a smart way of using entanglement and classical communication, to transmit an exact copy of a quantum state between two spacelike separated interlocutors (usually called Alice and Bob). We determine also which are the best local operations, that Alice and Bob can perform at their sites in order to optimize the teleportation of a coherent state. For a fixed amount of entanglement (negativity), we find upper and lower bounds for the maximum of the teleportation fidelity and we prove that this maximum is reached iff the correlation matrix of the shared entangled state is in a particular \textit{normal form}.

Usually it is said that quantum mechanics is the study of physical systems whose dimensions are very small such as atoms, electrons, subatomic particles, photons, \textit{etc.}. However this is not completely true, since quantum mechanics is a new way of looking nature and it is a theory which describes an electron as well as a planet. One of the aim of this work is indeed to give theoretical predictions of quantum effects in macroscopic systems. In particular, in the second part of the thesis, we will investigate optically driven mechanical oscillators: a new emerging research field named \textit{quantum optomechanics}. 

We consider an optical cavity in which one of the two mirrors is free to oscillate around its equilibrium position. The motion of the mirror is coupled with the cavity light field by radiation pressure. We will show that this coupling can optically cool a vibrational mode of the mirror and can entangle the state of the macroscopic mirror with the intracavity field. Moreover, we show that appropriately filtering the cavity output field, different optical modes can be extracted. These modes possess a significant amount of entanglement with the mirror, which can be also higher than the entanglement of the intracavity field. 

In the last chapter we consider also a cavity driven by two different lasers. We show that, by choosing opposite detunings and equal laser powers, a stable regime can be reached. In this regime, optomechanical entanglement between the laser fields and the mirror is improved. Also the two optical output fields, if appropriately filtered, are robustly entangled even at room temperature. 

Finally we apply the classification criterion of \cite{giedke} to show that, both in the monochromatic and in the bichromatic setup, we can find two optical output sidebands which together with the mirror vibrational mode are in a fully tripartite-entangled state.

\mainmatter

\part[\small{CONTINUOUS VARIABLE QUANTUM INFORMATION}]{CONTINUOUS VARIABLE QUANTUM INFORMATION}

\begin{chapter}{Quantum mechanics}
It is well known from quantum mechanics, that a pure state of a quantum system can be represented by a \textit{ray} of a Hilbert space associated to the system.
This ray is written in the Dirac notation \cite{Dirac} with a \textit{ket} vector $|\psi \rangle$ and it is said to be \textit{pure} since the uncertainty that it possesses is due only to the Heisenberg principle and not to our lack of information about it.
However, in general and especially in quantum optics, we do not know exactly which is the state vector of a system, but we know only the probability $P(a)$ that the system is in the state $|\psi_a\rangle$. In this case the system is said to be in a \textit{mixed} state.
For this reason it is better to use a mathematical object, firstly introduced by Von Neumann \cite{VonNeumann}, which is more powerful than the usual ket vector: the density operator.
\section{Density operator}
\begin{defn}
We associate to every state a density operator $\hat \rho$ defined through the following convex sum of projectors
\begin{equation}
\hat\rho= \sum_a {P(a) |\psi_a\rangle\langle \psi_a|}, \label{eq 1.1}
\end{equation}
where $P(a)$ is the probability that the system is in the pure state vector $|\psi_a\rangle$ and therefore we must require $\sum_a P(a)=1$.
\end{defn}
This new formalism associates to each possible state of a system an operator $\hat\rho$ which contains all the quantum and statistical informations that we know about the system.

The expected value of a generic observable $\hat O$ is given by
\begin{equation}
\langle \hat O \rangle=\sum_a {P(a) \langle \psi_a| \hat O |\psi_a\rangle } \label{eq 1.2}
\end{equation}
and we can write it in a compact form as
\begin{equation}
\langle \hat O \rangle=Tr\{ \hat \rho\hat O \}
\footnote{The trace of an operator $\hat A$ is defined as $ Tr \{ \hat A \} = \sum_n{ \langle n| \hat A |n \rangle}$,  where  $\{ |n \rangle \}$ is an arbitrary orthonormal basis of the Hilbert space.}
.\label{eq 1.3}
\end{equation} 

Three important properties follow form the definition,
\begin{eqnarray}
&&\hat \rho=\hat \rho^\dag{}, \label{eq 1.5}\\
 &&Tr\{ \hat \rho\}=1,  \label{eq 1.6}\\
 &&\hat \rho \ge 0.  \label{eq 1.7}
\end{eqnarray}

\begin{defn}
For every state $\hat \rho$ we can define \cite{VonNeumann} the \textit{Von Neumann entropy} $H(\hat\rho)$ as
\begin{equation}
H(\hat\rho)=-Tr\{ \hat\rho \ln \hat \rho \} \label{eq 1.8}.
\end{equation}
\end{defn}

The Von Neumann entropy can be considered as a measure of the disorder of a given quantum state. Moreover, at thermal equilibrium, it coincides with the classical thermodynamic entropy $S$ of the system (up to a multiplicative factor given by the Boltzmann constant $K_B$).

\section{Pure and mixed States}
\begin{defn}
Given a density operator $\hat\rho= \sum_a {P(a) |\psi_a\rangle\langle \psi_a|}$, if the set of vectors $\{|\psi_a\rangle\}$ is made of only one element, the state is \textit{pure}, otherwise it is mixed.
\end{defn}

\begin{thm}
A state  $\hat \rho$ is pure iff $\hat \rho^2=\hat \rho$.
\end{thm}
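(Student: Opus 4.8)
The plan is to treat the two implications separately, in both cases leaning on the spectral decomposition of $\hat\rho$ that is made available by the three defining properties of a density operator: Hermiticity, positivity, and unit trace.

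First I would dispose of the easy direction. If $\hat\rho$ is pure, then by the definition of purity it is a single projector, $\hat\rho=|\psi\rangle\langle\psi|$ with $\langle\psi|\psi\rangle=1$. A one-line computation then gives
\begin{equation}
\hat\rho^2=|\psi\rangle\langle\psi|\psi\rangle\langle\psi|=|\psi\rangle\langle\psi|=\hat\rho,
\end{equation}
since the inner factor $\langle\psi|\psi\rangle$ equals $1$. This step needs nothing beyond the normalization of the state vector.

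For the converse I would start from the fact that $\hat\rho$ is Hermitian and positive with unit trace, so the spectral theorem lets me write
\begin{equation}
\hat\rho=\sum_a \lambda_a\,|\phi_a\rangle\langle\phi_a|,
\end{equation}
where the $|\phi_a\rangle$ form an orthonormal family of eigenvectors and the eigenvalues satisfy $\lambda_a\ge 0$ and $\sum_a\lambda_a=1$. The hypothesis $\hat\rho^2=\hat\rho$ then forces each eigenvalue to obey $\lambda_a^2=\lambda_a$, hence $\lambda_a\in\{0,1\}$. Combining this dichotomy with the constraint $\sum_a\lambda_a=1$ shows that exactly one eigenvalue equals $1$ and all the others vanish, so $\hat\rho=|\phi_b\rangle\langle\phi_b|$ for a single vector $|\phi_b\rangle$; by the definition of a pure state this is precisely what we had to show.

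The only delicate point is the appeal to the spectral theorem: in finite dimension it is immediate, while on an infinite-dimensional Hilbert space one must invoke the spectral decomposition for positive trace-class operators, which applies here exactly because $Tr\{\hat\rho\}=1<\infty$ and $\hat\rho\ge 0$. As a calculation-free alternative for the converse, one can instead note that the purity $Tr\{\hat\rho^2\}=\sum_a\lambda_a^2$ obeys $Tr\{\hat\rho^2\}\le\sum_a\lambda_a=1$, with equality precisely when every $\lambda_a$ lies in $\{0,1\}$; since $\hat\rho^2=\hat\rho$ gives $Tr\{\hat\rho^2\}=Tr\{\hat\rho\}=1$, the same conclusion follows. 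I expect the eigenvalue dichotomy $\lambda_a\in\{0,1\}$ to be the conceptual heart of the argument, with the remaining steps reducing to bookkeeping.
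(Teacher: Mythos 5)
Your proof is correct and follows essentially the same route as the paper: diagonalize $\hat\rho$ via its Hermiticity, use $\hat\rho^2=\hat\rho$ to force the eigenvalues into $\{0,1\}$, and then invoke $Tr\{\hat\rho\}=1$ (with positivity) to conclude that exactly one eigenvalue is $1$. Your treatment is in fact slightly cleaner, since the paper's intermediate claim that the eigenvalues could be $0,\pm1$ is superfluous ($\lambda^2=\lambda$ already excludes $-1$), and your remarks on the infinite-dimensional spectral theorem and the trace-purity alternative are welcome additions.
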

\begin{proof}
$\Rightarrow \hat \rho^2=|\psi\rangle\langle \psi||\psi\rangle\langle \psi|=\hat \rho.$\\
$\Leftarrow$ $\hat \rho$ is Hermitian and therefore it can be diagonalized
\begin{equation}
\hat \rho= \sum_n {\lambda_n |n\rangle\langle n|}.\label{eq 1.9}
\end{equation}
The hypothesis $\hat \rho^2=\hat \rho$ implies $\lambda_n=0, \pm1$, but for the normalization condition $Tr\{\hat \rho\}=1$ and $\hat \rho \ge 0$ the only possible eigenvalues are $\lambda_n=\delta_{n,k}$ and therefore $\hat \rho=|k\rangle\langle k|$.
\end{proof}

Another way to distinguish pure states from mixed states is provided by the trace of $\hat\rho^2$.

\begin{thm}
A state is pure iff $Tr\{\hat \rho^2\}=1$, while is mixed iff $Tr\{\hat \rho^2\}<1$.
\end{thm}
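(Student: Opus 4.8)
The plan is to reduce everything to the spectral decomposition of $\hat\rho$ and then compare the two elementary quantities $\sum_n \lambda_n$ and $\sum_n \lambda_n^2$. Since $\hat\rho$ is Hermitian (Eq.~\eqref{eq 1.5}) it admits a diagonalization $\hat\rho = \sum_n \lambda_n |n\rangle\langle n|$ in an orthonormal eigenbasis, exactly as in Eq.~\eqref{eq 1.9}. The positivity $\hat\rho \geq 0$ (Eq.~\eqref{eq 1.7}) forces $\lambda_n \geq 0$, while the normalization $Tr\{\hat\rho\} = 1$ (Eq.~\eqref{eq 1.6}) gives $\sum_n \lambda_n = 1$; together these imply $0 \leq \lambda_n \leq 1$ for every $n$. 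Squaring the decomposition yields $\hat\rho^2 = \sum_n \lambda_n^2 |n\rangle\langle n|$, hence $Tr\{\hat\rho^2\} = \sum_n \lambda_n^2$.

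The key elementary fact is that for $x \in [0,1]$ one has $x^2 \leq x$, with equality precisely when $x \in \{0,1\}$. Applying this termwise gives $Tr\{\hat\rho^2\} = \sum_n \lambda_n^2 \leq \sum_n \lambda_n = 1$, so this trace never exceeds $1$. Equality holds iff every eigenvalue satisfies $\lambda_n \in \{0,1\}$; combined with $\sum_n \lambda_n = 1$ this forces exactly one eigenvalue equal to $1$ and all the others equal to $0$, i.e. $\hat\rho = |k\rangle\langle k|$, a pure state. Conversely, if $\hat\rho$ is pure then by the previous theorem $\hat\rho^2 = \hat\rho$, whence $Tr\{\hat\rho^2\} = Tr\{\hat\rho\} = 1$; this is the immediate direction.

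Putting the two together establishes that $Tr\{\hat\rho^2\} = 1$ iff $\hat\rho$ is pure. The complementary claim is then automatic: a mixed state has at least two strictly positive eigenvalues, for each of which $\lambda_n^2 < \lambda_n$, so at least one term in the sum is strict and $Tr\{\hat\rho^2\} < 1$; and conversely $Tr\{\hat\rho^2\} < 1$ excludes the pure case just characterized, leaving only mixed states. I do not expect any genuine obstacle here: the only point requiring a moment of care is the equality analysis of the inequality $\sum_n \lambda_n^2 \leq \sum_n \lambda_n$, where one must invoke \emph{both} the normalization and the positivity to conclude that a single eigenvalue saturates at $1$, rather than the weight being distributed over several eigenvalues summing to $1$.
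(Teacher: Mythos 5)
Your proof is correct, but it follows a genuinely different route from the paper's. The paper never diagonalizes $\hat\rho$: it computes $Tr\{\hat\rho^2\}$ directly from the defining convex decomposition $\hat\rho=\sum_a P(a)|\psi_a\rangle\langle\psi_a|$ (whose vectors need not be orthogonal), obtaining $Tr\{\hat\rho^2\}=\sum_{a,b}P_aP_b|\langle\psi_a|\psi_b\rangle|^2\le\sum_{a,b}P_aP_b=1$ via the Cauchy--Schwarz bound $|\langle\psi_a|\psi_b\rangle|\le 1$, and then argues that equality forces the decomposition to collapse to a single projector. You instead pass to the spectral decomposition and reduce everything to the scalar inequality $x^2\le x$ on $[0,1]$. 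The paper's approach has the advantage of staying close to its own (decomposition-based) definition of pure versus mixed and of not invoking the spectral theorem a second time; its weak point is the equality analysis, which is stated tersely as ``$P_aP_b=\delta_{a,b}$'' and glosses over the case of non-orthogonal vectors with unit overlap. Your route makes the equality case airtight --- positivity and normalization pin the eigenvalues in $[0,1]$, and saturation forces a single eigenvalue equal to $1$ --- and it also handles the ``mixed iff $Tr\{\hat\rho^2\}<1$'' half cleanly, since a non-pure state must have at least two strictly positive eigenvalues, each contributing a strict inequality. It moreover dovetails naturally with the preceding theorem ($\hat\rho$ pure iff $\hat\rho^2=\hat\rho$), which the paper itself proved by exactly this spectral argument, so your proof is stylistically consistent with the paper even though it is not the proof the paper gives.
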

\begin{proof}
\begin{equation}
Tr\{\hat \rho^2\}=\sum_{a,b}P_aP_b|\langle \psi_a||\psi_b\rangle |^2 \le \sum_{a,b}P_aP_b=1 \label{eq 1.10},
\end{equation}
and the inequality becomes an equality only when $P_aP_b=\delta_{a,b}$, that is if $\hat \rho=|\psi\rangle\langle
\psi|$.
\end{proof}

Finally one could prove also the following theorem, showing that the purity of a state is connected with the amount of information that we have about it.
\begin{thm}
A state $\hat \rho$ is pure iff $H(\hat \rho)=0$, where $H(\hat \rho)$ is the Von Neumann entropy defined in (\ref{eq
1.8}).
\end{thm}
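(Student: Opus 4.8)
The plan is to diagonalize $\hat\rho$ and reduce the statement to an elementary fact about the function $f(\lambda)=-\lambda\ln\lambda$ on the interval $[0,1]$. Since $\hat\rho$ is Hermitian by (\ref{eq 1.5}), I can write it in an orthonormal eigenbasis as $\hat\rho=\sum_n \lambda_n |n\rangle\langle n|$, where by (\ref{eq 1.6}) and (\ref{eq 1.7}) the eigenvalues satisfy $0\le\lambda_n\le 1$ and $\sum_n \lambda_n=1$. In this basis the Von Neumann entropy (\ref{eq 1.8}) takes the diagonal form $H(\hat\rho)=-\sum_n \lambda_n\ln\lambda_n$, where I adopt the usual convention $0\ln 0=0$.

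For the forward implication, I would invoke the characterization of purity established in the first theorem of this section: a pure state satisfies $\hat\rho=|k\rangle\langle k|$, so exactly one eigenvalue equals $1$ and all the others vanish, i.e. $\lambda_n=\delta_{n,k}$. Substituting into the diagonal expression gives $H(\hat\rho)=-1\cdot\ln 1=0$, the remaining terms vanishing by the convention above.

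For the reverse implication, the key observation is that every summand is non-negative: since $0\le\lambda_n\le 1$ we have $\ln\lambda_n\le 0$ and hence $-\lambda_n\ln\lambda_n\ge 0$. A sum of non-negative terms can vanish only if each term vanishes, so $H(\hat\rho)=0$ forces $\lambda_n\ln\lambda_n=0$ for every $n$, which happens precisely when $\lambda_n\in\{0,1\}$. Combined with the normalization $\sum_n\lambda_n=1$, this forces exactly one eigenvalue to equal $1$ and all the rest to equal $0$, whence $\hat\rho=|k\rangle\langle k|$ is pure.

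The main obstacle, such as it is, is analytic rather than conceptual: one must treat the boundary behaviour of $f(\lambda)=-\lambda\ln\lambda$ at $\lambda=0$ carefully, justifying the convention $0\ln 0=0$ through $\lim_{\lambda\to 0^+}\lambda\ln\lambda=0$, and verify that $f(\lambda)\ge 0$ on $[0,1]$ with equality exactly at the two endpoints. Once $\hat\rho$ is diagonalized, everything else follows mechanically.
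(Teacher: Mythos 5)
Your proof is correct, and it is worth noting that the paper itself offers no proof at all for this statement: it merely remarks that ``one could prove also the following theorem'' and moves on, so your argument fills a genuine gap rather than duplicating or diverging from an existing one. Your route is the natural one and is stylistically consistent with how the paper proves the neighbouring purity criteria (the $\hat\rho^2=\hat\rho$ theorem is also proved by diagonalizing $\hat\rho$ and constraining the eigenvalues via $\textrm{Tr}\{\hat\rho\}=1$ and $\hat\rho\ge 0$). All the steps check out: the spectral decomposition is legitimate since a density operator is Hermitian, positive and trace class (so its spectrum is discrete with eigenvalues in $[0,1]$ summing to $1$); the diagonal form $H(\hat\rho)=-\sum_n\lambda_n\ln\lambda_n$ follows from the functional calculus; the forward direction is immediate from $\lambda_n=\delta_{n,k}$; and the reverse direction correctly exploits that $-\lambda\ln\lambda\ge 0$ on $[0,1]$ with equality only at $\lambda\in\{0,1\}$, so that a vanishing sum of non-negative terms forces every eigenvalue to be $0$ or $1$, and normalization then leaves exactly one eigenvalue equal to $1$. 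Your care with the convention $0\ln 0=0$, justified by $\lim_{\lambda\to 0^+}\lambda\ln\lambda=0$, is exactly the right boundary check, and it is what makes the argument valid also in the infinite-dimensional (continuous variable) setting the paper works in, where $\hat\rho$ may have infinitely many nonzero eigenvalues.
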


\section{Dynamics}
In the \textit{Schr\"odinger picture} the time evolution of a state is determined by the Hamiltonian $\hat H$ of the system through the Von Neumann equation (the density matrix equivalent of the Schr\"odinger equation),
\begin{equation}
\frac{d}{dt}\hat\rho(t)= \frac{1}{i\hbar}[\hat H,\hat \rho]
\end{equation}
which can be formally integrated giving
\begin{equation}
\hat\rho(t)=\exp(-\frac{i}{\hbar}\hat H t)\hat \rho(0) \exp(\frac{i}{\hbar} \hat H t),
\end{equation}
while all the observables do not depend on time.

A completely equivalent formulation of the system dynamics is given by the \textit{Heisenberg picture}, in which the state $\hat \rho$ remains constant in time while the observables evolve obeying the Heisenberg equation
\begin{equation}
\frac{d}{dt}\hat A= \frac{1}{i\hbar}[\hat A,\hat H], \label{HeisEq}
\end{equation}
which integrated gives
\begin{equation}
\hat A(t)=\exp(\frac{i}{\hbar}\hat H t)\hat A(0) \exp(-\frac{i}{\hbar} \hat H t). \label{Aevo}
\end{equation}

Finally we can give also another formulation which is intermediate between the two that we have seen: the \textit{interaction picture}.
Let us suppose that the Hamiltonian (in the Schr\"odinger picture) can be written as $\hat H=\hat H_0+\hat H_1$, where usually $\hat H_1$ is an interaction/perturbation term. We want to stay in a reference frame such that the effect of $\hat H_0$ \textit{cancelled}. This can be done by making the following transformation to the operators and to the interaction Hamiltonian $\hat H_1$,
\begin{eqnarray}
 \hat A_I(t)&=&\exp(-\frac{i}{\hbar}\hat H_0 t)\hat A(t) \exp(\frac{i}{\hbar} \hat H_0 t), \label{AIevo}\\
 \hat H_I(t)&=&\exp(-\frac{i}{\hbar}\hat H_0 t)\hat H_1 \exp(\frac{i}{\hbar} \hat H_0 t).
\end{eqnarray}
If we derive (\ref{AIevo}), using (\ref{HeisEq}), we get
\begin{equation}
\frac{d}{dt}\hat A_I(t)= \frac{1}{i\hbar}[\hat A_I(t),\hat H_I(t)], \label{IntEq}
\end{equation}
which is very similar to (\ref{HeisEq}), but in this case the evolution depends only on the interaction Hamiltonian $H_I$.

\end{chapter}

\begin{chapter}{CV systems in phase space}
A quantum system described by observables with a continuous spectrum is said to be a continuous variable (CV) system.
For example, the state of a particle in a generic potential is a CV system, since it is described in terms of its position and momentum which are continuous observables. 

All the quantum mechanical features of a system are very fragile if this system interacts with the environment, since the effect of decoherence destroys these quantum effects in a very short time. For this reason, thanks to its limited interaction with the environment, the electromagnetic radiation is a good candidate for observing quantum phenomena. It is well known that, in the canonical quantization of the electromagnetic field, a radiation mode is formally equivalent to a harmonic oscillator. Therefore when we talk about a CV system we usually refer to one or more harmonic oscillators described (if they are isolated) by the simple Hamiltonian

\begin{equation}
\hat{H} =\frac{1}{2}\sum_{k}^N \hbar \omega_k \Big( \hat p_k^2 +\hat q_k^2 \Big) \label{Hoscill},
\end{equation}
where $\hat q_k$ and $\hat p_k$ are dimensionless position and momentum operators satisfying the \textit{canonical commutation relations} (CCR): $[\hat q_k,\hat p_l]=i\delta_{il}$ and $[\hat q_k,\hat q_l]=[\hat p_k,\hat p_l]=0$. If we define the quadratures vector $\hat R=(\hat q_1,\hat p_1,\hat q_2,\hat p_2,...\hat q_N,\hat p_N)^T$ we can express the CCR in matrix form
\begin{equation}
 [\hat R_k,\hat R_l]=i\Omega_{kl},\label{CCR}
\end{equation}
where $\Omega$ is the $2N\times 2N$ symplectic matrix given by
\begin{equation}
\Omega=\bigoplus_{k=1}^N \left(
\begin{array}{cc}
  0&1 \\
 -1&0 \\
\end{array} 
\right).
\end{equation}
Since we deal with harmonic oscillators, it is useful to introduce also the \textit{annihilation} and \textit{creation} operators which are respectively
\begin{eqnarray}
 \hat a_k &=& \frac{\hat q_k+i \hat q_k}{\sqrt{2}},\\
 \hat a_k^\dag{}&=& \frac{\hat q_k-i \hat q_k}{\sqrt{2}},
\end{eqnarray}
and satisfy the bosonic commutation relations $[\hat a_k,\hat a_l^\dag]=\delta_{il}$ and $[\hat a_k,\hat a_l]=[\hat a_k^\dag,\hat a_l^\dag]=0$.

If we consider a single oscillator, it can be shown that the ground state of the Hamiltonian satisfies $\hat a |0\rangle=0$ and that all the other eigenstates (Fock states) are given by
\begin{equation}
 |n\rangle=\frac{\hat a^{\dag{}n}}{\sqrt{n!}}|0\rangle.
\end{equation}
Fock states have relatively large variances on the position and momentum operators. 
A quantum description which is close to the classical one is given by the \textit{coherent states} which are characterized by equal and minimum uncertainties in the quadratures $\Delta \hat q=\Delta \hat p=\frac{1}{2}$. These states are the eigenstates of the annihilation operator $\hat a |\alpha\rangle=\alpha|\alpha\rangle$ or equivalently $|\alpha \rangle=\exp(\alpha \hat a^\dag{}-\alpha^* \hat a)| 0 \rangle$, where $\alpha$ is the mean amplitude $\langle \hat a \rangle$ of the state. 

The states with minimum uncertainties but different variances in $\hat q$ and $\hat p$ are the \textit{squeezed states} $|\alpha,r\rangle=\exp(\alpha \hat a^\dag{}-\alpha^* \hat a)\exp(r^2 \hat a^{\dag{}2} -r^* \hat a^2)| 0 \rangle$. These  states (coherent and squeezed) are very important in quantum optics, however we do not give a deep treatment here since they are particular cases of the most general class of \textit{Gaussian states} studied in the next section.

If we want to give a quantum description of a system of oscillators we need to work with operators acting on infinite Hilbert spaces and this can create some problems. To overcome these problems it is better to use an equivalent representation in which we use c-number functions instead of operators and we work with a finite number of dimensions: the \textit{phase space}.

In classical mechanics, the state of a particle, or an ensemble of particles, is represented by a probability density $P(q,p)$ in phase space. With this distribution is possible to calculate the mean value of every observable,
\begin{equation}
\langle O(q,p)\rangle=\int P (q,p)O(q,p)dxdp \label{eq 4.1},
\end{equation}
in fact, if we power expand the observable $O(q,p)$, eq. (\ref{eq 4.1}) reduces to a sum of mean values of terms like $q^np^m$
\begin{equation}
\langle O(q,p)\rangle=\sum_{n,m}c_{nm}\int q^np^m P(q,p)dqdp \label{eq 4.2}.
\end{equation}

If we come back to the quantum world, it is natural to try to define a distribution $\mathcal W(q,p)$ in phase space similar to the classical one $P(q,p)$.
However if we deal with quantum states two problems arise. The first is that we have to consider the intrinsic quantum uncertainty of a state, so that for example a distribution which is too much localized is forbidden by the Heisenberg principle. Another problem is that in (\ref{eq 4.2}) the  variables $q$ and $p$ commute while the quantum operators $\hat{q}$ and $\hat{p}$ no not commute. For this reason we have different quantum distributions depending on the order of the variables products that we want to calculate.

\section{The Wigner function}
The fundamental operator that we need in the passage from the Hilbert space to the phase space is the \textit{Weyl operator} defined as 
\begin{equation}
\hat W_\xi= e^{i\xi^T \Omega \hat R},
\end{equation}
where $\Omega$ is the symplectic matrix, $\hat R$ is the quadrature vector and $\xi \in \mathbb R^{2N}$.
It can be shown that the CCR (\ref{CCR}) are equivalent to the \textit{Weyl relations}
\begin{equation}
\hat W_{\xi_1}\hat W_{\xi_2}= e^{i\xi_1^T \Omega \xi_2} \hat W_{\xi_1+\xi_2}.\label{Wrel}
\end{equation}
As we can describe a CV system using the quadrature operators satisfying the CCR, in the same way we can give an equivalent description in terms of Weyl operators satisfying (\ref{Wrel}).
Using these operators we can define something which is similar to a Fourier transform but it is applied to operators instead of functions: the \textit{Fourier-Weyl Transform}.
\begin{defn}
The Fourier-Weyl Transform is a map from $\mathcal H$ to $\mathcal L^2(\mathbb R^{2N})$ which associate to each operator $\hat A$ the function
\begin{equation}
A(\xi)=\textrm{Tr}(\hat A \hat W_\xi).\label{FWT}
\end{equation}
\end{defn}
The inversion of (\ref{FWT}) give us the possibility to express any operator as a linear combination of Weyl operators, in fact we have
\begin{equation}
\hat A=\frac{1}{(2\pi)^N}\int A(\xi) \hat W_{-\xi}d\xi.
\end{equation}
In analogy with the classical one, it can be shown also the \textit{quantum Parseval theorem}
\begin{equation}
\textrm{Tr}(\hat A \hat B)=\frac{1}{(2\pi)^N}\int A(\xi)^* B(\xi)d\xi.\label{parse}
\end{equation}

The most important application of the Fourier-Weyl Transform is when we use it to transform the density operator. In this case the result is called \textit{characteristic function}
\begin{equation}
\chi(\xi)=\textrm{Tr}(\hat \rho \hat W_\xi)\label{chi},
\end{equation}
because of the analogy with the classical characteristic function which is the Fourier transform of a density distribution.
For the Parseval theorem (\ref{parse}), the properties of the density operator that we have seen in the previous chapter imply
\begin{eqnarray}
\chi(0)&=&1,\\
\frac{1}{(2\pi)^N}\int |\chi(\xi)|^2 d\xi&\le& 1,
\end{eqnarray}
where the last integral is equal to $1$ iff the state $\hat \rho$ is pure.

Finally, an important property of the characteristic function is that, through its derivatives, we can express all the \textit{moments} of the density operator. In fact, if we define $\hat R'=\Omega \hat R$, we have
\begin{equation}
\textrm{Tr}[\hat \rho \{ (\hat R'_k)^n,(\hat R'_l)^m\}_{sym}] = \frac{(-i)^{n+m}}{n+m} \frac{\partial^{n+m}}{\partial\xi_k^m \partial\xi_l^n} \chi(\xi)\bigg\vert_{\xi=0} \label{chimom},
\end{equation}
where $\{ , \}_{sym}$ means the symmetrized product.

So, thanks to the characteristic function we can map a density operator to an element of $\mathcal L^2$, but if we want something which is analogous to a probability distribution in a phase space we need the inverse Fourier transform of $\chi(\eta)$
\begin{equation}
\mathcal W(\xi) = \frac{1}{(2\pi)^{2N}}\int \chi(\eta) e^{-i \xi^T \Omega \eta}d\xi, \label{Wigner}
\end{equation}
which is called the \textit{Wigner function}.
This function is very similar to a probability distribution but it has some different properties, for example it can take also negative values.
However it is true that, like a density probability, the Wigner function can be used to calculate the mean values of symmetrically ordered operators
\begin{equation}
\textrm{Tr}[\hat \rho \{ (\hat R'_k)^n,(\hat R'_l)^m\}_{sym}] = \int  \xi_k^n \xi_l^m \mathcal W(\xi) d\xi.
\end{equation}

\section{Gaussian states and Gaussian operations}

\begin{defn}
A quantum state $\hat \rho$ associated to $N$ harmonic oscillators is \textit{Gaussian} if its characteristic function is a Gaussian.
\end{defn}
If we define the displacement vector $d=\textrm{Tr}\{\hat \rho \hat R \}$, and the symmetrically ordered covariance matrix (CM) $V_{kl}=\textrm{Tr}\{\hat \rho [\hat R_k-d_k,\hat R_l-d_l]_+\}$, the characteristic function of a Gaussian state is
\begin{equation}
\chi(\eta)=e^{-\frac{1}{4}( \Omega \eta)^T V \Omega \eta + i d^T \Omega \eta},\label{chiga}
\end{equation}
where we can see that the first and the second moments completely define the state.
From the definition (\ref{Wigner}), the associated \textit{Wigner function} is
\begin{equation}
\mathcal W(\xi)=\pi^{-N}|V|^{-\frac{1}{2}} e^{-(\xi-d)^T\Omega^T V^{-1}\Omega(\xi -d)}, \label{Wgau}
\end{equation}
where $|V|$ is the determinant of the CM.

We must observe that not every Gaussian function can be a Wigner function of a physical state. In fact the Heisenberg principle imposes a constraint on the correlation matrix \cite{NoiseMatrix}.
\begin{thm}
A correlation matrix $V$ correspond to a physical Gaussian state iff
\begin{equation}
V+i\Omega\ge0. \label{bonafide}
\end{equation}
\end{thm}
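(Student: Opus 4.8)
The plan is to prove the two implications separately, establishing necessity by a direct positivity argument and sufficiency by reduction to a normal form via Williamson's theorem.

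For the necessity ($\Rightarrow$) I would not even need the Gaussian hypothesis. Suppose $\hat\rho$ is any physical state with covariance matrix $V$ and displacement $d$, and set $\hat\xi_k=\hat R_k-d_k$. For an arbitrary complex vector $z\in\mathbb{C}^{2N}$ form the operator $\hat A=\sum_k z_k\hat\xi_k$; since $\hat\rho\ge0$ we have $\langle\hat A^\dag\hat A\rangle=\textrm{Tr}\{\hat\rho\,\hat A^\dag\hat A\}\ge0$. Splitting each product $\hat\xi_l\hat\xi_k$ into its symmetric and antisymmetric parts and using the CCR (\ref{CCR}) together with the anticommutator definition of $V$ gives $\langle\hat\xi_l\hat\xi_k\rangle=\tfrac{1}{2}(V_{lk}+i\Omega_{lk})$, so that $\langle\hat A^\dag\hat A\rangle=\tfrac{1}{2}\,z^\dag(V+i\Omega)z\ge0$. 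As this holds for every $z$ and $V+i\Omega$ is Hermitian (because $V$ is real symmetric and $\Omega$ real antisymmetric), the inequality (\ref{bonafide}) follows.

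For the sufficiency ($\Leftarrow$) I would exhibit an explicit physical Gaussian state whenever $V+i\Omega\ge0$. The key tool is Williamson's theorem: any real symmetric positive matrix admits a symplectic $S$ (with $S\Omega S^T=\Omega$) such that $SVS^T=\bigoplus_{k=1}^N\nu_k I_2$, with symplectic eigenvalues $\nu_k>0$. Because congruence by the real invertible $S$ preserves positivity and leaves $\Omega$ invariant, the condition (\ref{bonafide}) is equivalent to $\bigoplus_k(\nu_k I_2+i\omega)\ge0$, where $\omega$ is the single-mode symplectic block; each $2\times2$ block has eigenvalues $\nu_k\pm1$, so (\ref{bonafide}) holds iff $\nu_k\ge1$ for all $k$. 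This is exactly the regime in which each normal mode can carry a physical thermal state of mean occupation $\bar n_k=(\nu_k-1)/2$, whose covariance matrix is $\nu_k I_2$. Taking the tensor product of these thermal states, applying the Gaussian unitary that implements $S^{-1}$ on the quadratures, and finally displacing to fix $d$, produces a bona fide Gaussian state with covariance matrix exactly $V$.

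I expect the main obstacle to lie in the sufficiency direction, and specifically in two structural facts that I would invoke rather than rederive: Williamson's theorem, which guarantees the symplectic diagonalization and hence the translation of the operator inequality (\ref{bonafide}) into the scalar conditions $\nu_k\ge1$; and the metaplectic correspondence, which ensures that every symplectic $S$ is realized by a genuine unitary on the Hilbert space, so that the transformed thermal state is again a legitimate quantum state. Once these are granted, the reduction to single-mode $2\times2$ blocks makes the eigenvalue bookkeeping elementary, and the normalization $\nu_k=1$ for the vacuum (consistent with the anticommutator convention, for which the ground state gives $V=I$) fixes all the constants.
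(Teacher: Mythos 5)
The paper itself gives no proof of this theorem: it is stated with a citation to Simon, Mukunda and Dutta \cite{NoiseMatrix}, and is only used later (via Williamson's theorem) to assert the criterion $s_k\ge 1$ on the symplectic eigenvalues. So there is no in-paper argument to compare against; your proof stands on its own, and it is correct in both directions and consistent with this chapter's conventions (anticommutator CM without the $1/2$, vacuum $V=I$). The necessity argument via $\textrm{Tr}\{\hat\rho\,\hat A^\dag\hat A\}\ge 0$ with $\hat A=\sum_k z_k(\hat R_k-d_k)$ is the standard one, and your observation that it needs no Gaussian hypothesis is a genuine strengthening of the statement as printed. The sufficiency direction is essentially the paper's later remark run in reverse: Williamson reduction to $\bigoplus_k \nu_k I_2$, block eigenvalues $\nu_k\pm 1$, thermal states with $\bar n_k=(\nu_k-1)/2$, and the metaplectic theorem to realize $S^{-1}$ as a unitary, all of which are tools the paper has already set up.

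One loose end you should close: Williamson's theorem, as stated in the paper, applies to \emph{positive definite} matrices, and you invoke it for $V$ without checking that the hypothesis (\ref{bonafide}) forces $V>0$. Semidefiniteness is immediate (test $V+i\Omega\ge 0$ with real vectors, for which $x^T\Omega x=0$), but strict definiteness needs a line of argument: if $Vx=0$ with $x$ real and nonzero, then $x^\dag(V+i\Omega)x=0$, and for a positive semidefinite Hermitian matrix a vanishing expectation value forces $(V+i\Omega)x=0$; hence $i\Omega x=0$, contradicting the invertibility of $\Omega$. With that inserted, the symplectic eigenvalues $\nu_k$ are well defined and your eigenvalue bookkeeping goes through unchanged.
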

We have seen that through the Wigner function we can have a rigorous phase space representation of the state $\hat \rho$ of a system. Now we may ask: if we make a physical operation on the state $\rho$ what is the effect on the Wigner function? We will answer to this question considering only the set of \textit{Gaussian operations}, which are the transformations $\hat \rho \longrightarrow \varepsilon (\hat \rho)$ which maps Gaussian states into Gaussian states. Since a Gaussian state is determined by its first and second moments we can completely define a Gaussian operation describing its effect on the correlation matrix $V$ and on the displacement vector $d$.

\subsection{Displacement transformations}
\begin{defn}
A displacement transformation is a translation in phase space parametrized by the displacement vector $\tau\in \mathbb R^{2N}$. It changes only the first moments of a Gaussian state
\begin{equation}
 d \longrightarrow d'=d+\tau.
\end{equation}
\end{defn}
The corresponding operator in the Hilbert space is the Weyl operator, we have indeed that
\begin{equation}
\hat W_\tau \hat R \hat W_\tau^\dag{}=\hat R +\tau.
\end{equation}
Any Gaussian operation can be seen as a composition of a transformation leaving the first moments unchanged followed by a displacement. For this reason, all the next operations we are going to study will act only on the correlation matrix $V$ of the Gaussian state.

\subsection{Symplectic transformations}
\begin{defn}
A symplectic transformation is parametrized by a $2N\times2N$ matrix $S$ which transforms the set of coordinates $\hat R$ to $\hat R'=S\hat R$, leaving the CCR unchanged
\begin{equation}
 S\Omega S^T =\Omega. \label{CCRsymp}
\end{equation}
 The effect on the CM of a Gaussian state is 
\begin{equation}
 V \longrightarrow V'=SVS^T.
\end{equation}
\end{defn}
The set of these transformations forms the \textit{real symplectic group} $Sp(2N,\mathbb R)$, which is a subgroup of the \textit{real special unitary group} $SU(2N,\mathbb R)$ since we always have $\det S=1$.

\begin{thm}[Metaplectic Group]
Given a symplectic transformation $S$, we can always find a unitary operation $\hat U_S$ in the underlying Hilbert space which realizes $S$ in the phase space. In other words, for every $S\in Sp(2N,\mathbb R)$ there exists an (up to a phase unique) unitary $U_S$ such that 
\begin{equation}
 \hat U_S \hat W_\xi \hat U_S^\dag{}=\hat W_{S\xi}, \qquad \forall \xi \in \mathbb R^{2N}.
\end{equation}
\end{thm}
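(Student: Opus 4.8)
The plan is to derive the statement from the Stone--von Neumann uniqueness theorem. The key idea is that composing the Weyl system with the linear map $\xi\mapsto S\xi$ yields a second representation of the \emph{same} Weyl relations (\ref{Wrel}); uniqueness of the irreducible representation then hands us the intertwining unitary.

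First I would record an algebraic consequence of (\ref{CCRsymp}). The condition there is written $S\Omega S^T=\Omega$, but since the symplectic group is closed under transposition (this relation is equivalent, after taking inverses and using $\Omega^{-1}=-\Omega$, to $S^T\Omega S=\Omega$), I may freely use $S^T\Omega S=\Omega$. Next, define the deformed family $\hat W'_\xi:=\hat W_{S\xi}$ and check that it again satisfies (\ref{Wrel}). Using (\ref{Wrel}) for the original operators together with $S^T\Omega S=\Omega$,
\begin{equation}
\hat W'_{\xi_1}\hat W'_{\xi_2}=\hat W_{S\xi_1}\hat W_{S\xi_2}=e^{i(S\xi_1)^T\Omega(S\xi_2)}\hat W_{S(\xi_1+\xi_2)}=e^{i\xi_1^T\Omega\xi_2}\hat W'_{\xi_1+\xi_2}.
\end{equation}
Since $\xi\mapsto S\xi$ is a linear bijection of $\mathbb R^{2N}$, the family $\{\hat W'_\xi\}$ is again a strongly continuous, irreducible representation of the Weyl relations on the same Hilbert space $\mathcal H$.

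At this point I would invoke Stone--von Neumann: any two strongly continuous irreducible representations of the Weyl relations on a separable Hilbert space are unitarily equivalent. Applied to $\{\hat W_\xi\}$ and $\{\hat W'_\xi\}$ this produces a unitary $\hat U_S$ with $\hat U_S\hat W_\xi\hat U_S^\dag{}=\hat W'_\xi=\hat W_{S\xi}$, which is exactly the claim. Uniqueness up to a phase follows from Schur's lemma: if $\hat U_S$ and $\hat U'_S$ both intertwine, then $\hat U_S^\dag{}\hat U'_S$ commutes with every $\hat W_\xi$, hence by irreducibility is a scalar, and being unitary that scalar is a phase $e^{i\phi}$.

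The main obstacle is not the algebra, which is routine, but verifying the hypotheses that make Stone--von Neumann applicable, namely strong continuity of $\xi\mapsto\hat W_\xi$ and irreducibility of the Weyl system. An alternative, more hands-on route bypasses the abstract theorem: write $S=e^{\Omega G_1}\cdots e^{\Omega G_k}$ as a product of exponentials of symmetric matrices $G_j$ (possible since $Sp(2N,\mathbb R)$ is connected and the $\Omega G_j$ span its Lie algebra), realize each factor by $\hat U_j=\exp(-\tfrac{i}{2}\hat R^T G_j\hat R)$, and verify through the Heisenberg equation (\ref{HeisEq}) that $\frac{d}{dt}\hat R=\Omega G_j\hat R$, so that conjugation by $\hat U_j$ implements the linear symplectic flow $e^{\Omega G_j}$ on $\hat R$. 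The subtlety here is that the exponential map on $Sp(2N,\mathbb R)$ is not surjective, so the product decomposition is genuinely needed rather than a single exponential.
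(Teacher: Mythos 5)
The paper never proves this theorem: it is stated as background (standard structure theory of Gaussian states), and the text moves straight on to the remark about the two-fold covering, so there is no in-paper argument to compare yours against. Judged on its own merits, your proof is the standard and correct one. The algebra is right: $S\Omega S^T=\Omega$ does imply $S^T\Omega S=\Omega$, and with that the relabelled family $\hat W'_\xi=\hat W_{S\xi}$ satisfies the same Weyl relations (\ref{Wrel}). Invoking Stone--von Neumann then yields the intertwiner, and Schur's lemma gives uniqueness up to a phase. One simplification you could note: since $\xi\mapsto S\xi$ is a bijection of $\mathbb R^{2N}$, the family $\{\hat W'_\xi\}$ is \emph{the same set of operators} as $\{\hat W_\xi\}$, merely relabelled, so irreducibility of the deformed system is automatic once the original Weyl system is irreducible; likewise strong continuity of $\xi\mapsto\hat W_{S\xi}$ follows from that of $\xi\mapsto\hat W_\xi$ composed with the continuous linear map $S$. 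So the only genuinely external input is the Stone--von Neumann theorem itself (applied to the standard, strongly continuous irreducible Weyl system), which is exactly the hypothesis you flagged.

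Two small cautions. First, in your constructive alternative, watch the direction of the action: if $\hat U\hat R\hat U^\dag{}=M\hat R$, then on Weyl operators one gets $\hat U\hat W_\xi\hat U^\dag{}=\hat W_{M^{-1}\xi}$ (using $\Omega M^T\Omega=-M^{-1}$ for symplectic $M$), so to realize $\hat W_{S\xi}$ you must implement $S^{-1}$ on $\hat R$, i.e.\ decompose $S^{-1}$ rather than $S$ into exponentials; this is a trivially fixable convention issue but would be wrong as literally written. Second, your quadratic-generator argument shows $\exp(-\tfrac{i}{2}\hat R^TG\hat R)$ is well defined and implements the flow only after one checks essential self-adjointness of the quadratic Hamiltonian, another analytic point worth at least naming. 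Neither caveat undermines the proof; the Stone--von Neumann route as you present it is sound and is precisely the argument the paper implicitly relies on.
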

The set of all these unitaries $\hat U_S$ forms the \textit{metaplectic group} $Mp(2N)$, which is a two-fold covering of $Sp(2N,\mathbb R)$. In fact, given a symplectic matrix $S$ we always have a $\pm$ ambiguity in the choice of $\hat U_S$.

\begin{thm}[Euler Decomposition]
Every symplectic transformation $S\in Sp(2N,\mathbb R)$ can be decomposed in the following way
\begin{equation}
S=R_2\left(\bigoplus_{k=1}^N D_k\right)  R_1,
\end{equation}
where $D_k$ are $2 \times 2$ symplectic diagonal matrices $D_k=\textrm{diag}[r_k,r_k^{-1}]$, while $R_{1,2} \in Sp(2N,\mathbb R) \cap SO(2N,\mathbb R)$ are symplectic rotation matrices.
\end{thm}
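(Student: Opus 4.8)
The plan is to obtain the decomposition from the polar decomposition of $S$ and then to exploit the interplay between the orthogonal and symplectic structures. First I would write the polar decomposition $S = OP$, with $P = (S^T S)^{1/2}$ symmetric positive definite and $O = SP^{-1}$ orthogonal. The whole point is to show that both factors can be taken inside $Sp(2N,\mathbb{R})$, and that $P$ can subsequently be brought to the block-diagonal normal form $\bigoplus_k D_k$ by a symplectic rotation, at which stage $R_1$ and $R_2$ read off directly.

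The first key step is to show that $M := S^T S$ is itself symplectic. Using the defining relation $S\Omega S^T = \Omega$ together with its equivalent form $S^T\Omega S = \Omega$ (the two are equivalent because $\det S = 1$ and $\Omega^{-1} = -\Omega$), one computes $M\Omega M = S^T(S\Omega S^T)S = S^T\Omega S = \Omega$, so the symmetric matrix $M$ satisfies the symplectic condition. Writing this condition in conjugation form, $\Omega M\Omega^{-1} = M^{-1}$, and using the symmetric logarithm $M = \exp L$, injectivity of the exponential on symmetric matrices forces $\Omega L\Omega^{-1} = -L$; hence the positive square root $P = \exp(L/2)$ satisfies $\Omega P\Omega^{-1} = P^{-1}$ and is symplectic. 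Then $O = SP^{-1}$ is a product of symplectic matrices, hence symplectic, and a short check gives $O^TO = P^{-1}MP^{-1} = I$, so $O$ is orthogonal; since $\det P = \sqrt{\det M} = 1 = \det S$ we get $\det O = 1$, placing $O$ in $Sp(2N,\mathbb{R})\cap SO(2N,\mathbb{R})$.

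Next I would diagonalize the symmetric symplectic $P$ by an orthogonal symplectic change of basis. From $\Omega P\Omega^{-1} = P^{-1}$ it follows that the spectrum of $P$ is invariant under $r\mapsto 1/r$, and that if $Pv = rv$ then $P(\Omega v) = (1/r)(\Omega v)$; thus eigenvectors pair up under $\Omega$. For each such pair the vectors $v$ and $\Omega v$ are orthogonal ($v^T\Omega v = 0$) and of equal norm, and they span a nondegenerate symplectic plane ($v^T\Omega(\Omega v) = -|v|^2 \neq 0$). Collecting one orthonormal symplectic pair per eigenvalue pair yields an orthogonal symplectic $Q$ with $QPQ^T = \bigoplus_k D_k$, $D_k = \mathrm{diag}[r_k, r_k^{-1}]$; on the $r=1$ eigenspace, which is $\Omega$-invariant, a symplectic Gram--Schmidt supplies the remaining trivial blocks. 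Setting $R_2 = OQ^T$ and $R_1 = Q$ (both orthogonal symplectic, hence in $Sp(2N,\mathbb{R})\cap SO(2N,\mathbb{R})$) gives $S = R_2\big(\bigoplus_k D_k\big)R_1$.

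The main obstacle is the passage from ``an eigenbasis exists'' to ``a symplectic orthogonal eigenbasis exists'': one must verify that the pairing $v\mapsto\Omega v$ really produces an orthonormal basis that is simultaneously symplectic, with the $\Omega$-pairs ordered to match the standard conjugate pairs, and one must treat the degenerate $r=1$ subspace separately, since there the pairing does not by itself single out a symplectic frame. The subsidiary technical point is the claim that the positive square root of a symplectic positive definite matrix is again symplectic, which is precisely where the symmetric-logarithm argument is needed.
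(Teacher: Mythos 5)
The paper states the Euler (Bloch--Messiah) decomposition \emph{without proof}: it is quoted as a known structural fact about $Sp(2N,\mathbb R)$, used only to interpret the factors $R_{1,2}$ as passive optical elements and the blocks $D_k$ as single-mode squeezers. So there is no proof in the paper to compare yours against; your proposal has to stand on its own, and it does. The route you take --- polar decomposition $S=OP$, the observation that $M=S^TS$ is symplectic, the symmetric-logarithm argument showing that the positive square root $P=M^{1/2}$ is again symplectic, and the simultaneous orthogonal-symplectic diagonalization of $P$ via the eigenvector pairing $v\mapsto \Omega v$ --- is the standard proof of this theorem, and each step is sound. Two points deserve tightening. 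First, your parenthetical justification that $S\Omega S^T=\Omega$ and $S^T\Omega S=\Omega$ are equivalent ``because $\det S=1$ and $\Omega^{-1}=-\Omega$'' is not the actual reason: the clean argument is that $S\Omega S^T=\Omega$ gives $S^T=\Omega^{-1}S^{-1}\Omega$, whence $S^T\Omega S=\Omega^{-1}S^{-1}\Omega\,\Omega\,S=-\Omega^{-1}=\Omega$. Second, the construction of the orthonormal symplectic eigenbasis of $P$ needs exactly the care you flag at the end: choose orthonormal bases of the eigenspaces with eigenvalue $r_k>1$, define each conjugate partner as $w_k=-\Omega v_k$ (the sign fixed so that $v_k^T\Omega w_k=+1$ in the paper's convention for $\Omega$), check that the cross terms $v_j^T\Omega v_k$ vanish because $\Omega v_k$ lies in the eigenspace of $1/r_k<1$ and is therefore Euclidean-orthogonal to eigenvectors with eigenvalue $>1$, and handle the $\Omega$-invariant eigenvalue-one subspace by the inductive symplectic Gram--Schmidt you describe (the Euclidean orthogonal complement of $\mathrm{span}\{v,\Omega v\}$ inside that subspace is again $\Omega$-invariant, so the induction closes). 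With those details written out, the proof is complete.
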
 
The unitary transformations $\hat U_{R_{1,2}}\in Mp(2N)$ associated to $R_{1,2}$, can be physically realized with \textit{passive} optical elements (beam splitters and phase plates). The matrices $D_k$ instead correspond to single mode squeezing operations, which must be realized using \textit{active} optical elements (non linear crystals, homodine measurements, \textit{etc.}).

\begin{thm}[Williamson \cite{Williamson}]
Every positive definite $2N\times2N$ matrix $V$ can be diagonalized by a symplectic transformation $S \in Sp(2N,\mathbb R)$ in the following form
\begin{equation}
SVS^T=\textrm{diag}[s_1,s_1,s_2,s_2,....s_N,s_N].\label{sympdiag}
\end{equation}
The elements $s_k$ are called the symplectic eigenvalues and from (\ref{bonafide}) we can say that $V$ is a CM of a Gaussian state iff $s_k\ge1$ for every $k$.
\end{thm}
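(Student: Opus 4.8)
The plan is to reduce the statement to the real normal form of an antisymmetric matrix. Since $V$ is positive definite it admits a positive definite symmetric square root $V^{1/2}$, so I would begin by forming $A=V^{1/2}\Omega V^{1/2}$. Because $\Omega^T=-\Omega$ while $V^{1/2}$ is symmetric, $A$ is real and antisymmetric, and it is nonsingular since $V^{1/2}$ and $\Omega$ are. The point of this move is that orthogonally diagonalizing $V$ itself would destroy the symplectic structure, whereas $A$ is an antisymmetric matrix for which the correct orthogonal normal form is available.

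Next I would invoke the real canonical form for antisymmetric matrices: there is an orthogonal $O$ such that $OAO^T$ is block diagonal with $2\times2$ blocks $\nu_k\left(\begin{smallmatrix}0&1\\-1&0\end{smallmatrix}\right)$, $\nu_k>0$. Gathering the $\nu_k$ into $D=\textrm{diag}[\nu_1,\nu_1,\dots,\nu_N,\nu_N]$, this reads $OAO^T=D^{1/2}\Omega D^{1/2}$. I would then propose the candidate $S=D^{1/2}OV^{-1/2}$. A one-line computation gives $SVS^T=D^{1/2}O(V^{-1/2}VV^{-1/2})O^TD^{1/2}=D^{1/2}(OO^T)D^{1/2}=D$, which is already the desired diagonal form, so the symplectic eigenvalues are $s_k=\nu_k$. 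These can be identified with the moduli of the purely imaginary eigenvalues of $\Omega V$, since $\Omega V=V^{-1/2}AV^{1/2}$ is similar to $A$.

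The step requiring care is verifying $S\in Sp(2N,\mathbb R)$, that is $S\Omega S^T=\Omega$. Here I would use $\Omega^{-1}=\Omega^T=-\Omega$ to turn $A=V^{1/2}\Omega V^{1/2}$ into the relation $V^{-1/2}\Omega V^{-1/2}=-A^{-1}=O^TD^{-1/2}\Omega D^{-1/2}O$, after which $S\Omega S^T=D^{1/2}O(V^{-1/2}\Omega V^{-1/2})O^TD^{1/2}=\Omega$ collapses cleanly once the orthogonal factors cancel. This identity, together with the existence of the orthogonal normal form in the previous step, is the real content of the argument, and I expect it to be the main obstacle; everything else is bookkeeping.

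Finally, for the characterization of physical states I would exploit that $S$ preserves $\Omega$. Congruence by the invertible real matrix $S$ preserves positive semidefiniteness and sends $V+i\Omega$ to $S(V+i\Omega)S^T=D+i\Omega$, so the bona fide condition (\ref{bonafide}) holds for $V$ if and only if it holds for $D$. Since $D+i\Omega$ is block diagonal with blocks $\left(\begin{smallmatrix}s_k&i\\-i&s_k\end{smallmatrix}\right)$, which are Hermitian with eigenvalues $s_k\pm1$, positivity is equivalent to $s_k\ge1$ for every $k$. By the earlier criterion this is precisely the condition for $V$ to be the covariance matrix of a physical Gaussian state.
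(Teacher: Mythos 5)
Your proposal is correct, but there is nothing in the paper to compare it against: the thesis states Williamson's theorem with a citation only (to Williamson's original article and the paper of Simon, Chaturvedi and Srinivasan) and gives no proof at all. Your argument is the standard modern proof, and every step checks out. The key identities are right: $A=V^{1/2}\Omega V^{1/2}$ is real, antisymmetric and nonsingular; the orthogonal normal form $OAO^{T}=D^{1/2}\Omega D^{1/2}$ holds with the paper's convention $\Omega=\bigoplus_k\left(\begin{smallmatrix}0&1\\-1&0\end{smallmatrix}\right)$; the candidate $S=D^{1/2}OV^{-1/2}$ satisfies $SVS^{T}=D$ trivially and $S\Omega S^{T}=\Omega$ via $V^{-1/2}\Omega V^{-1/2}=-A^{-1}=O^{T}D^{-1/2}\Omega D^{-1/2}O$, which is exactly the symplectic condition (\ref{CCRsymp}) used in the paper; and the reduction of the bona fide condition (\ref{bonafide}) to $s_k\ge 1$ via Sylvester inertia and the $2\times2$ Hermitian blocks $\left(\begin{smallmatrix}s_k&i\\-i&s_k\end{smallmatrix}\right)$ with eigenvalues $s_k\pm1$ is also correct. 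The only point worth a half-sentence of care is the normalization $\nu_k>0$ in the antisymmetric canonical form: the sign of each block can always be flipped by the orthogonal swap of the two basis vectors within that block, so this is a harmless convention, but you should say so explicitly since your identification $OAO^{T}=D^{1/2}\Omega D^{1/2}$ silently uses it. Your closing observation that the $s_k$ are the moduli of the eigenvalues of $\Omega V$ (by similarity $\Omega V=V^{-1/2}AV^{1/2}$) is a useful addition the paper never states, as it gives the practical recipe by which symplectic eigenvalues are actually computed elsewhere in the thesis, e.g. in Eq.~(\ref{nuformula}).
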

\subsection{Trace preserving Gaussian CP maps: Gaussian channels}
\begin{defn}
A \textit{trace preserving Gaussian completely positive map} (TGCP) is parametrized by a pair of $2N\times2N$ matrices $S$ and $G=G^T>0$, which satisfy the constraint $G+i \Omega-i S \Omega S^T\ge0$. The effect on the CM of a Gaussian state is
\begin{equation}
 V \longrightarrow V'=SVS^T+G. \label{TGCPformula}
\end{equation}
\end{defn}
Physical Gaussian operations like amplification/attenuation or any other one in which we make no measurements is a TGCP map. These map are also called \textit{Gaussian channels} since for example a dissipative medium in which there is an interaction with other external modes is modelled by a TGCP map.

Finally, another definition of a TGCP map comes from the \textit{dilation theorem}, which asserts that every Gaussian channel can be seen as the effect on a subsystem (partial trace) of a global symplectic map acting on a larger Hilbert space.

\subsection{Non-trace preserving Gaussian CP maps}
The most general physical operation acts on the density operator as a completely positive map $\hat \rho \rightarrow \varepsilon(\hat \rho)$. To characterize a CP map it is useful to apply the Jamiolkowski isomorphism which establishes a one to one correspondence between CP maps $\varepsilon \in \mathcal H$ and positive definite density operators in the doubled Hilbert space $\hat \rho_\varepsilon \in \mathcal H \otimes \mathcal H$ \cite{giedkemap}. 

Given the CP map we can construct a physical state $\hat \rho_\varepsilon$ starting from a two-mode infinitely squeezed state
\begin{equation}
|EPR\rangle=\lim_{r\rightarrow \infty}\cosh(r)^{-1}\sum_{n=0}^{\infty} \tanh(r)^n|n\rangle \otimes |n\rangle,
\end{equation}
and then applying $\varepsilon$ to the first mode
\begin{equation}
\hat \rho_{\varepsilon}=(\varepsilon \otimes I) |EPR\rangle.
\end{equation}
The inversion of such correspondence is also possible. In fact, given a state $\rho_\varepsilon$ it can be shown that, if we use it as a channel for the teleportation of a state $\hat \rho$, we can get (probabilistically) at the output $\varepsilon(\hat \rho)$.
If the CP map is Gaussian then also the state $\rho_\varepsilon$ is a Gaussian state and its CM completely characterizes the map.
\begin{defn}
A \textit{Gaussian completely positive map} (GCP) is parametrized by a $4N\times4N$ correlation matrix $\Gamma$ corresponding to a physical state
\begin{equation}
 \Gamma=
\left(
\begin{array}{cc}
\Gamma_1    &  \Gamma_{12} \\
\Gamma_{12} &  \Gamma_2    
\end{array}
\right)\ge -i \Omega \oplus \Omega.
\end{equation}
The effect of the map on the CM of a Gaussian state is
\begin{equation}
 V \longrightarrow V'= \Gamma_1 - \Gamma_{12} \Lambda \frac{1}{\Lambda \Gamma_2 \Lambda +V}\Lambda \Gamma_{12}^T,\label{GCPformula}
\end{equation}
where $\Lambda=diag[1,-1,1,-1,\dots]$.
\end{defn}
\end{chapter}
Any physical Gaussian operation, including also measurements and classical communication, is a GCP map and can be described by (\ref{GCPformula}).
For example given a bipartite state with CM
\begin{equation}
 V=
\left(
\begin{array}{cc}
A  &  C \\
C^T & B   \\
\end{array}
\right),
\end{equation}
then the projection of the second mode $B$ into the vacuum $|0\rangle\langle0|$ is associated by the Jamiolkowski isomorphism to the physical state of CM\footnote{We assume to loose the measured mode, therefore the dimension of $\Gamma$ is not $(4+4)\times(4+4)$, but $(2+4)\times(2+4)$.}
\begin{eqnarray}
\Gamma_1&=&\left( \begin{array}{cc}  \cosh r    & 0 \\
                                     0   &   \cosh r\\
\end{array}\right),\\
\Gamma_{1,2}&=&\left( \begin{array}{cccc}  \sinh r    & 0&0&0 \\
                                     0   &   -\sinh r&0&0\\
\end{array}\right),\\
 \Gamma_2&=&
\left( \begin{array}{cccc}  \cosh r    &      0        &0&0 \\
                               0        &    \cosh r   &0&0 \\
0&0&1&0\\ 
0&0&0&1\\
\end{array}\right),
\end{eqnarray}
where $r \rightarrow \infty$. From (\ref{GCPformula}), it can be shown \cite{giedkemap} that the action of the map on $V$ is
\begin{equation}
 V \longrightarrow V'= A - C \frac{1}{B +I}C^T.
\end{equation}

Finally, we remark that the formula given in (\ref{TGCPformula}), defining a TGCP map, can be obtained as a particular limit of (\ref{GCPformula}) \cite{giedkemap}.

\begin{chapter}{Entanglement}

Erwin Schr\"odinger, in his famous publication about the \textit{Schr\"odinger's cat paradox} of 1935, firstly introduced the word \textit{entanglement} to define a quantum correlation that can exists between two or more systems, so that the state of a system is affected by the state of the others also if these systems are spatially separated.
Entanglement is a phenomenon which is expected form the postulates of quantum mechanics and it is based on the quantum superposition principle. In classical mechanics, superposition is not expected and therefore \textit{entanglement} cannot exist, for this reason it seems so strange and paradoxical to our daily vision of the world.

\section{Historical and mathematical points of view}

The problem of entanglement, also if this word did not exist yet, was pointed out by Einstein, Podolsky and Rosen through the \textit{EPR paradox} \cite{EPR}, which seemed to bring into question the foundations of quantum mechanics.

Let us take a look to it. Einstein \textit{et al.} considered two particles such that $x_1-x_2=0$ and $p_1+p_2=0$.
This is not forbidden by the uncertainty principle, since $\hat x_1-\hat x_2 \equiv \hat X_-$ commutes with $\hat p_1+\hat p_2 \equiv \hat P_+$. So, it is possible to realize a physical state, the \textit{EPR pair}, such that the previous operators are equal to zero with infinite precision: $\Delta \hat X_-= \Delta \hat P_+=0$.

If we postulate the hypothesis of \textit{locality}, according to which a measure made on a particle cannot affect the state of another space-like separated particle, then the paradox arises. In fact, by measuring $x_1$ and $p_2$, it seems possible to know with arbitrary precision also $x_2$ and $p_1$, but this would violate the Heisenberg principle. The paradox can be avoided if we allow quantum mechanics to be a \textit{nonlocal} theory (as it has been actually demonstrated), so that a measurement on a particle changes the state of the other one and the previous argument cannot be used.

Even if the aim was to criticize the completeness of the quantum theory, Einstein \textit{at al.} actually centered the heart of quantum mechanics: entanglement. The principal feature of entanglement is indeed the possibility to have nonlocal effects and the EPR pair is still now the cornerstone of every CV information theory. 

We have briefly seen the historical origin of entanglement, now we give a rigorous mathematical definition of it. 

\begin{defn}
Given $N$ subsystems $1,2,\dots N$, with the associated Hilbert spaces $\mathcal H_1, \mathcal H_2, \dots \mathcal H_N$, according to the postulates of quantum mechanics, the whole system is an element of the tensor product of the $N$ Hilbert spaces. A state $\hat \rho \in \mathcal H_1 \otimes \mathcal H_2, \dots \otimes\mathcal H_N$ is \textit{separable} if it can be expressed as a convex sum of operators like $\hat \rho^{(1)}\otimes \hat\rho^{(2)}, \dots \otimes \hat\rho^{(N)}$, where $\hat\rho^{(i)} \in \mathcal H_i$ . That is
\begin{equation}
\hat \rho=\sum_k p_k \hat \rho_k^{(1)}\otimes \hat\rho_k^{(2)}, \dots \otimes \hat\rho_k^{(N)}, \qquad p_k \ge 0, \qquad \sum_kp_k=1 \label{entdef}.
\end{equation}
Otherwise the state is said to be \textit{entangled}.
\end{defn}
We observe that, given a composite state, different but equivalent decompositions in terms of $\rho^{(i)} \in \mathcal H_i$  exist. For this reason, if a state is not written explicitly in the form (\ref{entdef}), we cannot be sure that it is entangled, in fact there could be a different decomposition showing that the state is separable.

For Gaussian states we can give an analogous definition which depends on correlation matrices instead of density operators.

\begin{defn}
A Gaussian state $\hat \rho \in \mathcal H_1 \otimes \mathcal H_2, \dots \otimes\mathcal H_N$ whose CM is $V$ is \textit{separable} if a set of Gaussian states $\hat \rho^{(i)} \in \mathcal H_i$ with respective CM denoted by $V^{i}$ exists, such that
\begin{equation}
V \ge V^{(1)} \oplus V^{(2)}\dots \oplus V^{(N)}\label{entdefgau}.
\end{equation}
Otherwise the state is \textit{entangled}.
\end{defn}

\section{Separability criteria}
The previous formal definitions (\ref{entdef},\ref{entdefgau}) are very simple but they are often useless from a practical point of view. In fact, we usually need an operational criterion to distinguish separable from entangled states. In the following subsections we give some of the most relevant separability criteria known in literature. 

In the next chapters we will consider almost always $1 \times 1$ modes bipartite state, therefore the descriptions of the following criteria will be given for bipartite states. However, these descriptions could be easily extended to $N$ parties states without introducing any conceptual difference.

\subsection{Peres-Horodecki criterion}

If $\hat \rho$ is a density operator of a physical state, then given an arbitrary orthonormal basis we can write the state as $\hat \rho=\sum_{ij} \rho_i^j |i\rangle \langle j|$, where the Hermitian matrix $\rho_i^j$ is the representation of $\hat \rho$ in the given basis.

We define the \textit{transposition operation} $\hat\rho \longrightarrow \hat \rho^T$ as the map which associate each density operator represented by $\rho_i^j$ to the operator represented by $\rho_j^i$.

We observe that if $\hat \rho$ is a physical state than also $\hat \rho^T$ is a density operator of a physical state, in fact it satisfies the same three conditions given in the first chapter (\ref{eq 1.5}), (\ref{eq 1.6}) and (\ref{eq 1.7}).

If we have a bipartite state written as $\hat \rho= \sum_{nm\mu\nu}\rho_{m \mu}^{n\nu} |m\rangle \otimes |\mu\rangle \langle n|\otimes \langle \nu|$, where the Latin letters refer to the system $A$ while the Greek letters refer to the system $B$,   we can define the \textit{partial transposition operation} (PT) $\hat\rho \longrightarrow \hat \rho^{T_B}$ as the map which associate each density operator represented by $\rho_{m \mu}^{n\nu} $ to the operator represented by $\rho_{m \nu}^{n\mu} $. So a partial transposition can be seen as a transposition made only with respect to the system $B$.

Even if $\hat \rho$ is a physical state, the operator $\hat\rho^{T_B}$ could be unphysical (not positive semidefinite). This is a consequence of the fact that the transposition operation is a positive but not completely positive map.

If $\hat \rho$ is a separable state, we have that the partial transposed is given by
\begin{equation}
\hat \rho^{T_B}=\sum_k p_k \hat \rho_{Ak}\otimes \hat \rho_{Bk}^T,
\end{equation}
but since the operators $\rho_{Bk}^T$ correspond to physical states, than also the state of the whole composite system $\rho$ is physical.

So, observing that the trace and the Hermiticity of an operator are invariant under partial transposition, we can give the following criterion \cite{Peres}.

\begin{thm}[Peres-Horodecki] 
If a bipartite state $\hat \rho$ is separable, then the operator $\hat \rho^{T_B}$ is positive semidefinite. Conversely if $\hat \rho^{T_B} \not\ge 0$, then the state is entangled.
\end{thm}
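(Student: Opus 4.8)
The plan is to prove the forward direction directly --- that separability implies $\hat\rho^{T_B}\ge 0$ --- and then obtain the stated entanglement criterion as its logical contrapositive. The forward implication is the substantive part; the converse is then immediate, so essentially all the work goes into showing that a separable state has a positive semidefinite partial transpose.

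First I would start from the explicit separable decomposition $\hat\rho=\sum_k p_k\,\hat\rho_{Ak}\otimes\hat\rho_{Bk}$ with $p_k\ge0$ and $\sum_k p_k=1$, and apply the partial transposition. Since the PT map acts as the identity on system $A$ and as ordinary transposition on system $B$, and since it is linear, it distributes over the convex sum to give $\hat\rho^{T_B}=\sum_k p_k\,\hat\rho_{Ak}\otimes\hat\rho_{Bk}^T$. This rewriting is purely formal and uses nothing beyond linearity and the locality of the operation.

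The key step --- and the one requiring the most care --- is to verify that each transposed factor $\hat\rho_{Bk}^T$ is itself a legitimate density operator. For a single system, transposition of a Hermitian operator coincides with complex conjugation, $\hat\rho_{Bk}^T=\hat\rho_{Bk}^*$, and complex conjugation leaves the real eigenvalues of a Hermitian matrix unchanged; hence $\hat\rho_{Bk}^T$ is again Hermitian, unit-trace and positive semidefinite, i.e.\ it satisfies exactly the three conditions (\ref{eq 1.5})--(\ref{eq 1.7}). This is precisely the place where one must recall that transposition is a \emph{positive} map on a single system: it is only complete positivity that fails, which is why the very same argument does \emph{not} protect a genuinely entangled state whose decomposition cannot be brought into product form.

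Once each $\hat\rho_{Bk}^T\ge0$ is established, the conclusion follows quickly. The tensor product of two positive semidefinite operators is positive semidefinite, so each term $\hat\rho_{Ak}\otimes\hat\rho_{Bk}^T\ge0$, and a convex combination with weights $p_k\ge0$ of positive semidefinite operators is again positive semidefinite; therefore $\hat\rho^{T_B}\ge0$. Taking the contrapositive then yields the operational statement: if $\hat\rho^{T_B}\not\ge0$, then $\hat\rho$ admits no separable decomposition and is consequently entangled. The main conceptual obstacle is not any calculation but the correct reading of the positive-but-not-completely-positive nature of transposition, which is what makes the criterion only a necessary (and in general not sufficient) condition for separability.
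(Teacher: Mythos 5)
Your proof is correct and follows essentially the same route as the paper: the paper likewise writes $\hat\rho^{T_B}=\sum_k p_k\,\hat\rho_{Ak}\otimes\hat\rho_{Bk}^T$ for a separable state, invokes the fact that transposition preserves the three defining properties of a density operator (Hermiticity, unit trace, positivity), and concludes positivity of the partial transpose, with the entanglement criterion as the contrapositive. Your version is somewhat more explicit about why each $\hat\rho_{Bk}^T$ is positive and why sums and tensor products of positive operators remain positive, but the argument is the same.
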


The condition $\hat \rho^{T_B} \not\ge 0$ is sufficient but not necessary for entanglement. In fact entangled states exist such that $\hat \rho^{T_B} \ge 0$. These are the \textit{bound entangled states} and cannot be distilled, while it can be shown that every state such that $\hat \rho^{T_B} \not\ge 0$ is distillable.

Peres-Horodecki criterion is very good for discrete variables systems, like for example $N$ qubits, while for continuous variables systems the following criteria are more suitable. 

\subsection{Simon criterion}

Simon criterion is a CV adaptation of the Peres-Horodecki criterion. The basic idea is to consider the PT operation as a time reversal of a subsystem, so that in the phase space it is equivalent to a momentum inversion of system $B$. Let us give a proof of this.
It can be shown that the Wigner function of a bipartite state can be written also in this form
\begin{equation}
 \mathcal W(\xi)=\frac{1}{\pi^2} \int \langle q-x| \hat \rho |q+x\rangle e^{-i2 x^T p} d^2x \label{wigpos}
\end{equation}
where $\xi=(q_A,p_A,q_B,p_B)^T$, $q=(q_A,q_B)^T$ and $p=(p_A,p_B)^T$. If we consider the representation of the density operator in the position basis, for the definition of the partial transposition operation, we have that $\langle q_A,q_B| \hat \rho^{T_B} |q_A',q_B'\rangle=\langle q_A,q_B'| \hat \rho |q_A',q_B\rangle$. 
Let us write the Wigner function of $\hat \rho^{T_B}$:
\begin{equation}
\mathcal W(\xi) \longrightarrow \mathcal W'(\xi)=\frac{1}{\pi^2} \int \langle q-x| \hat \rho^{T_B} |q+x\rangle e^{-i2 x^T p} d^2x \label{wigpos2}.
\end{equation}
Now, if we make the change of variables $x\longrightarrow Z x$ in (\ref{wigpos2}), where $Z=diag[1,-1]$, we obtain
\begin{equation}
\mathcal W'(\xi)= \frac{1}{\pi^2} \int \langle q-x| \hat \rho |q+x\rangle e^{-i2 x^T Z p} d^2x \label{wigpos3}.
\end{equation}
So we can conclude that the partial transposition acts as an inversion of the momentum $p_B$ and, if we define the PT matrix $\Lambda=[1,1,1,-1]$, we can write
\begin{equation}
PT:\; \mathcal W(\xi) \longrightarrow \mathcal W(\Lambda \xi).
\end{equation}

The effect on the correlation matrix of a Gaussian state is then given by a matrix multiplication
\begin{equation}
PT:\; V \longrightarrow \Lambda V \Lambda.
\end{equation}

Using the result given in (\ref{bonafide}), we can finally apply the Peres-Horodecki criterion to a CV system.

\begin{thm}[Simon] 
If a bipartite state is separable, then 
\begin{equation}
\Lambda V\Lambda+ i \Omega \ge 0.
\end{equation}
Conversely if $\Lambda V\Lambda+ i \Omega$ has some negative eigenvalues, then the state is entangled.
\end{thm}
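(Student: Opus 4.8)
The plan is to reduce the Simon criterion to the already-established Peres--Horodecki criterion together with the bona fide condition (\ref{bonafide}), exploiting the fact derived just above that partial transposition acts on the covariance matrix of a Gaussian state as $V \longrightarrow \Lambda V \Lambda$. Since the ``conversely'' half of the statement is merely the contrapositive of the first half, it suffices to prove the single implication: separability $\Rightarrow \Lambda V\Lambda + i\Omega \ge 0$.

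First I would invoke Peres--Horodecki: if the bipartite state $\hat\rho$ is separable, then the partially transposed operator $\hat\rho^{T_B}$ is positive semidefinite. Next I would check that $\hat\rho^{T_B}$ is in fact a physical \emph{Gaussian} state. Hermiticity and unit trace are preserved under partial transposition (as already noted when introducing the PT operation), positivity is exactly what Peres--Horodecki supplies, and Gaussianity follows because the phase-space action $\mathcal W(\xi) \longrightarrow \mathcal W(\Lambda\xi)$ sends a Gaussian Wigner function to another Gaussian Wigner function, replacing $V$ by $\Lambda V \Lambda$ while leaving the Gaussian form intact. Hence $\hat\rho^{T_B}$ is a genuine Gaussian state whose covariance matrix is $\Lambda V \Lambda$.

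At this point I would simply apply the necessary direction of the bona fide condition (\ref{bonafide}) to the covariance matrix of $\hat\rho^{T_B}$: because $\hat\rho^{T_B}$ is a physical Gaussian state, its covariance matrix must obey the Heisenberg constraint, that is $\Lambda V \Lambda + i\Omega \ge 0$. This establishes the first claim. The converse statement is then automatic as the contrapositive: if $\Lambda V \Lambda + i\Omega$ possesses a negative eigenvalue, the chain above cannot close, so $\hat\rho$ fails to be separable and is therefore entangled.

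The step I expect to require the most care is justifying that $\Lambda V \Lambda$ really is the covariance matrix of the physical state $\hat\rho^{T_B}$, rather than just a formal object on which (\ref{bonafide}) has no bearing. The crucial observation is that $\Lambda$ is \emph{not} symplectic ($\Lambda \Omega \Lambda \ne \Omega$), so $\Lambda V \Lambda$ need not satisfy the uncertainty relation --- and this is precisely the mechanism by which the criterion detects entanglement. I would therefore be careful to argue that whenever the partial transpose happens to be positive semidefinite it is a legitimate density operator to which the bona fide condition genuinely applies, so that a violation of $\Lambda V \Lambda + i\Omega \ge 0$ can be read unambiguously as the non-positivity of $\hat\rho^{T_B}$, and hence as the non-separability of $\hat\rho$.
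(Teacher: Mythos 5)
Your proposal is correct and follows essentially the same route as the paper, which likewise combines the Peres--Horodecki criterion, the phase-space action $V\rightarrow\Lambda V\Lambda$ of partial transposition, and the bona fide condition (\ref{bonafide}). You merely spell out explicitly the final step that the paper compresses into a single sentence, namely that positivity of $\hat\rho^{T_B}$ together with preserved Hermiticity, unit trace, and Gaussianity makes $\Lambda V\Lambda$ the covariance matrix of a genuine Gaussian state, to which the Heisenberg constraint then applies.
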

Moreover the non positive PT condition is also necessary for bipartite entanglement of Gaussian states.

\subsection{Duan \textit{et al.} criterion}

Another criterion, which is a good test for CV entanglement has been proposed by Duan, Giedke, Cirac and Zoller \cite{Duan}.
First of all we define two EPR like operators, which depend on the real parameter $a$,
\begin{equation}
\hat L_a \equiv |a|\hat x_A+\frac{1}{a}\hat x_B, \qquad \hat M_a \equiv |a|\hat p_A-\frac{1}{a}\hat p_B, \qquad a \in \mathbb R.\;
\end{equation}
\begin{thm}[Duan \textit{et al.}]
If a bipartite state is separable, then 
\begin{equation}
\forall a\in \mathbb R,\; a\ne0 \quad \Delta \hat L_a^2+\Delta \hat M_a^2 \ge a^2+\frac{1}{a^2}.\label{duannece}
\end{equation}
\end{thm}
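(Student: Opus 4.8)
The plan is to reduce the statement to the case of a single product state and then apply the single-mode uncertainty relations to the two subsystems separately. Separability enters in two essential ways: in a product state the cross correlations between $A$ and $B$ drop out, and passing from one product state to a convex mixture can only raise the variances. I would not try to bound $\Delta \hat L_a^2 + \Delta \hat M_a^2$ directly on $\hat\rho$, since the global commutator $[\hat L_a,\hat M_a]=i\left(a^2-1/a^2\right)$ only delivers the weaker, state-independent Robertson bound $\Delta\hat L_a^2+\Delta\hat M_a^2\ge|a^2-1/a^2|$; the stronger separability threshold $a^2+1/a^2$ must come from the tensor-product structure.

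First I would treat a single product state $\hat\rho=\hat\rho^{(A)}\otimes\hat\rho^{(B)}$. Expanding $\Delta\hat L_a^2=\langle\hat L_a^2\rangle-\langle\hat L_a\rangle^2$ and using $\langle\hat x_A\hat x_B\rangle=\langle\hat x_A\rangle\langle\hat x_B\rangle$ (and likewise for the momenta), the mixed $A$--$B$ terms cancel and one is left with
\begin{equation}
\Delta\hat L_a^2+\Delta\hat M_a^2=a^2\left(\Delta\hat x_A^2+\Delta\hat p_A^2\right)+\frac{1}{a^2}\left(\Delta\hat x_B^2+\Delta\hat p_B^2\right),
\end{equation}
all variances now being taken on the single-mode reduced states. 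Using the CCR $[\hat x_A,\hat p_A]=i$ with the Robertson relation $\Delta\hat x_A^2\,\Delta\hat p_A^2\ge 1/4$ and the arithmetic–geometric mean inequality gives $\Delta\hat x_A^2+\Delta\hat p_A^2\ge 1$, and the same for subsystem $B$. Substituting yields $\Delta\hat L_a^2+\Delta\hat M_a^2\ge a^2+1/a^2$ for every product state.

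The second step extends this to a general separable state $\hat\rho=\sum_k p_k\,\hat\rho_k^{(A)}\otimes\hat\rho_k^{(B)}$. The key lemma is that variance is concave under mixing: for any observable $\hat O$,
\begin{equation}
\Delta_{\hat\rho}\hat O^2\ \ge\ \sum_k p_k\,\Delta_{\hat\rho_k}\hat O^2,
\end{equation}
because the difference of the two sides equals $\sum_k p_k\langle\hat O\rangle_{\hat\rho_k}^2-\big(\sum_k p_k\langle\hat O\rangle_{\hat\rho_k}\big)^2\ge 0$ by Jensen's inequality applied to $x\mapsto x^2$. Applying this to $\hat L_a$ and to $\hat M_a$, summing, and then invoking the product-state bound on each term $\hat\rho_k^{(A)}\otimes\hat\rho_k^{(B)}$ gives $\Delta\hat L_a^2+\Delta\hat M_a^2\ge\sum_k p_k\left(a^2+1/a^2\right)=a^2+1/a^2$, as claimed.

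The main obstacle is precisely this second step: variance is \emph{not} linear in the state, so one cannot simply average the product-state inequality against the weights $p_k$. The concavity lemma is what makes the mixing harmless, and together with the decoupling of the cross terms in the product-state step it is exactly what separates the separable-state threshold $a^2+1/a^2$ from the generic bound. I would therefore spend most of the care verifying the concavity estimate and the vanishing of the $A$--$B$ cross terms, the rest being routine use of the single-mode uncertainty relation.
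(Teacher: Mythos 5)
Your proof is correct and is essentially the paper's own argument, fully fleshed out: the paper's one-line proof (``suppose $\hat\rho$ in the separable form and apply the Cauchy inequality'') is exactly your combination of cross-term cancellation and single-mode uncertainty relations on each product component $\hat\rho_k^{(A)}\otimes\hat\rho_k^{(B)}$, together with your concavity-of-variance lemma, which is the Cauchy--Schwarz/Jensen step $\sum_k p_k\langle\hat O\rangle_k^2\ge\bigl(\sum_k p_k\langle\hat O\rangle_k\bigr)^2$. The only difference is organizational --- you separate the product-state case from the mixing step, which makes the role of each hypothesis clearer but adds no new idea.
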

\begin{proof}
The inequality (\ref{duannece}) can be derived supposing $\hat\rho$ in the separable form given in (\ref{entdef}) and then applying the Cauchy inequality.
\end{proof}

Another similar but weaker necessary condition for separability has been given by Mancini, Giovannetti, Vitali and Tombesi \cite{Camerino,Camerino2}, 
\begin{thm}[Mancini \textit{et al.}]
If a bipartite state is separable, then 
\begin{equation}
\forall a\in \mathbb R,\; a\ne0 \quad \Delta \hat L_a^2 \Delta \hat M_a^2 \ge \frac{1}{4} (a^2+\frac{1}{a^2})^2.
\end{equation}
\end{thm}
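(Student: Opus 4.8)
The plan is to adapt the argument sketched for the Duan \textit{et al.} criterion, replacing the final step that controls a \emph{sum} of variances by one that controls their \emph{product}. I would start from the separable decomposition (\ref{entdef}), $\hat\rho=\sum_k p_k\,\hat\rho_{Ak}\otimes\hat\rho_{Bk}$, and exploit the concavity of the variance under classical mixing: for any observable $\hat O$ one has $\Delta\hat O^2\ge\sum_k p_k\,\mathrm{Var}_k(\hat O)$, where $\mathrm{Var}_k$ denotes the variance in the product state $\hat\rho_{Ak}\otimes\hat\rho_{Bk}$. The excess is precisely the non-negative statistical variance of the local means $\langle\hat O\rangle_k$ over the distribution $p_k$, so this inequality is immediate. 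Applying it to $\hat L_a$ and $\hat M_a$ reduces the whole problem to a pointwise estimate on each product component $k$.

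On a product state the cross correlations between $A$ and $B$ vanish, so the variances split as $\mathrm{Var}_k(\hat L_a)=a^2X_{Ak}+a^{-2}X_{Bk}=:u_k$ and $\mathrm{Var}_k(\hat M_a)=a^2P_{Ak}+a^{-2}P_{Bk}=:v_k$, where $X_{Ak}=\mathrm{Var}_k(\hat x_A)$ and analogously for the three others. Since both $\Delta\hat L_a^2$ and $\Delta\hat M_a^2$ are bounded below by the corresponding \emph{non-negative} sums $\sum_k p_k u_k$ and $\sum_k p_k v_k$, their product is bounded below by the product of these two sums. The key move is then the weighted Cauchy inequality $\big(\sum_k p_k u_k\big)\big(\sum_k p_k v_k\big)\ge\big(\sum_k p_k\sqrt{u_k v_k}\big)^2$, which converts the product of two sums into the square of a single sum.

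It then remains to bound $u_k v_k$ from below term by term. Expanding gives $a^4X_{Ak}P_{Ak}+X_{Ak}P_{Bk}+X_{Bk}P_{Ak}+a^{-4}X_{Bk}P_{Bk}$. Here I would invoke the single-mode Heisenberg relation implied by the CCR $[\hat q,\hat p]=i$, namely $X_{Ak}P_{Ak}\ge\frac14$ and $X_{Bk}P_{Bk}\ge\frac14$, to control the two extreme terms, and the arithmetic--geometric mean inequality on the cross terms, $X_{Ak}P_{Bk}+X_{Bk}P_{Ak}\ge2\sqrt{(X_{Ak}P_{Ak})(X_{Bk}P_{Bk})}\ge\frac12$. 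Collecting these yields $u_k v_k\ge\frac14\big(a^4+2+a^{-4}\big)=\frac14\big(a^2+a^{-2}\big)^2$, hence $\sqrt{u_k v_k}\ge\frac12(a^2+a^{-2})$. Substituting back and using $\sum_k p_k=1$ produces exactly $\Delta\hat L_a^2\,\Delta\hat M_a^2\ge\frac14(a^2+a^{-2})^2$.

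I expect the main obstacle to be this pointwise lower bound on $u_k v_k$: the two pure uncertainty terms are handled directly, but the mixed terms $X_{Ak}P_{Bk}$ and $X_{Bk}P_{Ak}$ couple the fluctuations of different modes and only combine correctly once one notices that their geometric mean factorises into the two single-mode uncertainty products. Keeping the bookkeeping of the powers of $a$ straight in this step, and checking that the Cauchy reduction is legitimate (all $u_k,v_k\ge0$), are the points that require care; the remaining manipulations are routine.
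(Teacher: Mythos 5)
Your proof is correct: every step checks out --- the concavity of the variance under classical mixing (with the excess being the variance of the local means), the vanishing of cross-covariances on product states, the weighted Cauchy--Schwarz reduction of the product of sums to the square of a single sum (legitimate since all $u_k,v_k\ge 0$), and the pointwise bound $u_k v_k\ge\tfrac14\left(a^2+a^{-2}\right)^2$ obtained from the single-mode Heisenberg relations together with the AM--GM estimate on the cross terms. The paper itself states this theorem without proof (it only cites the original references), but your argument follows exactly the route the paper sketches for the companion Duan \emph{et al.} criterion --- separable decomposition plus Cauchy-type inequalities and the uncertainty principle --- so this is the intended approach, carried out in full.
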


If we consider bipartite Gaussian states, the Duan \textit{et al.} criterion can rearranged in order to give a necessary but also sufficient condition for separability.
In fact they shown that the CM of a bipartite state can be transformed by local symplectic operations to what they called \textit{standard form II}, 
\begin{equation}
V_{II}=
\left(
\begin{array}{cccc}
n_1	 & 	&-d_1	  &	\\
 	 &n_2 	& 	  &d_2	\\
-d_1 	 & 	&m_1& 		\\
	 &d_2	& 	  &m_2   
\end{array}
\right),\label{VDuan}
\end{equation}
where all the coefficients are positive and satisfy the following constraint
\begin{equation}
 (m_1-1)/(n_1-1)=(m_2-1)/(n_2-1)=(d_1-d_2)^2/(n_1-n_2)^2=\lambda^2, \quad \lambda \in \mathbb R.\label{duanlambda}
\end{equation}

\begin{thm}[Duan \textit{et al.} - Gaussian states]
A bipartite Gaussian state is separable if, and only if, when it is transformed to the standard form II we have
\begin{equation}
\Delta \hat L_\lambda^2+\Delta \hat M_\lambda^2 \ge \lambda^2+\frac{1}{\lambda^2},
\end{equation}
where $\lambda$ is the parameter defined in (\ref{duanlambda}).
\end{thm}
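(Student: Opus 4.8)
The statement is an ``if and only if'', so I would treat the two implications separately, and I expect essentially all the difficulty to sit in the ``if'' direction. For the necessity (``only if'') there is almost nothing to do beyond invoking the general Duan \textit{et al.} criterion already established above: if the state is separable then $\Delta\hat L_a^2+\Delta\hat M_a^2\ge a^2+1/a^2$ holds for \emph{every} real $a\ne0$, and in particular for the distinguished value $a=\lambda$ fixed by (\ref{duanlambda}). So the first half is immediate and the whole content of the theorem is that, \emph{for the standard form II}, this single value of $a$ already suffices for the converse.

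For the sufficiency (``if'') the plan is to route everything through Simon's criterion, which for $1\times1$ bipartite Gaussian states is necessary \emph{and} sufficient: the state is separable iff $\Lambda V\Lambda+i\Omega\ge0$. First I would compute the partial transpose of the standard form $V_{II}$ in (\ref{VDuan}). Since $\Lambda=\mathrm{diag}[1,1,1,-1]$ merely inverts $\hat p_B$, conjugating $V_{II}$ by $\Lambda$ leaves every entry unchanged except that it flips the sign of the $d_2$ block; hence $\tilde V:=\Lambda V_{II}\Lambda$ is again of the standard form, now with $d_2\mapsto-d_2$. The separability problem thus reduces to deciding the positivity of the single $4\times4$ Hermitian matrix $\tilde V+i\Omega$.

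The core of the argument is then to show that this $4\times4$ positivity collapses to one scalar inequality, namely the stated bound at $a=\lambda$. Concretely, I would express $\tilde V+i\Omega\ge0$ through the two symplectic invariants of $\tilde V$, its determinant $\det\tilde V=\det V_{II}$ and its seralian $\tilde\Delta=n_1n_2+m_1m_2+2d_1d_2$, so that positivity is equivalent to the smallest symplectic eigenvalue being $\ge1$, i.e. to $\tilde\Delta\le1+\det\tilde V$ (the auxiliary requirement $\det\tilde V\ge1$ being automatic for a physical CM by (\ref{bonafide})). At this point the three equalities in (\ref{duanlambda}) are what make everything work: substituting $m_i=\lambda^2(n_i-1)+1$ and $(d_1-d_2)^2=\lambda^2(n_1-n_2)^2$ into the invariants, and likewise into the variance sum $\Delta\hat L_\lambda^2+\Delta\hat M_\lambda^2=\tfrac12[\lambda^2(n_1+n_2)+\lambda^{-2}(m_1+m_2)-2(d_1+d_2)]$ read off directly from (\ref{VDuan}), I expect the condition $\tilde\Delta\le1+\det\tilde V$ to reduce, after the cancellations these constraints force, exactly to $\Delta\hat L_\lambda^2+\Delta\hat M_\lambda^2\ge\lambda^2+\lambda^{-2}$.

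I expect this last reduction to be the main obstacle: it is a genuine algebraic computation, and it is the step where the special structure of the standard form is indispensable. The constraints (\ref{duanlambda}) are not cosmetic, they are precisely what singles out $\lambda$ as the value of $a$ at which the partial-transpose positivity is saturated, so that a bound holding at this one value already forces $\tilde V+i\Omega\ge0$ and hence, by Simon, separability. (Alternatively, and closer to the original proof, one could bypass Simon and prove sufficiency constructively, exhibiting from a satisfied inequality an explicit Gaussian product-state mixture, equivalently local covariance matrices $V^{(A)},V^{(B)}$ with $V_{II}\ge V^{(A)}\oplus V^{(B)}$ as in (\ref{entdefgau}); but pinning down the correct local CMs is itself a small optimisation, so I would prefer the Simon route as the cleaner bookkeeping.)
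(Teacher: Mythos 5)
A preliminary remark: the paper itself gives no proof of this theorem --- it is quoted as a known result from Ref.~\cite{Duan} --- so your proposal has to be judged on its own terms, the natural benchmark being the original Duan \emph{et al.} argument. Your skeleton is sound and your bookkeeping is correct as far as it goes. Necessity is indeed just the general criterion (\ref{duannece}) evaluated at $a=\lambda$ (legitimate, since the local symplectic map bringing the state to form II preserves separability). On the sufficiency side, $\Lambda V_{II}\Lambda$ does flip $d_2\mapsto -d_2$, the PT seralian is $\tilde\Delta=n_1n_2+m_1m_2+2d_1d_2$, the PPT condition is $\tilde\Delta\le 1+\det V$ (the auxiliary requirement is really $\tilde\Delta\ge 2$ rather than $\det\tilde V\ge 1$, but it is automatic because $\tilde\Delta\ge 2\sqrt{\det\tilde V}=2\sqrt{\det V}\ge 2$ for a physical $V$), and your expression for $\Delta\hat L_\lambda^2+\Delta\hat M_\lambda^2$ in terms of the entries of (\ref{VDuan}) is right.

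The genuine gap is the step you yourself call ``the main obstacle'' and then leave undone: showing that, under the constraints (\ref{duanlambda}), the single inequality at $a=\lambda$ implies $\tilde\Delta\le 1+\det V$. That implication \emph{is} the sufficiency direction; everything else in your plan is framing. Crucially, you cannot justify it by observing that both the form-II condition and PPT are necessary and sufficient for separability --- the former is exactly the statement under proof, so that appeal is circular; the implication must be established by hand. And it is not a term-by-term cancellation: writing $n_i=1+u_i$, $m_i=1+\lambda^2 u_i$, $d_1-d_2=\pm\lambda(u_1-u_2)$, your variance inequality collapses to the \emph{linear} condition $2(d_1+d_2)\le(1+\lambda^2)(u_1+u_2)$, while the PPT condition is \emph{quartic} in the same data --- already in the symmetric case $u_1=u_2=u$, $d_1=d_2=d$ it reads $\bigl[(1+\lambda^2)u+\lambda^2u^2-d^2\bigr]^2\ge(1-\lambda^2)^2u^2+4d^2$ --- so deducing the latter from the former requires invoking the physicality condition $V+i\Omega\ge 0$ of the original state and a case analysis over the sign in $d_1-d_2=\pm\lambda(n_1-n_2)$. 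Until that computation is actually carried through, you have a plausible program, not a proof. For contrast, Duan \emph{et al.} bypass this algebra entirely with the constructive route you mention and set aside: from the satisfied inequality they build further local squeezings after which the CM dominates the vacuum one, so the state acquires a positive Gaussian $P$-representation, i.e.\ it is an explicit mixture of product coherent states, which yields the decomposition required by (\ref{entdefgau}) directly.
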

\section{Entanglement measures}
In the previous section we have seen several separability criteria which answer to the question: ``Is this state entangled?''. However, in the Quantum Information field, entanglement is a precious physical resource which one needs to quantify, like energy or entropy. So another question arises: ``How much is this state entangled?''.
To quantify the entanglement of a state we need an \textit{entanglement monotone} or \textit{entanglement measure}, which is defined as a function $\hat\rho \longrightarrow E(\hat\rho) \in \mathbb R^+$ which associates to each density operator a real positive number and satisfies (possibly all) the following natural and reasonable properties.
\begin{enumerate}
 \item[1]\textit{(Monotonicity).} The average entanglement of all the outcomes of a local operation and classical communication (LOCC) performed on $\hat \rho$ must be less than the entanglement of $\rho$. If ${\hat \rho_i}$ are the results of a LOCC performed on $\hat \rho$ expected with probabilities ${p_i}$, that is $\hat \rho \xrightarrow{LOCC} \hat \rho'=\sum_i p_i \hat \rho_i$, then 
\begin{equation}
E(\hat \rho) \ge \sum_i p_i E(\hat \rho_i).
\end{equation}
\item [2]\textit{(Convexity).} Entanglement can not be increased in average by mixing other entangled states, i.e.
\begin{equation}
E(\sum_ip_i \hat \rho_i) \le \sum_i p_i E(\hat \rho_i).
\end{equation}
\item[3]\textit{(Unitary equivalence).} Entanglement must be invariant under local unitary operations,
\begin{equation}
E(\hat U_1 \otimes \hat U_2 \dots \otimes\hat U_N \hat \rho\hat U_1^\dag \otimes \hat U_2^\dag \dots \otimes\hat U_N^\dag) = E(\hat \rho).
\end{equation}
\item[4]\textit{(Faithfulness).} $E(\hat\rho)= 0$ iff $\hat\rho$ is separable.
\item[5]\textit{(Additivity).} $E(\hat \rho_1 \otimes \hat \rho_2)=E(\hat \rho_1)+E(\hat\rho_2)$.
\end{enumerate}

If we put together the quite abstract conditions 1 and 2, we get a very concrete physical requirement for an entanglement measure:
\begin{equation}
 E(\hat \rho) \ge E(\varepsilon (\hat \rho)),
\end{equation}
where $\hat \rho \longrightarrow \varepsilon(\hat\rho)$ is a LOCC. This means that an entanglement measure can not increase under local operations and classical communications.

We are going to introduce three of the most important entanglement monotones: the \textit{entropy of entanglement}, the \textit{(log-)negativity}, and the \textit{entanglement of formation}.

\subsection{Entropy of entanglement}
\begin{defn}
The entropy of entanglement is a measure defined for pure states. Given a bipartite pure state $\hat \rho=|\psi\rangle\langle\psi|\in \mathcal H_A \otimes \mathcal H_B$, then the entropy of entanglement is given by the Von Neumann entropy of one subsystem (tracing out the other),
\begin{equation}
 E_H(\hat \rho)=H[\textrm{tr}_A (\hat \rho)]=-\textrm{tr}[\textrm{tr}_A (\hat \rho)\ln(\textrm{tr}_A (\hat \rho))].
\end{equation}
\end{defn}
This is a consequence of the general property of quantum mechanics, according to which the more a quantum state is entangled the less we know about the single subsystems, so that the amount of disorder of a single subsystem quantifies the entanglement of the global system.

Every bipartite pure state admits a Schmidt decomposition like
\begin{equation}
 |\psi\rangle=\sum_i c_i |a_i\rangle \otimes |b_i\rangle,
\end{equation}
where $\{a_i\}$ and $\{b_i\}$ are two orthonormal basis of the respective Hilbert spaces $\mathcal H_A$ and $\mathcal H_B$, and $c_i\in \mathbb R^+$ are the Schmidt coefficients satisfying $\sum_i c_i^2=1$. The squared numbers $c_i^2$ are the eigenvalues of the reduced states $\textrm{tr}_A(\hat \rho)$ and $\textrm{tr}_B(\hat \rho)$, which have the same spectrum. This implies that the entropy of entanglement is equal if computed on the subsystem $A$ or $B$ and it is a function of the Schmidt coefficients only, in fact
\begin{equation}
 E_H(\hat \rho)=H[\textrm{tr}_A (\hat \rho)]=H[\textrm{tr}_B (\hat \rho)]=-\sum_i c_i^2\ln c_i^2.
\end{equation}
The entropy of entanglement satisfies all the previous properties of an \textit{entanglement monotone} only if the initial state is pure. Moreover, it can be shown that if we require also a \textit{(weak) continuity} property, then the entropy of entanglement is the unique entanglement monotone for pure states, in the sense that any other entanglement measure is a monotonic function of $E_H(\hat\rho)$.

An operational meaning of $E_H(\hat\rho)$ is connected with the distillable entanglement. Given $N$ copies of a state $\hat\rho=|\psi\rangle\langle\psi|$, then, in the asymptotic limit $N\rightarrow \infty$, one can distill $N E_H(\hat \rho)$ copies of maximally entangled states.
\subsection{Entanglement negativity and logarithmic negativity}
\begin{defn}
 Given a bipartite state $\hat \rho \in \mathcal H_A \otimes \mathcal H_B$, then the \textit{entanglement negativity} is given by
\begin{equation}
 \mathcal N(\hat \rho)= \frac{||\hat \rho^{T_B}||_{tr}-1}{2},\label{neg}
\end{equation}
where $||\dots||_{tr}$ is the \textit{trace norm} defined as $||\hat A||_{tr}=\textrm{tr}(\sqrt{\hat A^\dag \hat A})$.
\end{defn}
The trace norm is equal to the sum of the singular values of a matrix, but in our case the density operator is Hermitian and therefore the singular values are equal to the modulus of the eigenvalues of $\hat \rho$. Using the fact that $\textrm{tr}(\hat \rho^{T_B})=1$, eq. (\ref{neg}) can be written also as
\begin{equation}
 \mathcal N(\hat \rho)= \sum_{\lambda_i<0} |\lambda_i|,
\end{equation}
where $\lambda_i$ are the eigenvalues of $\hat \rho^{T_B}$, and the sum is made only over the negative ones.
\end{chapter}

Now it is clear why $\mathcal N(\hat \rho)$ is connected with the entanglement of $\hat \rho$, in fact we have already seen that the condition $\hat \rho^{T_B}\ge 0$ is necessary for separability and so the negativity measures how much this condition is violated.
 
Another entanglement monotone, which is not convex but it is additive, is the \textit{logarithmic negativity} defined as
\begin{equation}
 E_N(\hat \rho)=\ln ||\hat \rho^{T_B} ||_{tr}.
\end{equation}
It is an upper bound to the distillation rate, but a clear operational meaning of the log-negativity is not known. However $E_N(\hat \rho)$ is frequently used in Quantum Information, because it is easily computable if we deal with qubits or CV Gaussian states.

\begin{thm} 
If $\hat \rho$ is a bipartite Gaussian state of $1 \times 1$ modes, characterized by its correlation matrix $V$, then
\begin{equation} 
 E_N(\hat \rho)=\max\{-\ln(\nu),0\}, \label{entnegformula}
\end{equation}
where $\nu$ is the minimum symplectic eigenvalues of the PT matrix $\Lambda V \Lambda$.
\end{thm}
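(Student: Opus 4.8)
The plan is to compute the trace norm $\|\hat\rho^{T_B}\|_{tr}$ directly from the symplectic spectrum of the partially transposed covariance matrix $\tilde V = \Lambda V \Lambda$, and then take its logarithm. First I would observe that, since the Wigner function transforms under partial transposition as $\mathcal W(\xi)\to\mathcal W(\Lambda\xi)$, the operator $\hat\rho^{T_B}$ is again a Gaussian operator --- Hermitian and of unit trace, though not necessarily positive --- whose covariance matrix is exactly $\tilde V$. By Williamson's theorem there is a symplectic $S$ with $S\tilde V S^T = \text{diag}[\tilde\nu_-,\tilde\nu_-,\tilde\nu_+,\tilde\nu_+]$, and by the metaplectic correspondence this $S$ is realized by a unitary $\hat U_S$ on the Hilbert space. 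Since $\hat U_S$ is unitary it leaves the trace norm invariant, so the problem reduces to the trace norm of the Gaussian operator in Williamson form, which factorizes as a tensor product $\hat\tau_-\otimes\hat\tau_+$ of two single-mode operators with isotropic covariance matrices $\text{diag}[\tilde\nu_k,\tilde\nu_k]$.

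Next I would analyze a single mode. An isotropic covariance matrix is rotationally invariant in phase space, so $\hat\tau_k$ commutes with the number operator and is diagonal in the Fock basis, with eigenvalues given by the analytic expression $p_n = \tfrac{2}{\tilde\nu_k+1}\big(\tfrac{\tilde\nu_k-1}{\tilde\nu_k+1}\big)^n$. When $\tilde\nu_k\ge1$ these are the nonnegative populations of a genuine thermal state, so $\|\hat\tau_k\|_{tr}=\text{tr}\,\hat\tau_k=1$. When $\tilde\nu_k<1$ the ratio $(\tilde\nu_k-1)/(\tilde\nu_k+1)$ is negative and the eigenvalues alternate in sign; summing the resulting geometric series of absolute values gives $\|\hat\tau_k\|_{tr}=\sum_n|p_n| = 1/\tilde\nu_k$. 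The two cases combine into $\|\hat\tau_k\|_{tr}=\max\{1,1/\tilde\nu_k\}$, whence $\|\hat\rho^{T_B}\|_{tr}=\prod_k\max\{1,1/\tilde\nu_k\}$ and therefore $E_N(\hat\rho)=\sum_k\max\{0,-\ln\tilde\nu_k\}$.

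Finally I would specialize to $1\times1$ modes, where $\tilde V$ has exactly the two symplectic eigenvalues $\tilde\nu_-\le\tilde\nu_+$. Because $(\det\Lambda)^2=1$ we have $\det\tilde V=\det V$, hence $\tilde\nu_-\tilde\nu_+=\sqrt{\det\tilde V}=\sqrt{\det V}=\nu_-\nu_+\ge1$, where $\nu_\pm\ge1$ are the symplectic eigenvalues of the physical state $V$. This forces at most one of the two partially transposed eigenvalues to lie below $1$, namely the minimum $\nu\equiv\tilde\nu_-$; the term with $\tilde\nu_+$ never contributes. The sum therefore collapses to the single term $E_N(\hat\rho)=\max\{-\ln\nu,0\}$, as claimed.

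The step I expect to be the main obstacle is the single-mode eigenvalue computation in the non-positive regime $\tilde\nu_k<1$: one must justify that the Gaussian operator remains diagonal in the Fock basis with eigenvalues given by the analytic continuation of the thermal populations even though it is no longer a state, and then handle the alternating-sign geometric series carefully so as to obtain $1/\tilde\nu_k$ rather than $1$. Everything else --- the Wigner-level action of the partial transpose, the Williamson reduction, and the unitary invariance of the trace norm under the metaplectic operator --- is structural and follows from results already established above.
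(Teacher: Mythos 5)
Your proposal is correct and follows essentially the same route as the paper's own proof: Williamson-diagonalize the partially transposed CM $\Lambda V \Lambda$, factor the PT operator into single-mode thermal-like operators with mean excitation $n_i=(s_i-1)/2$ whose Fock-basis populations form an alternating geometric series summing to $s_i^{-1}$ when $s_i<1$, and invoke $\det(\Lambda V \Lambda)=\det V$ to conclude that only the minimum PT symplectic eigenvalue can fall below one. The only difference is one of care, not of substance: you make explicit (and rightly flag as the delicate point) what the paper asserts without comment, namely the metaplectic-unitary invariance of the trace norm and the Fock-diagonality of the analytically continued thermal operator in the non-positive regime.
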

\begin{proof}
The matrix $\Lambda V \Lambda$ is positive definite and therefore, by the Williamson theorem, it can be transformed with unitary operations to the diagonal form $diag(s_1,s_1,s_2,s_2)$ given in (\ref{sympdiag}). The corresponding density operator is a tensor product of two thermal-like states $\hat \rho=\hat \rho_{n_1} \otimes \hat \rho_{n_2}$ with mean excitation numbers $n_i=(s_i-1)/2$. Since $\Lambda V \Lambda \ge 0$ then $n_i\ge-1/2$, moreover if $n_i\ge0$ the state is a physical thermal state, therefore the only possibility to have negative eigenvalues is when $-1/2\le n_i \le 0$. The Fock basis expansion of a thermal-like state is 
\begin{equation}
\hat \rho_{n_i}=\sum_{k} \frac{n_i^k}{(n_i+1)^{k+1}}|k\rangle \langle k |
\end{equation}
and if we suppose $-1/2\le n_i \le 0$, then the trace norm is by definition $||\hat \rho_{n_i} ||_{tr}=\sum_{k} |n_i|^k/(1-|n_i|)^{k+1}=(1+2 n_i)^{-1}=s_i^{-1}$. 
The conservation of the determinant $\det V=\det (\Lambda V \Lambda)$ imposes that only one PT symplectic eigenvalue $\nu$ can be less then one, so that $E_N(\hat \rho)=\max\{-\ln(\nu),0\}$.
\end{proof}

The log-negativity like any other entanglement monotone is invariant under local unitary operations. This means that the minimum PT symplectic eigenvalue $\nu$ of a bipartite Gaussian state depends only on the symplectic invariants of the matrix $V$. In fact, if we write $V$ in a block form like
\begin{equation}
 V=
\left( \begin{array}{ll} A    &  C   \\
                         C^T &   B       
\end{array}\right) ,
\end{equation}
it can be shown that 
\begin{equation}
\nu=\sqrt{\frac{\Sigma-\sqrt{\Sigma^2-4\det V}}{2}} \label{nuformula}
\end{equation}
where $\Sigma=\det A + \det B -2 \det C$.
\subsection{Entanglement of formation}
\begin{defn}
Given a quantum state $\hat \rho$, the \textit{ entanglement of formation} is defined as
\begin{equation}
 E_F(\hat \rho)=\min \sum_{k}p_k E_H(|\psi_k\rangle \langle \psi_k |)
\end{equation}
with the constraint 
\begin{equation}
 \hat \rho=\sum_{k}p_k |\psi_k\rangle \langle \psi_k |.\label{rhodec}
\end{equation}
\end{defn}
The operational meaning of $E_F(\hat \rho)$ is the minimum averaged entropy of entanglement of pure states, which are required to create the state $\hat \rho$.

Since the decomposition (\ref{rhodec}) is not unique, the entanglement of formation is difficult to calculate. A simple result is known for symmetric bipartite Gaussian states (with $\det A=\det B$), for which we have that
\begin{equation}
 E_F(\hat \rho)=E_H(|\psi_r \rangle \langle \psi_r |)
\end{equation}
where $|\psi_r \rangle$ is a two modes squeezed state with the same entanglement negativity of $\hat \rho$.
That is $|\psi_r\rangle=\sum_n \tanh^n (r)/\cosh (r) |n\rangle|n\rangle$, with $r=E_N(\hat \rho)/2$.

A general result valid also for non-symmetric Gaussian states has been recently proposed \cite{simon-eof}.

\begin{chapter}{CV Teleportation}\label{cvtele}
\textit{Quantum teleportation} is the transfer of an unknown quantum state form a sender (Alice) to a receiver (Bob) by means of a shared bipartite entangled state and appropriate classical communication.
If the shared entanglement is infinite than Bob recovers an exact copy of the state sent by Alice.

The first quantum teleportation protocol was given by Bennett \textit{et al.} in 1993 \cite{TeleBennett} for discrete variables systems. In this chapter we will consider only the continuous variable version, firstly proposed by Vaidman (1994) \cite{Vaidman} and Braunstein \textit{et al.} (1998) \cite{BraunsteinK}.

The word ``teleportation'' could be misleading, since there is not a transfer of matter, energy or information, but only the transfer of a quantum state.
For this reason, quantum teleportation is not in contradiction with the \textit{relativistic principle} according to which a signal can not travel with a velocity greater than the speed of light and it is consistent with the \textit{no-cloning theorem}  \cite{NoCloning}, which forbids to create a duplicate of a quantum state.

From the point of view of the foundations of quantum mechanics, teleportation is not surprising in itself since it is simply an application of the more general exceptional phenomenon which Einstein called \textit{spooky action at a distance}. In fact we have already seen (EPR/Schr\"odinger paradoxes) that, if we act on a single part (Alice) of an entangled state, then we can change the state of the other (Bob). Now, if Alice and Bob, using only LOCC, make a proper distant action which project Bob state into Alice input state, then this is called quantum teleportation.

In this chapter we first introduce the Braunstein-Kimble protocol using an infinitely squeezed EPR state and successively we give compact analytical results for the fidelity (success of the teleportation), in the situation when the shared bipartite state is a general Gaussian state which is not infinitely entangled.

\section{Braunstein-Kimble protocol with infinite entanglement}
We give a schematic description of the Braunstein-Kimble teleportation protocol in the Heisenberg picture.
\begin{enumerate}
\item \textit{Initial conditions}

Alice and Bob share the two modes of an infinitely squeezed EPR pair, such that
\begin{equation}
\hat x_b+\hat x_a=\hat p_b-\hat p_a=0 \label{teleEPR}.
\end{equation}
Alice has the input state $\hat\rho_{in}$ described by the quadratures $\hat x_{in}$ e $\hat p_{in}$, that she wants to teleport.

\item \textit{Beam splitter}

Alice mixes her mode of the EPR pair and the input state, through a symmetric beam splitter. 
The effect on the annihilation operators of the two modes is the following unitary transformation

\begin{equation}
 \left(\begin{array}{c} \hat a_+ \\ \hat a_-  \end{array}\right) = \frac{1}{\sqrt 2}
 \left(\begin{array}{rr} 1  &   1  \\
                    1  &  -1     \end{array}\right)
 \left(\begin{array}{c} \hat a_{in} \\ \hat a_a  \end{array}\right),
\end{equation}

so the new quadratures become
\begin{equation}
\hat x_{\pm}=\frac{\hat x_{in} \pm \hat x_a}{\sqrt 2},\qquad \hat p_{\pm}=\frac{\hat p_{in} \pm \hat p_{a}}{\sqrt 2}.
\end{equation}

\item \textit{Bell measurement}

Alice makes two homodyne measurements of the quadratures $\hat x_+$ and $\hat p_-$.
If we call the results of the measured quadratures with two real numbers $x_+$ and $p-$, then the system after the measurement is such that
\begin{equation}
\hat x_a=-\hat x_{in}+\sqrt 2x_+, \qquad \hat p_a=+\hat p_{in} - \sqrt 2p_-.
\end{equation}

Due to the entanglement correlations expressed in (\ref{teleEPR}), the measurements made by Alice cause the following instantaneous collapse of the Bob mode
\begin{equation}
\hat x_b=\hat x_{in}-\sqrt 2x_+, \qquad \hat p_b=\hat p_{in} - \sqrt 2p_-.
\end{equation}

\item \textit{Classical communication}
Alice sends to Bob, with a classical channel, the results of her measurements:  $x_+$ and $p_-$.
Without this classical communication Bob has a completely indefinite knowledge of the state of his mode.

\item \textit{Conditional displacement}

Bob, according to the values $x_+$ and $p_-$ given by Alice, makes on his mode a correction displacement:
\begin{equation}
\begin{array}{l} \hat x_b \longrightarrow  \hat x_{out}  = \hat x_b -  \sqrt 2x_+\;, \\
              \hat p_b \longrightarrow  \hat p_{out}\: = \hat p_b + \sqrt 2p_-\;.
\end{array}
\end{equation}

This completes the teleportation, since now the output mode is described by the same quadratures of the input state
\begin{equation}
 \begin{array}{l}   \hat x_{out}  = \hat x_{in}\;, \\
                 \hat p_{out} = \,\hat p_{in}\;. 
\end{array}
\end{equation}
\end{enumerate}
The protocol is very simple, in fact a great advantage of using CV variables rather than discrete variables is that the teleportation can be realized using only two simple optical elements: a beam splitter and a homodyne detector.
\section{Real teleportation with Gaussian states}

If the shared entanglement is not infinite, then the output state of the teleportation will be not completely equal to the input. 

\subsection{BK protocol in the Heisenberg picture}

Now we are going to see the relation between the input and the output state of the teleportation when the shared state is not an ideal EPR pair, but a Gaussian state $\hat \rho$ with zero mean displacement. We suppose the input state $\hat \rho_{in}$ to be Gaussian and so, since the operations of the protocol are GCP maps, also the output state $\hat \rho_{out}$ will be Gaussian.

With the same notation that we used for Gaussian Wigner functions (\ref{Wgau}), the degrees of freedom are then the first moments $d_{in}$, the correlations $V_{in}$ of the input state $\hat \rho_{in}$ and the CM $V$ of $\hat \rho$. While, what we want to calculate are the first and the second moments of $\hat \rho_{out}$, that we call $d_{out}$ and $V_{out}$.

A general treatment of the teleportation (good also for non Gaussian states) can be given using the Wigner function approach \cite{pirrew}. However, since we suppose to deal with Gaussian states, we will calculate $d_{out}$ and $V_{out}$ in the Heisenberg picture, where it is easier to see the physical meaning underlying the equations.

Let us define three column vectors containing the quadrature operators of the three modes (input, Alice and Bob) involved in the teleportation protocol,
\begin{equation}
 \hat u_{in}=
\left( 
\begin{array}{c}
\hat q_{in}\\
\hat p_{in}
\end{array} 
\right), \quad
\hat u_{a}=
\left( 
\begin{array}{c}
\hat q_{a}\\
\hat p_{a}
\end{array} 
\right), \quad
\hat u_{b}=
\left( 
\begin{array}{c}
\hat q_{b}\\
\hat p_{b}
\end{array} 
\right). \quad
\end{equation}
We define the EPR vector that in the ideal situation is zero but in general is a pair of operators given by
\begin{equation}
\hat u_E= \hat u_b + Z \hat u_a \label{EPRvect}
\end{equation}
where $Z=diag(1,-1)$. The correlations between the elements of this new vector can be expressed in a $2 \times 2$ matrix
$N_{ij}=\langle \Delta \hat u_{Ei} \Delta \hat u_{Ej}+\Delta \hat u_{Ej} \Delta \hat u_{Ei} \rangle$ and if we write $V$ in block form
\begin{equation}
 V=
\left( \begin{array}{ll} A    &  C   \\
                         C^T &   B       
\end{array}\right) ,
\end{equation}
then, from the definition (\ref{EPRvect}), it is straightforward to prove that
\begin{equation}
 N= ZAZ+CZ+C^TZ+B. \label{Nepr}
\end{equation}
We define also an analogous vector, which contains the quadratures measured by Alice,
\begin{equation}
\hat u_m= \hat u_{in} + Z \hat u_a \label{mvect}.
\end{equation}
If we subtract (\ref{mvect}) to (\ref{EPRvect}), we have
\begin{equation}
 \hat u_b=\hat u_{in}+\hat u_E - \hat u_m \label{uinit},
\end{equation}
which is actually an identity, but it gives us a useful way of expressing the quadratures of the mode possessed by Bob in terms of the fundamental operators which are involved in the teleportation.
Up to now, we have not started the protocol yet. When Alice makes the two homodyne measurements, she actually measures the vector $\hat u_E$ with infinite precision so that it will collapse into a pair of real numbers
\begin{equation}
 \hat u_E \xrightarrow{measurement} \bar u_E \in \mathbb R^2.
\end{equation}
Then Alice, using a classical channel, sends to Bob the vector $\bar u_E$, and Bob performs the conditional displacement $\tau=\bar u_E$ to his mode, so that at the end of the protocol, (\ref{uinit}) becomes
\begin{equation}
 \hat u_b \xrightarrow{protocol} \hat u_{out}=\hat u_{in}+\hat u_E.\label{ufin}
\end{equation}
Thanks to the simplicity of the Heisenberg picture, we can see that the effect of the Braunstein-Kimble protocol is to cancel the vector $\hat u_m$ in (\ref{uinit}), while the vector $\hat u_E$ depends on the EPR correlations shared by Alice and Bob. 
In the ideal situation (\ref{teleEPR}) $\hat u_E$ is exactly 0, while in general it will introduce some additional noise that will affect the fidelity of the teleportation.

Since we supposed the bipartite state $\hat \rho$ to be centered in phase space, then $\langle \hat u_E \rangle=0$ and so, from (\ref{ufin}) we always have $d_{in}=d_{out}$. While, for what concerns the second moments, from (\ref{ufin}) we have that
\begin{equation}
 V_{out}=V_{in}+N. \label{Vout}
\end{equation}
Therefore, the difference between the input and the output state is the positive definite matrix $N$, which should be as much as possible near to zero in order to have a good teleportation.
\subsection{Teleportation fidelity}
A measure of how much a quantum state is similar to another one is the \textit{fidelity}.
\begin{defn}
Given two states $\hat \rho_1$ and $\hat \rho_2$, the \textit{fidelity} is defined as
\begin{equation}
\mathcal F(\hat\rho_1,\hat \rho_2)= \left[\textrm{tr} \sqrt{\sqrt{\hat \rho_1}\hat \rho_2 \sqrt{\hat \rho_1}}\,\right]^2, \label{fidgen}
\end{equation}
and if one operator is pure, e.g. $\rho_1=|\psi\rangle \langle \psi|$, then (\ref{fidgen}) reduces to
\begin{equation}
\mathcal F(|\psi\rangle \langle \psi|,\hat \rho_2)= \langle \psi |\hat \rho_2 |\psi \rangle \label{fidpure}.
\end{equation}
\end{defn}
The fidelity $\mathcal F(\hat\rho_1,\hat \rho_2)$ is a number between $0$ and $1$ and can be seen as the probability that measuring the system $\hat \rho_2$ it collapses to the state $\hat \rho_1$. In particular, if both states are pure, $\hat \rho_1=|\psi\rangle \langle \psi|$, $\hat \rho_2=|\phi\rangle \langle \phi|$, then the fidelity is equal to the transition probability between the two states $|\langle\psi|\phi\rangle|^2$.

A quantitative measure of the success of the teleportation is then given by the fidelity between the input and the output state and if we suppose the input state to be pure then we have
\begin{equation}
 \mathcal F=\langle \psi_{in}| \rho_{out} |\psi_{in} \rangle=\textrm{tr}(\hat \rho_{in} \hat \rho_{out}).\label{fidteltr}
\end{equation}
Using the Parseval theorem (\ref{parse}), the trace in the last part of (\ref{fidteltr}) can be written as a scalar product of characteristic functions
\begin{equation}
\mathcal F=\frac{1}{2 \pi} \int \chi_{in}^*(\eta)\chi_{out}(\eta) d\eta. \label{fidparse}
\end{equation}
The characteristic functions have the Gaussian form given in (\ref{chiga}), so that (\ref{fidparse}) is a Gaussian integral which can be solved analytically
\begin{equation}
\mathcal F=\frac{1}{2 \pi} \int e^{-\frac{1}{4} \eta^T\Omega^T (V_{in}+V_{out}) \Omega\eta} d\eta=\frac{2}{\sqrt{\det [V_{in}+V_{out}]}}.
\end{equation}
If we substitute the expression of $V_{out}$ (\ref{Vout}), we obtain a formula which is true for the general\footnote{If the state shared by Alice and Bob has zero displacement and Bob always performs the ``correct''
conditional displacement, the teleportation protocol is invariant under displacement transformation, i.e., all states with the same covariance
matrix but different coherent components are teleported with the same fidelity \cite{fiu02,pirrew}.} 
teleportation of an input pure state with CM $V_{in}$, by using a bipartite state characterized by the EPR noise matrix $N$ given in (\ref{Nepr}),
\begin{equation}
 \mathcal F=\frac{2}{\sqrt{\det[2V_{in}+N]}}.\label{fidmatrix}
\end{equation}

 If we want a measure of the quality of the teleportation in itself which is independent from the input state, we can use the \textit{fidelity of entanglement swapping} \cite{fiu02}, which is given by
\begin{equation}
\mathcal F_{swap}=\frac{2}{\sqrt{\det N}}.\label{fidswap}
\end{equation}

Finally we give an important theorem, which underlines the importance of quantum entanglement to realize a good teleportation.
\begin{thm} \label{Fmaxclass}
The maximum fidelity achievable in the teleportation of a coherent state, based only on a classical strategy (without entanglement), is $\mathcal F=1/2$, \cite{Fclass}.
\end{thm}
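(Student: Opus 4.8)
The plan is to establish the bound by matching an achievability construction with a converse. A \emph{classical strategy} is, by definition, a measure-and-prepare (entanglement-breaking) channel: Alice performs a measurement on the input, transmits the classical outcome, and Bob prepares a state conditioned on it. I take the input ensemble to be coherent states distributed uniformly over phase space, and I normalize the CM so that the vacuum has $V_{vac}=I$ with $\det V_{vac}=1$, which is precisely the normalization that makes the formula $\mathcal{F}=2/\sqrt{\det[V_{in}+V_{out}]}$ give $\mathcal{F}=1$ on two identical pure states; a coherent input then has $V_{in}=V_{vac}$.

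For achievability I would exhibit the canonical strategy: Alice heterodynes the input $|\alpha\rangle$ with the POVM $\frac{1}{\pi}|\beta\rangle\langle\beta|$, obtaining $\beta$ with density $\frac{1}{\pi}e^{-|\beta-\alpha|^{2}}$, and Bob prepares $|\beta\rangle$. The fidelity is then the convention-free Gaussian overlap
\begin{equation}
\mathcal{F}(\alpha)=\langle\alpha|\hat\rho_{out}|\alpha\rangle=\int\frac{d^{2}\beta}{\pi}\,e^{-|\beta-\alpha|^{2}}\,|\langle\alpha|\beta\rangle|^{2}=\int\frac{d^{2}\beta}{\pi}\,e^{-2|\beta-\alpha|^{2}}=\frac{1}{2},
\end{equation}
independent of $\alpha$; equivalently this strategy yields $V_{out}=3I$ and $\mathcal{F}=2/\sqrt{\det[I+3I]}=1/2$, so the bound is attained.

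For the converse I would first restrict to Gaussian strategies. A Gaussian measure-and-prepare channel is a general-dyne measurement with seed CM $\sigma_{m}$ followed by preparation of a Gaussian state with seed CM $\sigma_{p}$, displaced with gain $g$ onto the rescaled outcome. Averaging over the translation-invariant ensemble forces $g=1$, since any $g\neq1$ leaves a mean mismatch $\propto(g-1)\alpha$ that averages to zero fidelity. At unit gain the output CM is $V_{out}=V_{in}+\sigma_{m}+\sigma_{p}$, the only constraints being the bona fide conditions $\sigma_{m}+i\Omega\geq0$ and $\sigma_{p}+i\Omega\geq0$. Since $\mathcal{F}=2/\sqrt{\det[2V_{in}+\sigma_{m}+\sigma_{p}]}$, maximizing it amounts to minimizing $\det[2I+\sigma_{m}+\sigma_{p}]$; writing the pure extremal seeds as $\sigma_{m}=\mathrm{diag}(s,1/s)$ and $\sigma_{p}=\mathrm{diag}(t,1/t)$ reduces this to minimizing $(2+s+t)(2+1/s+1/t)$, whose minimum over $s,t>0$ is $16$ at $s=t=1$ (squeezing either seed, or rotating it away from the isotropic $2I$, only increases the determinant). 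Hence every Gaussian classical strategy obeys $\mathcal{F}\leq1/2$, with equality for heterodyne plus coherent preparation.

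The main obstacle is the final reduction, namely that no \emph{non-Gaussian} measure-and-prepare strategy can beat this Gaussian optimum. I would close it in one of two ways: either invoke the extremality result that for a phase-space translation-invariant input ensemble the optimal estimate-and-prepare channel may be chosen Gaussian, or argue directly by no-cloning---since the transmitted classical record can be copied to arbitrarily many receivers, any classical protocol with $\mathcal{F}>1/2$ would realize a $1\to\infty$ cloner of coherent states surpassing the optimal cloning fidelity $1/2$, a contradiction. This optimality-of-Gaussian step, rather than the explicit determinant minimization, carries the real content of the theorem; the achievability and the variational computation are essentially bookkeeping with the CM fidelity formula already established.
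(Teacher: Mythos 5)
The paper itself does not prove this statement: Theorem \ref{Fmaxclass} is imported by citation from Ref.~\cite{Fclass}, so there is no internal proof to compare against, and your attempt is strictly more ambitious than what the thesis does. Your achievability computation (heterodyne plus coherent re-preparation giving $\mathcal F=1/2$) is correct, and your Gaussian-restricted converse is also essentially sound: at unit gain $N=\sigma_m+\sigma_p$ with both seeds bona fide CMs, and the bound $\det[2I+\sigma_m+\sigma_p]\ge 16$ holds. (Your reduction to co-diagonal pure seeds, ``rotating only increases the determinant,'' is hand-waved; a clean way is $\det(2I+M)=4+2\,\mathrm{tr}\,M+\det M$ together with $\mathrm{tr}\,\sigma_{m,p}\ge 2$ and Minkowski's determinant inequality $\det(\sigma_m+\sigma_p)\ge\bigl(\sqrt{\det\sigma_m}+\sqrt{\det\sigma_p}\bigr)^2\ge 4$.)

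The genuine gap is exactly the step you flag: excluding non-Gaussian measure-and-prepare strategies. Of your two proposed closings, the first (invoke optimality of Gaussian strategies for a flat phase-space ensemble) is legitimate but is not a proof — it is a citation of precisely the hard result (rigorously established only in work such as Hammerer, Wolf, Polzik and Cirac), so your proof would then rest on a reference in the same way the thesis rests on \cite{Fclass}. The second closing, via no-cloning, is circular: the statement ``the optimal $1\to\infty$ cloning fidelity of coherent states is $1/2$'' is not an independently available input. The asymptotic cloning bound is equivalent to (and is in practice derived from) the measure-and-prepare benchmark you are trying to prove, and you cannot obtain it by extrapolating Gaussian cloning formulas like $N/(2N-1)$, because at finite $N$ non-Gaussian cloners are known to beat Gaussian ones (the optimal $1\to 2$ coherent-state cloner has fidelity $\approx 0.683 > 2/3$). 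So the no-cloning route assumes a statement at least as strong as the theorem itself; as written, the proposal proves the bound only within the Gaussian class.
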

\subsection{A simple example}

As a simple example of quantum teleportation we consider a coherent input state ($V_{in}=I$) and a shared two-mode squeezed state characterized by the CM
\begin{equation}
 V=
\left( \begin{array}{cc} I \cosh r    &  -Z \sinh r  \\
                        -Z \sinh r    &   I \cosh r   
\end{array}\right).
\end{equation}
From (\ref{Nepr}), the noise matrix is equal to $N=2 e^{-r}I$, and using (\ref{fidmatrix}) we get the simple result
\begin{equation}
 \mathcal F=\frac{1}{1+e^{-r}}.
\end{equation}
It is interesting to note that if we use (\ref{nuformula}) to calculate the minimum symplectic eigenvalue of the PT state, we get $\nu=e^{-r}$ and so, we can write
\begin{equation}
 \mathcal F=\frac{1}{1+\nu}. \label{twomodefid}
\end{equation}
This is a particular example of the relation between the fidelity and the entanglement negativity which will be studied in detail in the next chapter.

Finally we observe that the expression (\ref{twomodefid}) is consistent with (Thm. \ref{Fmaxclass}), in fact if $\mathcal F>1/2$, then $\nu<1$ and this is a sufficient condition for the entanglement of the two-mode squeezed state.

\end{chapter}
\begin{chapter}[Optimal fidelity and entanglement]{Optimal fidelity and entanglement\normalsize{ \cite{OptFid}}} \label{optimization}
\section{Introduction}

The problem afforded in this chapter is: ``Given the entangled state $\hat \rho$ shared by Alice and Bob, if we are allowed to perform local operations and classical communications, which is the maximum of the teleportation fidelity we can get?'' and ``What is its relation with the entanglement of $\hat \rho$?''.
The problem is non-trivial because the set of operations that Alice and Bob can adopt is very large. In
fact, apart from local TGCP maps, they can adopt two further options: i) use \textit{non-trace preserving} Gaussian operations in which some
ancillary mode is subject to Gaussian measurement, i.e., projected onto a Gaussian state, rather than discarded \cite{giedkemap}; ii) use local
non-Gaussian operations (either with measurement on ancillas or not), i.e., those involving interactions which are non-quadratic in the
canonical coordinates. The first class of maps, together with TGCP maps, forms the most general class of Gaussian completely positive (GCP)
operations, capable of preserving the Gaussian nature of the state shared by Alice and Bob. Non-Gaussian operations instead will transform the
initial Gaussian bipartite state of Alice and Bob into a non-Gaussian one, and they can also increase the fidelity of teleportation in some
cases \cite{nonGtelep}.
Here we restrict to input coherent states, which represent the basic resource for many quantum communication schemes and because the conventional teleportation protocol of Ref.~\cite{BraunsteinK}, while working excellently for coherent states, is less suited for teleporting nonclassical states \cite{ade-chi08}.

The improvement of the teleportation of coherent states by means of local operations and its relation with the entanglement of the shared bipartite state has been already discussed in a number of papers \cite{Bow,kim,fiu02,ade-illu05}. Ref.~\cite{Bow} showed that in some cases the
fidelity of teleportation may be improved by local squeezing transformations, while Ref.~\cite{kim} showed that in the case of a shared
asymmetric mixed entangled resource, teleportation fidelity can be improved even by a local \textit{noisy} operation. These results were
generalized in Ref.~\cite{fiu02} which showed how fidelity can be maximized over all local trace-preserving Gaussian completely positive (TGCP)
maps, i.e., those that can be performed by first adding ancillary systems in Gaussian states, then performing unitary Gaussian transformations
on the whole system, and finally discarding the ancillas. Ref.~\cite{fiu02} confirmed that the optimal local TGCP map maybe a noisy one, i.e.,
that teleportation fidelity can be increased even by decreasing the entanglement and increasing the noise of the shared entangled state.
Ref.~\cite{fiu02}, however, did not discuss the relationship between entanglement and the optimal fidelity $\mathcal{F}_{opt}$.
Ref.~\cite{ade-illu05} instead found this relationship, but only for a subclass of symmetric Gaussian entangled state shared by Alice and Bob:
for this class it is $\mathcal{F}_{opt}=\left(1+\nu\right)^{-1}$, where $\nu$ is the lowest symplectic eigenvalue of the partial transposed (PT) state (\ref{nuformula}),which is connected with the entanglement log-negativity by $E_{\mathcal{N}}=\max [0,-\ln \nu]$.

In this chapter we generalize in various directions the results of Refs.~\cite{fiu02,ade-illu05}. We show that if Alice and Bob share a bipartite
Gaussian state with a given $\nu$ and one restricts to local GCP maps which preserve such a Gaussian nature, the optimized fidelity always
satisfies
\begin{equation}\label{lowupp}
    \frac{1+\nu}{1+3\nu} \leq \mathcal{F}_{opt}\leq \frac{1}{1+\nu}.
\end{equation}
We also show that the upper bound is reached iff Alice and Bob share a symmetric entangled state. Moreover we determine the optimal local
transformations at Alice and Bob sites and the corresponding value of $\mathcal{F}_{opt}$ as a function of the symplectic invariants of the
shared CV entangled state when one restricts to local TGCP maps.

The starting point of our investigation is the formula for the fidelity that we derived in the previous chapter
\begin{equation}
\mathcal F=\frac{2}{\sqrt{\det{(2 V_{in}+N)}}},\label{fid}
\end{equation}
where
\begin{equation} \label{noise}
N=ZAZ+ZC+C^T Z+B,
\end{equation}
with $Z=\mathrm{diag}(1,-1)$. 
As we have already said, we shall restrict to the case of input coherent states, $V_{in}=I$, so that Eq.~(\ref{fid}) reduces to
\begin{equation}
\mathcal F=\frac{2}{\sqrt{4+2 \textrm{Tr} N+\det N}}\label{fidco}.
\end{equation}
The maximization of the teleportation fidelity over all possible Gaussian LOCC strategies therefore means to determine the optimal local transformation of matrices $A$, $B$ and $C$ which makes $N$ as small as
possible.

As showed in \cite{marian}, using an unbalanced beam splitter is equivalent, for the teleportation protocol, to a squeezing operation by Alice.
Therefore the optimization over all Alice and Bob local operations includes also any eventual modification of the beam splitter used for the
joint homodyne measurement.

\section{Upper and lower bounds for the fidelity of teleportation}

In this section we prove Eq.~(\ref{lowupp}), i.e., the upper and lower bounds for the fidelity of teleportation for input coherent states. An
important preliminary result enabling us to derive the two bounds is the fact that the optimal noise matrix $N$ is very simple: in fact, the
maximum teleportation fidelity is obtained when $N$ is proportional to the $2\times 2$ identity matrix $I$. More precisely, we have the
following
\begin{lem}[\textbf{Optimal noise matrix}]
If $\omega_{opt}$ is an optimal local GCP map which gives the maximum of the fidelity $\mathcal F_{max}$ for the teleportation of a coherent
state, then the resulting noise matrix is a multiple of the identity, that is $N_{opt}=2n_{opt}I$. \label{lemN}
\end{lem}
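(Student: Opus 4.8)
The plan is to translate the maximization of the fidelity into a minimization over reachable noise matrices and then show that local squeezing alone already forces the optimum to be isotropic. From (\ref{fidco}), maximizing $\mathcal F$ for a coherent input is equivalent to minimizing $g(N):=2\,\textrm{Tr}N+\det N$ over all $N$ that can be produced by a local GCP map acting on the shared state. The crucial observation is that local Gaussian \emph{unitaries} (phase rotations and single-mode squeezers) are themselves local GCP maps, so if $\omega$ is any admissible map producing $N$, then $\omega$ followed by an extra local Gaussian unitary is again admissible. Hence it suffices to exhibit, for an arbitrary reachable $N$, a further local Gaussian unitary that turns $N$ into a multiple of the identity without increasing $g$.

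First I would work out how the noise matrix (\ref{noise}) responds to such unitaries. If Alice and Bob both apply the \emph{same} single-mode squeezing $S=\mathrm{diag}(s,s^{-1})$, the blocks transform as $A\to SAS$, $B\to SBS$, $C\to SCS$; since the diagonal matrix $S$ commutes with $Z=\mathrm{diag}(1,-1)$, every term of (\ref{noise}) picks up the same factors and one obtains the clean congruence $N\to SNS$. An analogous computation, using $ZR_\theta=R_\theta^{T}Z$ for a phase rotation $R_\theta$, shows that suitably matched local rotations implement $N\to R^{T}NR$. Both operations leave $\det N$ invariant, because $\det S=\det R=1$, while the rotation fixes the eigenvalues of $N$ and the squeezing redistributes them.

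With these two transformation laws the optimization is immediate. Using a local rotation I would first bring $N$ to diagonal form $N=\mathrm{diag}(n_1,n_2)$; the squeezing then gives $SNS=\mathrm{diag}(s^{2}n_1,s^{-2}n_2)$, whose determinant stays $n_1n_2$ while its trace $s^{2}n_1+s^{-2}n_2$ is minimized at $s^{2}=\sqrt{n_2/n_1}$, where it attains the AM--GM bound $2\sqrt{n_1n_2}=2\sqrt{\det N}$ and the matrix becomes $SNS=\sqrt{\det N}\,I$. Consequently $g$ drops from $2(n_1+n_2)+n_1n_2$ to $4\sqrt{n_1n_2}+n_1n_2$, strictly so unless $n_1=n_2$. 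Therefore no $N$ with $n_1\neq n_2$ can be optimal, and the optimal noise matrix must be proportional to the identity; writing the proportionality constant as $2n_{opt}$ gives $N_{opt}=2n_{opt}I$.

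The only delicate point, and the step I would check most carefully, is the transformation law $N\to SNS$: it relies on applying the identical squeezing on both sides and on the fact that $S$ commutes with $Z$, so that the $Z$'s in (\ref{noise}) do not spoil the congruence. The rest is routine, once one notes that composing the hypothetical optimal map with these free local unitaries keeps us inside the admissible GCP class and that $\det N$ is the genuine invariant constraining the problem, leaving the trace as the only quantity to minimize --- which AM--GM sends precisely to the isotropic point.
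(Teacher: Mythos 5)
Your proof is correct and takes essentially the same route as the paper's: matched local symplectic operations (Bob applies $S_b$, Alice the companion $S_a=ZS_bZ$, which your same-squeezing and opposite-rotation pairs realize) act as a congruence $N\to S_b N S_b^T$ that preserves $\det N$, and any anisotropic $N$ can then be brought to $\sqrt{\det N}\,I$, strictly lowering $\mathrm{Tr}\,N$ and hence strictly increasing the fidelity --- contradicting optimality unless $N_{opt}\propto I$. The only difference is that you spell out explicitly (rotation to diagonalize, then squeezing with the AM--GM balance point) the congruence whose existence the paper simply asserts.
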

\begin{proof}
First of all we observe that for a $2 \times 2$, symmetric and positive semidefinite matrix like $N$, the condition $N=2n_{opt}I$ is equivalent
to $\textrm{Tr} N=2\sqrt{\det N}$. Therefore we have to show that $\omega_{opt}$ is such that $\textrm{Tr}N_{opt}=2\sqrt{\det N_{opt}}$. We do
this by reductio ad absurdum supposing that $\omega_{opt}$ gives a noise matrix with $\textrm{Tr}N_{opt}>2\sqrt{\det N_{opt}}$. However, within
the class of local GCP maps, there exists a subclass of local symplectic (i.e., unitary Gaussian) maps realized by a generic symplectic $S_{b}$
on Bob mode, and the associated symplectic map $ S_{a}=Z S_{b}Z$ on Alice mode, which act as an \textit{effective} symplectic transformation on
$N_{opt}$, $N_{opt}'=S_b N_{opt} S_b^T$ (see Eq.~(\ref{noise})). We can always choose $S_b$ such that $N_{opt}'=\sqrt{\det N_{opt}}I$, for which
$\textrm{Tr}{N_{opt}'}=2 \sqrt{\det N_{opt}}<\textrm{Tr}{N_{opt}}$ while $\det N_{opt}'=\det N_{opt}$. However, we see from Eq.~(\ref{fidco}),
that this local symplectic operation \textit{increases} the teleportation fidelity, but this is absurd because we assumed from the beginning
that $N_{opt}$ is optimal.
\end{proof}
From this lemma and Eq.~(\ref{fidco}) we can therefore rewrite the optimal fidelity of teleportation in terms of the single positive parameter
$n_{opt}=\sqrt{\det N_{opt}}/2$ as
\begin{equation}
 \mathcal F_{opt} = \frac{1}{1+n_{opt}}.\label{fidn}
\end{equation}
We can now derive an upper bound for $\mathcal F_{opt}$ for a given entanglement of the state shared by Alice and Bob. We quantify such
entanglement in terms of the lowest partially transposed (PT) symplectic eigenvalue $\nu$ (\ref{nuformula}).
\begin{thm}[\textbf{Upper bound}] For a given Gaussian bipartite state shared by Alice and Bob, with lowest PT symplectic eigenvalue
$\nu$, the fidelity of the teleportation of a coherent state is limited from above by
\begin{equation}
\mathcal F_{opt}\le \frac{1}{1+ \nu}. \label{ubound}
\end{equation}
\end{thm}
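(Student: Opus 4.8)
The plan is to reduce the upper bound to a lower bound on the optimal noise parameter and then to establish that bound by a linear‑algebraic inequality for the partially transposed covariance matrix, combined with the monotonicity of the entanglement. By Lemma~\ref{lemN} and Eq.~(\ref{fidn}) the optimized fidelity can be written as $\mathcal F_{opt}=1/(1+n_{opt})$ with $n_{opt}=\sqrt{\det N_{opt}}/2$. Since $x\mapsto 1/(1+x)$ is decreasing, proving (\ref{ubound}) is exactly equivalent to proving $n_{opt}\ge\nu$, i.e. $\det N_{opt}\ge 4\nu^{2}$.

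The first step I would carry out is a convenient rewriting of the noise matrix. Writing the shared CM in block form and recalling $N=ZAZ+ZC+C^{T}Z+B$, a direct manipulation using $Z^{2}=I$ shows that, in terms of the partially transposed matrix $\tilde V=\Lambda V\Lambda$ and of the $2\times 4$ matrix $L=(I\ I)$, one has the identity $N=Z\,(L\tilde V L^{T})\,Z$, hence $\det N=\det(L\tilde V L^{T})$. This turns the whole problem into a statement about the partially transposed covariance matrix, whose smallest symplectic eigenvalue is precisely $\nu$ by (\ref{nuformula}).

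The second step is the inequality $\det(L\tilde V L^{T})\ge 4\nu^{2}$ for the state actually used at the end of the optimal protocol. By the Williamson theorem I would write $\tilde V=S^{-1}\,\mathrm{diag}(\nu_{opt},\nu_{opt},\tilde\nu,\tilde\nu)\,S^{-T}$ with $S$ symplectic and $\nu_{opt}\le\tilde\nu$; since $\mathrm{diag}(\nu_{opt},\nu_{opt},\tilde\nu,\tilde\nu)\ge\nu_{opt}I$, this yields the Loewner bound $\tilde V\ge\nu_{opt}\,(S^{T}S)^{-1}=:\nu_{opt}\,G$, where $G$ is symmetric, positive definite and \emph{symplectic}. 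Congruence by $L$ together with monotonicity of the determinant on the positive cone then reduce everything to the clean claim $\det(LGL^{T})\ge 4$ for every symmetric symplectic positive‑definite $G$, which would give $\det N_{opt}\ge 4\nu_{opt}^{2}$, i.e. $n_{opt}\ge\nu_{opt}$. Finally, to phrase the result in terms of the $\nu$ of the \emph{originally} shared state rather than of the transformed one, I would invoke that the logarithmic negativity $E_{\mathcal N}=\max[0,-\ln\nu]$ is an entanglement monotone: no local GCP map can increase it, so the PT eigenvalue of the resource used in the optimal protocol satisfies $\nu_{opt}\ge\nu$. Chaining $n_{opt}\ge\nu_{opt}\ge\nu$ gives (\ref{ubound}).

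I expect the symplectic determinant lemma $\det(LGL^{T})\ge 4$ to be the main obstacle. The useful structure comes from the symplectic relations $G\Omega G=\Omega$: writing $G$ in $2\times 2$ blocks $G_{ij}$ and using $X\omega X^{T}=(\det X)\,\omega$ for every $2\times 2$ matrix $X$ (with $\omega$ the single‑mode block of $\Omega$), these relations force $\det G_{11}=\det G_{22}=1-\det G_{12}$ together with the off‑diagonal constraint $G_{11}\omega G_{12}+G_{12}\omega G_{22}=0$. Combining these identities with an arithmetic–geometric type estimate for the determinant of a sum of positive matrices should give $\det(LGL^{T})=\det(G_{11}+G_{22}+G_{12}+G_{12}^{T})\ge 4$, with equality exactly when the two modes enter symmetrically — consistent with the separately announced fact that (\ref{ubound}) is saturated iff the shared state is symmetric. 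The delicate point is to carry out this determinant‑of‑a‑sum estimate without discarding the symplectic constraints: taking a positive‑definite $G$ with $\det G=1$ that squeezes the two position quadratures jointly shows that the inequality is false for a generic $G$ and genuinely requires symplecticity.
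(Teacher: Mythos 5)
Your proposal is correct in outline and takes a genuinely different route from the paper. The paper argues by contradiction through a physical protocol: assuming $n_{opt}<\nu$, Alice takes an auxiliary two-mode squeezed state and performs entanglement swapping through the optimized teleportation channel; in the limit of infinite squeezing the state she then shares with Bob has lowest PT symplectic eigenvalue $\nu_{swap}\to n_{opt}<\nu$, so purely local operations would have increased the entanglement, which is impossible. You instead prove a statewise matrix inequality, $\det N\ge 4\nu_{opt}^{2}$ for the transformed resource (your identity $N=Z(L\tilde V L^{T})Z$ is correct with the paper's Simon-criterion convention $\Lambda=\mathrm{diag}(1,1,1,-1)$, and the Williamson--Loewner reduction $\tilde V\ge\nu_{opt}(S^{T}S)^{-1}$ is sound), and then invoke entanglement monotonicity only once, to pass from the transformed state back to the original one, $\nu_{opt}\ge\nu$. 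Your version is more quantitative: it shows that the EPR noise of \emph{any} bipartite Gaussian state is bounded below by its PT eigenvalue, with no optimality assumption, whereas the paper's argument is shorter and requires no matrix analysis beyond writing down the swapped CM. Note that both arguments use monotonicity of the negativity in an essential way and therefore implicitly assume the shared state is entangled ($\nu<1$): for a separable state with $\nu>1$ the claimed bound is actually false (Alice and Bob may discard the state and teleport with fresh vacua, reaching $\mathcal F=1/2>1/(1+\nu)$), and in that regime the paper's contradiction evaporates just as your inference $\nu_{opt}\ge\nu$ does.

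The one incomplete step is the symplectic determinant lemma $\det(LGL^{T})\ge 4$, which you rightly flag as the main obstacle; it is true, and it closes more easily than your block-constraint sketch suggests, because your own construction already hands you the needed decomposition: $G=(S^{T}S)^{-1}=S^{-1}S^{-T}$ with $S^{-1}$ symplectic. Write $LS^{-1}=(P\,|\,Q)$ in $2\times 2$ blocks, so that $LGL^{T}=PP^{T}+QQ^{T}$; symplecticity, $S^{-1}\Omega S^{-T}=\Omega$, together with $X\omega X^{T}=(\det X)\,\omega$, gives $(\det P+\det Q)\,\omega=(LS^{-1})\Omega(LS^{-1})^{T}=L\Omega L^{T}=2\omega$, i.e. $\det P+\det Q=2$. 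Then, for arbitrary real $2\times 2$ matrices,
\begin{equation}
\det\left(PP^{T}+QQ^{T}\right)=(\det P)^{2}+(\det Q)^{2}+\mathrm{tr}\bigl[(\mathrm{adj}(Q)P)(\mathrm{adj}(Q)P)^{T}\bigr],
\end{equation}
where $\mathrm{adj}$ denotes the $2\times 2$ adjugate; since $\mathrm{tr}(MM^{T})\ge 2|\det M|$ for any real $2\times2$ matrix $M$ and $\det(\mathrm{adj}(Q)P)=\det P\det Q$, the right-hand side is at least $(\det P+\det Q)^{2}=4$. This bypasses the coupled constraints on the blocks of $G$ entirely. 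Your closing remark that symplecticity cannot be dropped is also correct: the positive-definite matrix with blocks $G_{11}=G_{22}=\cosh r\,I$ and $G_{12}=-\sinh r\,I$ ($r>0$) has $\det G=1$ but gives $\det(LGL^{T})=4e^{-2r}<4$.
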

\begin{proof}
Let us suppose that we can achieve a larger fidelity $\mathcal F=1/(1+n_{opt})$ with $0<n_{opt}<\nu$. Alice can in principle have at her
disposal a two-mode squeezed state, with the usual correlation matrix
\begin{equation}
W=
\left(
\begin{array}{cc}
I\cosh r & -Z\sinh r \\
-Z\sinh r & I\cosh r \\
\end{array}
\right),
\end{equation}
($r$ is the squeezing parameter) and use this two-mode squeezed state, together with the bipartite state shared with Bob already optimized over
all local GCP maps, to implement a CV entanglement swapping protocol \cite{swap}. In fact, by mixing at a balanced beam splitter her mode of the
bipartite state shared with Bob and one part of the two-mode squeezed state, and performing homodyne measurements at the output, Bob mode gets
entangled with the remaining part of the two-mode squeezed state in Alice hands. Since the noise added to the teleported state is
$N_{opt}=2n_{opt}I$, it is straightforward to see that the two remaining modes are then described by the following CM
\begin{equation}
W_{swap}=
\left(
\begin{array}{cc}
I\cosh r & -Z\sinh r \\
-Z\sinh r & I[2n_{opt}+\cosh r] \\
\end{array}
\right).
\end{equation}
In other words, before entanglement swapping, Alice and Bob shared an entangled state with CM $V$ and entanglement characterized by $\nu$; after
entanglement swapping, they share a state with CM $W_{swap}$. In the limit of infinite squeezing the lowest PT symplectic eigenvalue of
$W_{swap}$ tends to $n_{opt}$, i.e., $\lim_{r\rightarrow \infty} \nu_{swap}=n_{opt}$. Since we supposed $n_{opt}< \nu$, this means that for a
sufficiently large squeezing parameter $r$, $\nu_{swap}<\nu$, i.e., Alice and Bob have increased their entanglement. However this is impossible
because we have employed only local operations. Therefore it must be $n_{opt} \geq \nu$.
\end{proof}
We complete the characterization of the optimal fidelity of teleportation in terms of the entanglement shared by the two distant parties by
providing also a lower bound for $\mathcal F_{opt}$, proving in this way the result of Eq.~(\ref{lowupp}).
\begin{thm}[\textbf{Lower bound}] For a given Gaussian bipartite state shared by Alice and Bob, with lowest PT symplectic eigenvalue
$\nu$, the fidelity of the teleportation of a coherent state is limited from below by
\begin{equation}
\mathcal F_{opt} \ge \frac{1+\nu}{1+3\nu}.\label{lbound}
\end{equation}
\end{thm}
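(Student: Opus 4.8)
The plan is to recast the claimed lower bound on $\mathcal F_{opt}$ as an \emph{upper} bound on the optimal noise. By the preceding Lemma the optimal map gives $N_{opt}=2n_{opt}I$ and hence $\mathcal F_{opt}=(1+n_{opt})^{-1}$, while the Upper bound theorem already establishes $n_{opt}\ge\nu$. A direct rearrangement shows that the target inequality is exactly equivalent to
\begin{equation}
\mathcal F_{opt}\ge\frac{1+\nu}{1+3\nu}\quad\Longleftrightarrow\quad n_{opt}\le\frac{2\nu}{1+\nu}.
\end{equation}
So the whole task reduces to proving that the minimal achievable noise never exceeds $2\nu/(1+\nu)$.

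Because $n_{opt}$ is an infimum over the entire class of local GCP maps, any single admissible strategy that Alice and Bob can actually implement furnishes a valid upper bound on $n_{opt}$. My plan is therefore \emph{constructive}: exhibit one explicit local strategy whose resulting noise satisfies $n\le 2\nu/(1+\nu)$ for every shared state with lowest PT symplectic eigenvalue $\nu$. First I would bring the shared correlation matrix to the standard form with $A=aI$, $B=bI$ and $C=\mathrm{diag}(c_1,c_2)$ by local symplectic (i.e. local unitary Gaussian) operations; these are GCP maps and leave $\nu$ unchanged, so by formula (\ref{nuformula}) I can express $\nu$ through the invariants $\det A=a^2$, $\det B=b^2$, $\det C=c_1c_2$ and $\det V=(ab-c_1^2)(ab-c_2^2)$. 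In this representation $N=ZAZ+ZC+C^TZ+B$ is already diagonal, and I would then apply a further explicit local Gaussian operation — independent single-mode squeezings on the two sides, composed with the effective symplectic symmetrization $S_a=ZS_bZ$ used in the Lemma (adding a controlled amount of local classical noise only if the symmetrization requires it) — chosen so that the outgoing noise becomes a multiple of the identity. This yields a closed-form value $n=n(a,b,c_1,c_2)$ for the strategy.

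The final step is an optimization: maximize this $n$ over all standard-form parameters compatible with a fixed value of $\nu$, and show the supremum equals $2\nu/(1+\nu)$, attained in the maximally asymmetric limit (e.g. $b\to\infty$ at fixed $\nu$). As a consistency check, the symmetric case $a=b$, $c_1=-c_2$ gives $n=\nu$, which indeed satisfies $\nu\le 2\nu/(1+\nu)$ for $\nu\le1$ and recovers the saturated upper bound; the lower bound is thus expected to be tight only for the most asymmetric resources. Concluding $n_{opt}\le n\le 2\nu/(1+\nu)$ then gives $\mathcal F_{opt}=(1+n_{opt})^{-1}\ge(1+\nu)/(1+3\nu)$.

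The main obstacle I anticipate is precisely this constrained worst-case optimization: I must verify, uniformly over the whole parameter family at fixed $\nu$, that the constructed $n$ is bounded by $2\nu/(1+\nu)$ and that the extremal (asymmetric) configuration saturates it exactly, rather than merely producing some looser numerical bound. A secondary but essential point is checking admissibility, namely that the symmetrizing operation is always realizable as a genuine physical GCP map — that the intermediate correlation matrices keep satisfying the \emph{bona fide} condition $V+i\Omega\ge0$, and that bringing the symmetric noise matrix $N$ to the form $\sqrt{\det N}\,I$ is always possible since $N$ is symmetric positive semidefinite. Once these two technical verifications are in place, the bound follows.
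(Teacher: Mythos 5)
Your reduction is correct: since $N_{opt}=2n_{opt}I$ and $\mathcal F_{opt}=(1+n_{opt})^{-1}$, the bound (\ref{lbound}) is equivalent to $n_{opt}\le 2\nu/(1+\nu)$, and it suffices to exhibit one admissible local strategy whose noise never exceeds that value — this is also the paper's philosophy. The fatal problem is that the class of strategies you construct cannot do the job. Squeezings and the compensated symmetrization $S_a=ZS_bZ$ are local \emph{symplectic} maps, and any classical noise $G_a\oplus G_b\ge 0$ added on top of symplectic $S$'s only adds the positive matrix $ZG_aZ+G_b$ to $N$; hence within your family the achievable $n=\sqrt{\det N}/2$ is bounded below by the symplectic-only optimum, and that optimum can exceed $2\nu/(1+\nu)$. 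Concretely, take a two-mode squeezed state with $\cosh r=2$ and send Bob's mode through a phase-insensitive amplifier of power gain $2$: then $A=2I$, $B=5I$, $C=-\sqrt{6}Z$, already in standard form. Eq.~(\ref{nuformula}) gives $\Sigma=41$, $\det V=16$, so $\nu=\sqrt{(41-\sqrt{1617})/2}\simeq 0.628$ and $2\nu/(1+\nu)\simeq 0.771$. Under squeezings $\mathrm{diag}(r_a,r_a^{-1})\oplus\mathrm{diag}(r_b,r_b^{-1})$ one finds $\det N=(2w+5/w-2\sqrt 6)(2/w+5w-2\sqrt6)$, which depends only on $w=r_a/r_b$ and has its unique positive minimum at $w=1$; the symmetrization preserves $\det N$. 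So the best your family achieves for this state is $n=(7-2\sqrt6)/2\simeq 1.05>0.771$, certifying only $\mathcal F\ge 1/(1+1.05)\simeq 0.49$, \emph{below} the claimed bound $(1+\nu)/(1+3\nu)\simeq 0.56$. The missing ingredient is a genuinely non-symplectic, minimal-noise operation: attenuating Bob's (noisier) mode with $S_b=\tau I$, $G_b=(1-\tau^2)I$, $\tau=\sqrt6/4$ yields $N=\tfrac32 I$, i.e.\ $n=0.75\le 0.771$. The paper's proof builds this in from the start: its map $\omega_\theta$ uses $S_a\propto ZW_aZ$ and $S_b\propto W_b$ taken from the blocks of the symplectic matrix diagonalizing $\Lambda V\Lambda$, which have $\det S\neq 1$ whenever $\epsilon\neq0$, completed by the minimal noise $G=(1-\det S)I$. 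Note the irony: the asymmetric states you expect to saturate your supremum are exactly the ones where your symplectic-only construction breaks down.

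Separately, even if the class of maps were enlarged, your plan defers the only genuinely hard step — verifying, uniformly over the four-parameter standard-form family at fixed $\nu$, that the constructed noise stays below $2\nu/(1+\nu)$ — to an ``anticipated obstacle'' that is never resolved, and the example above shows that for your family the conjectured supremum is simply false. The paper's construction is designed precisely to eliminate any such brute-force optimization: because $W_a,W_b$ satisfy the diagonalization identity (\ref{relation}), the output noise collapses automatically to the scalar matrix $2\,\frac{\nu+|\epsilon|}{1+|\epsilon|}\,I$, where the single state-dependent parameter $\epsilon$ (the commutator defect of the partially transposed quadratures, $[x_a'',y_a'']=i\epsilon$) is controlled by the Heisenberg principle, $|\epsilon|\le\nu$; monotonicity in $|\epsilon|$ then gives $n\le 2\nu/(1+\nu)$ in one line. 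To repair your argument you would need both to admit non-symplectic minimal-noise maps (attenuations) into the construction and to actually perform the worst-case analysis — at which point you would essentially be rebuilding the paper's proof.
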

\begin{proof}
From the definition of symplectic eigenvalue, one has that a $4 \times 4$ symplectic matrix $S$ exists which diagonalizes $\Lambda V \Lambda$
($\Lambda=\textrm{diag}(Z,I)$), i.e., the PT matrix of the CM $V$. This means $S \Lambda V \Lambda S^T=\textrm{diag}(\nu,\nu,\mu,\mu)$, where
$\mu$ is the largest PT symplectic eigenvalue. By writing $S$ in $2\times 2$ block form
\begin{equation}
S=
\left(
\begin{array}{cc}
 W_a & W_b \\
 W_c & W_d \\
\end{array}
\right),
\end{equation}
and rewriting the diagonalization condition for the upper $2 \times 2$ block only, one gets the following condition
\begin{equation}\label{relation}
W_a Z A Z W_a^T + W_a Z C W_b^T +W_b C^T Z W_a^T + W_b B W_b^T=\nu I.
\end{equation}
The symplectic transformation $S$ transforms the vector of quadratures $\hat{\xi}$ into $\hat{\xi}'=(x_a',y_a',x_b',y_b')^T=S \hat{\xi}$ and the
PT vector into $\hat{\xi}''=(x_a'',y_a'',x_b'',y_b'')^T=S\Lambda \hat{\xi}$. One has $[x_a',y_a']=i$, because commutation relation are preserved
by $S$, implying
\begin{equation} \label{det1}
\det W_a+\det W_b=1.
\end{equation}
The commutation relation is instead not preserved for the PT transformed quadratures, and introducing a real parameter $\epsilon$ such that
$[x_a'',y_a'']=i\epsilon$, we get another condition for the two upper blocks of $S$,
\begin{equation}
-\det W_a+\det W_b= \epsilon,
\end{equation}
which together with Eq.~(\ref{det1}), gives the parametrization
\begin{equation} \label{detsol}
 \det W_a=(1 -\epsilon)/2,\qquad \det W_b=(1 +\epsilon)/2.
\end{equation}
Now, since $\Delta x_a''^2= \Delta y_a''^2=\nu/2$, the Heisenberg uncertainty principle imposes that $|\epsilon|\le\nu$ and in particular for
every entangled state we have $|\epsilon|\le\nu <1$. This latter condition, together with Eq.~(\ref{detsol}), suggests an alternative
parametrization in terms of the angle $\theta=\arctan\sqrt{(1-\epsilon)/(1+\epsilon)}$ ($0 < \theta < \pi/2$),
\begin{equation} \label{det theta}
 \sqrt{\det W_a}=\sin\theta,\quad \sqrt{\det W_b}=\cos\theta .
\end{equation}
The $2 \times 2$ matrices $W_a$ and $W_b$ and the parameter $\theta$ allow to construct an appropriate local map which will lead us to derive a
lower bound for the fidelity. This local map is a TGCP map which, at the level of CM, acts as \cite{TGCP2}
\begin{equation}
V \rightarrow V'=SVS^T+G, \label{genTGCP}
\end{equation}
with $S$ and $G$ satisfying \begin{equation}\label{condiTGCP}
 G+i\mathcal J-iS\mathcal J S^T \geq 0.
\end{equation}
If the TGCP map is local, then $S=S_a
\oplus S_b$ and $G= G_a \oplus G_b$, with $G_k+i J-iS_k J S_k^T \geq 0$ ($k=a,b$).

The desired local TGCP map $\omega_\theta$ is defined in terms of $S_a$, $S_b$, $G_a$ and $G_b$ in the following way
\begin{subequations}
\label{tgcpmap}
\begin{eqnarray}
S_a & = & \left\{\begin{array}{cc}
 ZW_aZ \left[\cos\theta\right]^{-1} & 0 < \theta \leq \pi/4 \\
 ZW_aZ\left[\sin\theta\right]^{-1} & \pi/4 \leq \theta < \pi/2 , \\
\end{array}
\right.\\
S_b & = & \left\{\begin{array}{cc}
 W_b\left[\cos\theta\right]^{-1} & 0 < \theta \leq \pi/4 \\
 W_b\left[\sin\theta\right]^{-1} & \pi/4 \leq \theta < \pi/2,  \\
\end{array}
\right.\\
G_a & = & \left\{\begin{array}{cc}
 \left[1-\tan^2\theta\right]I & 0 < \theta \leq \pi/4 \\
 0 & \pi/4 \leq \theta < \pi/2 , \\
\end{array}
\right.\\
G_b & = & \left\{\begin{array}{cc}
 0 & 0 < \theta \leq \pi/4 \\
 \left[1-\cot^2\theta\right]I & \pi/4 \leq \theta < \pi/2 . \\
\end{array}
\right. 
\end{eqnarray}
\end{subequations}
By applying Eqs.~(\ref{noise}), (\ref{relation}) and (\ref{genTGCP}), one can see that this local TGCP map transforms the noise matrix $N$ into
a final matrix proportional to the identity, given by
\begin{eqnarray}
 N&=& [\nu/\cos^2\theta+ 1-\tan^2\theta]I,\quad 0 < \theta \leq \pi/4,\label{noiseb} \\
 N&=&[\nu/\sin^2\theta+1-\cot^2\theta]I,\quad \pi/4 \leq \theta < \pi/2.\label{noisea}
\end{eqnarray}
It is however convenient to come back to the parametrization in terms of $\epsilon$, which allows to express the final $N$ in a unique way, for
$0 < \theta < \pi/2$. In fact, from Eqs.~(\ref{noiseb})-(\ref{noisea}), one gets
\begin{equation}
N=2 \frac{\nu+|\epsilon|}{1+|\epsilon|}I,
\end{equation}
which, inserted into Eq.~(\ref{fidn}), yields
\begin{equation}
 \mathcal F=\frac{1+|\epsilon|}{1+\nu+2|\epsilon|}.\label{fideps}
\end{equation}
From the condition imposed by the Heisenberg uncertainty principle $0 \leq |\epsilon|\leq \nu$, we see that the fidelity is minimum when
$|\epsilon|=\nu$, so that we get the following lower bound
\begin{equation}
 \mathcal F_{opt}\ge \frac{1+\nu}{1+3 \nu}.
\end{equation}
\end{proof}
Theorems 1 and 2 provide a very useful characterization of the optimal fidelity which can be achieved with Gaussian local operations at Alice
and Bob site. In fact, the bounds are quite tight because the region between the upper and the lower bound is quite small (see
Fig.~\ref{bounds}). Therefore, by simply computing the lowest PT symplectic eigenvalue of the CM of the shared state and using the bounds, one
gets a good estimate of the maximum fidelity that can be obtained with appropriate local operations. In fact, the error provided by the bounds
is never larger than $0.086$ (see Fig.~\ref{error}).

\begin{figure}[tbh]
\centerline{\includegraphics[width=0.7\textwidth]{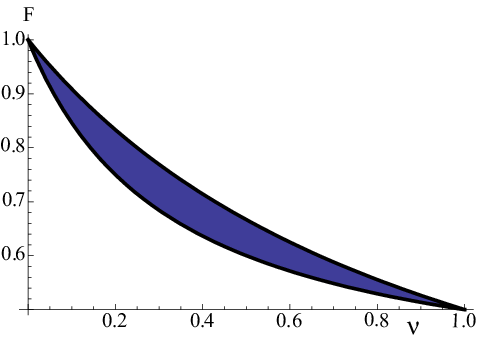}} \caption{Plot of the upper and lower bounds (Eq.~(\ref{ubound}) and
(\ref{lbound}) respectively) for the fidelity of teleportation of coherent states. The blue region is the allowed region in the $(\mathcal
F,\nu)$ plane.} \label{bounds}
\end{figure}

\begin{figure}[tbh]
\centerline{\includegraphics[width=0.7\textwidth]{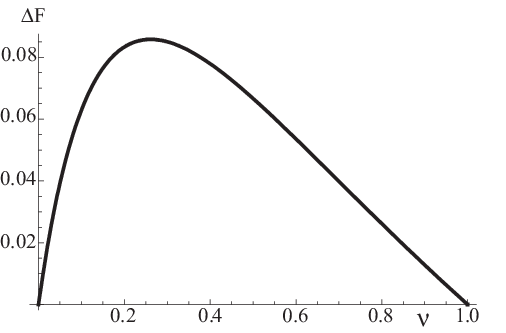}} \caption{Plot of the distance between the upper and lower bounds for the
teleportation fidelity versus the allowed values of $\nu$ for an entangled state between Alice and Bob. We see that the error $\Delta \mathcal
F$ under which we can estimate the maximum fidelity is less than $0.086$.} \label{error}
\end{figure}

\begin{cor}[\textbf{Upper bound achieved in the symmetric case}]
The upper bound $\mathcal F_{opt}=1/(1+\nu)$ is achieved iff the bipartite Gaussian state shared by Alice and Bob is symmetric. The optimal
local transformation in the symmetric case is a local symplectic map.
\end{cor}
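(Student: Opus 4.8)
The plan is to funnel both implications through the single quantity $\epsilon=\det W_b-\det W_a$ that already appears in the proof of the lower bound, and to establish the dictionary $\epsilon=0\Leftrightarrow\det A=\det B$ (i.e.\ symmetry). The first observation is that, by the Optimal-noise Lemma, the equality $\mathcal F_{opt}=1/(1+\nu)$ is the same as $n_{opt}=\nu$, i.e.\ $N_{opt}=2\nu I$. Thus the corollary is equivalent to the assertion that a local GCP map producing $N=2\nu I$ exists if and only if the shared state is symmetric, together with the claim that any such map may be taken symplectic.

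To build the dictionary I would first use local symplectic operations (which alter neither $\det A$, $\det B$ nor $\nu$) to put $V$ in the standard form $A=aI$, $B=bI$, $C=\mathrm{diag}(c_+,c_-)$. Reordering $\Lambda V\Lambda$ into a position block $V_q=\left(\begin{smallmatrix}a&c_+\\c_+&b\end{smallmatrix}\right)$ and a momentum block $V_p=\left(\begin{smallmatrix}a&-c_-\\-c_-&b\end{smallmatrix}\right)$, the PT symplectic eigenvalues $\nu\le\mu$ are the square roots of the eigenvalues of $M=V_qV_p$, and a block-diagonal diagonalising symplectic can be taken whose relevant $2\times2$ part $S_q=\left(\begin{smallmatrix}\alpha&\beta\\\gamma&\delta\end{smallmatrix}\right)$ has rows equal to the left eigenvectors of $M$. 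A direct computation gives $\det W_a=\tfrac12(1-\epsilon)$, $\det W_b=\tfrac12(1+\epsilon)$ with $\epsilon=-(\alpha\delta+\beta\gamma)/(\alpha\delta-\beta\gamma)$, so that $\epsilon=0$ is precisely the vanishing of $\alpha\delta+\beta\gamma$, i.e.\ the $\sigma_x$-orthogonality of the two eigenvectors of $M$. Evaluating that bilinear form on the eigenvectors shows it is proportional to $M_{11}-M_{22}=a^2-b^2=\det A-\det B$; hence $\epsilon=0\Leftrightarrow\det A=\det B$. (For entangled states $\nu<1<\mu$, so $M$ is non-degenerate and the eigenvectors are well defined.)

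The forward implication is then immediate. For a symmetric state $\epsilon=0$, so the angle $\theta=\arctan\sqrt{(1-\epsilon)/(1+\epsilon)}$ of the lower-bound family equals $\pi/4$; there $1-\tan^2\theta=1-\cot^2\theta=0$, hence $G_a=G_b=0$ and the map $\omega_{\pi/4}$ is a noiseless local symplectic transformation. By (\ref{noiseb}) it yields $N=2\nu I$, i.e.\ $\mathcal F_{opt}=1/(1+\nu)$, which both saturates the upper bound and realises the optimum with a local symplectic map. For the converse I would first show that any map attaining $N=2\nu I$ is symplectic: writing a local TGCP map as $S_a\oplus S_b$ with added noise $G_a\oplus G_b\ge0$, the noise matrix splits as $N=RVR^T+(ZG_aZ+G_b)$ with $R=(ZS_a\,|\,S_b)$ and the added piece positive semidefinite; since $\det(RVR^T)\ge4\nu^2$ (the upper bound applied to the symplectic part alone) while $\det N=4\nu^2$, monotonicity of the determinant on positive matrices forces $ZG_aZ+G_b=0$. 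For a purely symplectic map, passing to the Williamson frame of $\Lambda V\Lambda$ turns $N=2\nu I$ into $\nu g_1g_1^T+\mu g_2g_2^T=2\nu I$ with $\det g_1+\det g_2=2$; the chain $2=\det g_1+\det g_2\le\tfrac12(\|g_1\|_F^2+\|g_2\|_F^2)\le2$ is forced to be an equality, which (using $\mu>\nu$) gives $g_2=0$ and $g_1g_1^T=2I$, and translating back yields $\det W_a=\det W_b=\tfrac12$, i.e.\ $\epsilon=0$, i.e.\ symmetry.

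The step I expect to be the real obstacle is making the converse airtight for the \emph{most general} admissible operations, namely the measurement-based, non-trace-preserving GCP maps of (\ref{GCPformula}), for which the clean determinant splitting above is not available. Here I would instead lean on the entanglement-swapping construction used in the upper-bound proof: attaining $n_{opt}=\nu$ means $\lim_{r\to\infty}\nu_{swap}=\nu$, so the overall LOCC protocol loses no logarithmic negativity; saturation of the monotonicity of an entanglement monotone should force the local operation to be reversible, hence symplectic, after which the symplectic analysis applies. Turning ``saturation forces reversibility'' into a rigorous statement for Gaussian GCP maps is the delicate point that the proof must handle with care.
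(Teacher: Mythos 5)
Your ``if'' direction coincides with the paper's (both run through $\epsilon=0\Rightarrow\theta=\pi/4\Rightarrow G_a=G_b=0$ and the symplectic map $\omega_{\pi/4}$ of Eqs.~(\ref{tgcpmap})), and your ``only if'' argument for the trace-preserving class is a genuinely different route from the paper's, and essentially correct: the determinant-monotonicity step forcing $ZG_aZ+G_b=0$ and the Frobenius-norm chain in the Williamson frame forcing $g_2=0$ both hold (the latter using $\mu>\nu$, which is guaranteed for entangled states since $\nu^2\mu^2=\det V\geq 1$ and $\nu<1$). Two technical repairs are needed there. First, in your dictionary, the $\sigma_x$-form evaluated on the left eigenvectors of $M=V_qV_p$ is proportional to $M_{21}(M_{22}-M_{11})$, not to $M_{11}-M_{22}$ alone, so you must also exclude $M_{21}=ac_+-bc_-=0$; this is automatic for entangled states because $\det C=c_+c_-<0$, so the two terms have the same sign and cannot cancel. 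Second, $\det(RVR^T)\geq 4\nu^2$ needs one extra line: by the symmetrization trick of Lemma~\ref{lemN} there is a further local symplectic taking $RVR^T$ to $\sqrt{\det(RVR^T)}\,I$, and only then does the upper-bound theorem yield the determinant inequality.

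The genuine gap is the one you concede in your last paragraph, and it is fatal as the proposal stands: $\mathcal F_{opt}$ is an optimum over \emph{all} local GCP maps, including the measurement-based, non-trace-preserving ones of Eq.~(\ref{GCPformula}), and for these the splitting $N=RVR^T+(ZG_aZ+G_b)$ on which your entire converse rests does not exist (their action on the CM is a Schur complement, not an affine map with positive added noise). Your proposed repair --- that saturation of entanglement monotonicity in the swapping protocol forces the local map to be reversible, hence symplectic --- is not a citable result, and establishing it for Gaussian CP maps is plausibly as hard as the corollary itself. This is exactly where the paper takes a road your proposal never touches: Theorem 3 shows, by a reductio valid for arbitrary GCP maps (if the optimized CM violated the constraints (\ref{constr}), a further local symplectic would strictly increase the fidelity, contradicting optimality), that the optimized CM must be the standard form III $V_1$ of Eq.~(\ref{Vsigma2}); the Appendix then substitutes $d=(n+m)/2-\nu$, imposes $\nu\{V_1(n,m,\nu,\lambda)\}=\nu$, and shows that for $n\neq m$ the unique solution $\bar\lambda=(m-n)^2/8\nu-n-m$ makes $V_1$ non-positive, hence unphysical, so the state must be symmetric. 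To complete your route you must either prove your reversibility claim or import Theorem 3 (or some equivalent reduction of general GCP maps to the symplectic case).
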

\begin{proof}
The ``if'' part of the theorem directly follows as a special case of the preceding proof. If the Gaussian state shared by Alice and Bob is
symmetric, it is $\det W_a=\det W_b$, implying $\epsilon=0$. Then, Eq.~(\ref{fideps}) shows that in this case the fidelity reaches the upper
bound, $\mathcal F_{opt}=1/(1+\nu)$. Moreover in this case $\theta=\pi/4$ and the local TGCP map of Eqs.~(\ref{tgcpmap}) is optimal and it is a
symplectic one, with $S_a=ZW_aZ \sqrt{2}$, $S_b=W_b\sqrt{2}$, $G_a=G_b=0$. The ``only if'' part instead can be easily proved by using the result
of Theorem 3 about the CM of the optimized bipartite state shown in the following section. The proof is given in the Appendix.
\end{proof}
This latter corollary provides the generalization of the result of Ref.~\cite{ade-illu05}, which obtained the same relation between optimal
fidelity and $\nu$ but by considering only a special class of symmetric Gaussian bipartite state for Alice and Bob, obtained by mixing at a beam
splitter two single-mode thermal squeezed states.

\section{Determination of the optimal local map}

We have derived a lower bound for the optimal fidelity of teleportation of coherent states, by explicitly constructing the family of local TGCP
maps $\omega_{\theta}$ of Eq.~(\ref{tgcpmap}), which transform Alice and Bob shared state so that the corresponding fidelity of teleportation is
given by Eq.~(\ref{fideps}), interpolating between the lower and upper bound of Theorems 1 and 2 by varying $\epsilon = \cos 2\theta$. The map
$\omega_{\theta}$ is symplectic only for $\theta = \pi/4$ and in this case it is the local optimal map for a symmetric shared state, since it
reaches the upper bound of Theorem 1 (see Corollary 1). When $\theta \neq \pi/4$, $\omega_{\theta}$ is a noisy (i.e., non unitary) map, and it
is not optimal in general, because one cannot exclude that different LOCC strategies by Alice and Bob may yield a better noise matrix $N$ and
therefore a larger value of the teleportation fidelity. As discussed in the introduction, here we shall study the optimization of the
teleportation by restricting to GCP maps, which preserves the Gaussian nature of the bipartite state initially shared by Alice and Bob.

In this section we shall derive two results: i) the general form of the final CM of the bipartite Gaussian state after the optimization over all
local GCP maps; ii) the optimal local TGCP map, i.e., the local TGCP map which maximizes the teleportation fidelity when one restricts to TGCP
maps only, excluding in this way measurements of ancillary modes.

Ref.~\cite{fiu02} already provided the analytical procedure for the determination of all the parameters of the optimal TGCP map. Here, by
further elaborating the approach of Ref.~\cite{fiu02}, we will show that the optimal TGCP map can always be written in a simple form, as a local
symplectic operation eventually followed by a single mode \textit{attenuation} \cite{Cav82}, either at Alice or Bob site.

\subsection{Standard form of the optimal correlation matrix}

In this subsection we show that, even if we do not know the specific form of the optimal GCP map, one can always characterize it indirectly by
determining the general form of its outcome, i.e., the general form of the CM of the final Gaussian state shared by Alice and Bob after the
maximization. We begin with the following lemma.
\begin{lem}[\textbf{Standard form III}] The correlation matrix $V$ of every bipartite Gaussian state can be transformed by
local symplectic operations into the following normal form
\begin{equation}
V_{1}= \left(
\begin{array}{cccc}
n_1  &  &-d_1     & \\
     &n_2   &     &d_2  \\
-d_1     &  &m_1&       \\
     &d_2   &     &m_2
\end{array}
\right),\label{Vsigma}
\end{equation}
where all the coefficients are positive and satisfy the following constraints
\begin{equation}
 n_1-n_2=m_1-m_2=d_1-d_2=\lambda, \qquad \lambda \in \mathbb R \label{constr}.
\end{equation}
That is:
\begin{equation}
V_{1}= \left(
\begin{array}{cccc}
n+\lambda&  &-d-\lambda&    \\
     &n     &     &d    \\
-d-\lambda&     &m+\lambda&     \\
     &d &     &m
\end{array}
\right).\label{Vsigma2}
\end{equation}
\end{lem}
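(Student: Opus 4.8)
The plan is to reach the normal form (\ref{Vsigma2}) in two stages: first a reduction to the well-known \emph{standard form I}, and then a single two-parameter local squeezing that installs the constraints (\ref{constr}). Throughout I would write $V$ in the $2\times2$ block form with blocks $A$, $B$, $C$ (the same blocks appearing in the noise matrix (\ref{noise})); a local symplectic map $S=S_a\oplus S_b$ then acts as $A\to S_aAS_a^{T}$, $B\to S_bBS_b^{T}$, $C\to S_aCS_b^{T}$, so that $\det A$, $\det B$, $\det C$ and $\det V$ are invariant, since each $S_a,S_b\in Sp(2,\mathbb R)$ has unit determinant.

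For the first stage I would diagonalize the symmetric blocks $A$ and $B$ by local rotations, rescale them to $A=aI$ and $B=bI$ with $a=\sqrt{\det A}$, $b=\sqrt{\det B}$ by a local squeezing on each mode (Williamson's theorem for a single mode), and finally apply a further pair of local rotations, which leave $aI$ and $bI$ untouched, to realize the singular-value decomposition of the transformed $C$ and bring it to $\mathrm{diag}(c_1,c_2)$. Using the discrete symplectic operations that swap $c_1\leftrightarrow c_2$ (a simultaneous $\pi/2$ rotation on both modes) and flip both signs (a $\pi$ rotation on one mode), I would arrange $c_1\le 0\le c_2$, which is possible precisely because $\det C=c_1c_2\le 0$ for the entangled states of interest; this sign choice is what will make $d_1,d_2$ positive.

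In the second stage I would apply the diagonal local squeezing $\mathrm{diag}(s_a,1/s_a)\oplus\mathrm{diag}(s_b,1/s_b)$ with $s_a,s_b>0$. Because a product of diagonal matrices is diagonal, this preserves the block structure and yields exactly the shape of (\ref{Vsigma}) with $n_1=as_a^{2}$, $n_2=a/s_a^{2}$, $m_1=bs_b^{2}$, $m_2=b/s_b^{2}$, $-d_1=c_1 s_a s_b$, $d_2=c_2/(s_a s_b)$, all six positive by $a,b>0$ and $c_1\le0\le c_2$. It then remains to impose (\ref{constr}), i.e. $n_1-n_2=m_1-m_2$ and $m_1-m_2=d_1-d_2$. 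Setting $P=s_a^{2}$, $Q=s_b^{2}$, the first equation reads $a(P-1/P)=b(Q-1/Q)$; as $P\mapsto a(P-1/P)$ and $Q\mapsto b(Q-1/Q)$ are increasing bijections of $(0,\infty)$ onto $\mathbb R$, it determines a smooth increasing $Q=Q(P)$ with $Q(1)=1$, and the second constraint collapses to the single equation $H(P)=0$ with $H(P)=-c_1\sqrt{PQ(P)}-c_2/\sqrt{PQ(P)}-a(P-1/P)$.

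The hard part will be the existence of a root of $H$, and this is where physicality enters. The bona fide condition (\ref{bonafide}) forces $V\ge0$, so the principal minors $ab-c_1^{2}$ and $ab-c_2^{2}$ are non-negative, giving $|c_1|,|c_2|\le\sqrt{ab}$. Using $Q(P)\sim(b/a)P$ as $P\to0^{+}$ and $Q(P)\sim(a/b)P$ as $P\to\infty$, one finds $H(P)\sim(a-c_2\sqrt{a/b})/P\ge0$ near $0$ and $H(P)\sim-(a-|c_1|\sqrt{a/b})P\le0$ near $\infty$, the signs being fixed exactly by $|c_i|\le\sqrt{ab}$; the intermediate value theorem then supplies a zero $P^{*}\in(0,\infty)$, and composing the resulting squeezing with the first-stage map gives the transformation carrying $V$ to (\ref{Vsigma2}). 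As a consistency check I would verify that under (\ref{constr}) the EPR noise matrix of (\ref{noise}) reduces to $N=(n_2+m_2-2d_2)I$, a multiple of the identity, in agreement with Lemma \ref{lemN}. I expect the genuine obstacles to be precisely this transcendental existence step (tamed by reducing the two constraints to one monotone-constrained equation and controlling its boundary signs through $|c_i|\le\sqrt{ab}$) and the sign bookkeeping that secures $d_1,d_2>0$, which confines the clean construction to $\det C\le0$, the case that matters for entanglement.
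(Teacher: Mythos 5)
Your proposal is correct and follows essentially the same route as the paper's own proof: starting from standard form I, you apply the same two-parameter local squeezing, and your single transcendental equation $H(P)=0$ is exactly the paper's Eq.~(\ref{equation}) (i.e.\ $f(\lambda)=1$) under the monotone change of variables $\lambda=a(P-1/P)$, with existence settled in both cases by the Cauchy--Schwartz bound $|c_i|\le\sqrt{ab}$ and the intermediate value theorem. The only cosmetic differences are that the paper solves the first two constraints explicitly for $r_a(\lambda),r_b(\lambda)$ while you solve them implicitly as $Q(P)$, and that you spell out the reduction to standard form I and the sign bookkeeping that the paper simply cites.
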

\begin{proof}It is well known that it is possible to transform every $V$ of an entangled state in the usual normal form
(standard form I) \cite{Duan,Simon}
\begin{equation}
V_{N}= \left(
\begin{array}{cccc}
a    &  &-c_1     & \\
     &a     &     &c_2  \\
-c_1     &  &b&         \\
     &c_2   &     &b
\end{array}
\right),\label{Vstandard}
\end{equation}
where all the coefficients are positive. Now we perform a local symplectic operation composed by two local squeezing operations,
$S_a=\textrm{diag}(\sqrt{r_a},1/\sqrt{r_a})$ and $S_b=\textrm{diag}(\sqrt{r_b},1/\sqrt{r_b})$. We impose the first two conditions
$n_1-n_2=m_1-m_2=\lambda$,
\begin{equation}
a(r_a-r_a^{-1})=b(r_b-r_b^{-1})=\lambda,\label{nm}
\end{equation}
which solved for positive $r_a$ and $r_b$ give
\begin{eqnarray}
r_a(\lambda)&=&\lambda/(2a)+\sqrt{1+(\lambda/2a)^2}, \label{ra}\\
r_b(\lambda)&=&\lambda/(2b)+\sqrt{1+(\lambda/2b)^2}.\label{rb}
\end{eqnarray}
Now we impose the last constraint $d_1-d_2=\lambda$, that is
\begin{equation}
c_1 \sqrt{r_a(\lambda)r_b(\lambda)}-c_2/ \sqrt{r_a(\lambda)r_b(\lambda)}=\lambda.\label{d}
\end{equation}
Our lemma is proved if there is at least one solution $\lambda$ of Eq.~(\ref{d}). Since $\lambda=0$ is the trivial solution when $c_1=c_2$ and
$V_N=V_1$, we can exclude this particular case and divide Eq.~(\ref{d}) by $\lambda$. Therefore we have to show that the equation
\begin{equation}
f(\lambda)=\frac{1}{\lambda}[c_1 \sqrt{r_a(\lambda)r_b(\lambda)}-c_2/ \sqrt{r_a(\lambda)r_b(\lambda)}]=1\label{equation}
\end{equation}
admits at least one real solution. If $|\lambda|\gg a$ and $|\lambda|\gg b$, we can power expand the square roots in Eqs.~(\ref{ra})-(\ref{rb})
so that we easily find the following limits for $f(\lambda)$:
\begin{eqnarray}
\lim_{\lambda\rightarrow\infty}f(\lambda)&=&c_1/\sqrt{ab}\le1,\label{ine1}\\
\lim_{\lambda\rightarrow-\infty}f(\lambda)&=& c_2/{\sqrt{ab}}\le1,\label{ine2} \\
\lim_{\lambda\rightarrow 0^\pm}f(\lambda)&=& \pm \textrm{sign}(c_1-c_2)\infty.\label{lim0}
\end{eqnarray}
The inequalities in Eqs.~(\ref{ine1})-(\ref{ine2}) follow from the Cauchy-Schwartz inequality $\langle x_1^2\rangle\langle
x_2^2\rangle\ge\langle x_1 x_2\rangle^2$ applied to the quadrature operators of the two modes. Given the three limits (\ref{ine1}), (\ref{ine2})
and (\ref{lim0}), since $f(\lambda)$ is continuous everywhere except at the origin, at least one solution of Eq.~(\ref{equation}) exists.
Moreover this solution has the same sign of $c_1-c_2$.\end{proof} We have defined the standard form of lemma 2 as standard form III because it
is very similar to the standard form II defined in Ref.~\cite{Duan} for the determination of a necessary and sufficient entanglement criterion
for bipartite Gaussian states. In particular the two standard forms coincide in the special case of a symmetric bipartite state ($n_1=m_1$ and
$n_2=m_2$ or equivalently $n=m$).

\begin{thm}[\textbf{Form of the CM of the optimized bipartite state}]
The optimal GCP map $\omega_{opt}$ maximizing the teleportation fidelity is such that the CM of the transformed bipartite state is in the
standard form III $V_1 $ defined by Eqs.~(\ref{Vsigma})-(\ref{Vsigma2}).
\end{thm}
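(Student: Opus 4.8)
The plan is to combine the two structural facts already established. By the Optimal noise matrix lemma (Lemma~\ref{lemN}) any optimal map produces a noise matrix $N_{opt}=2n_{opt}I$ proportional to the identity, while by the Standard form III lemma every bipartite CM can be brought to the form (\ref{Vsigma})--(\ref{Vsigma2}) by local symplectic operations. The first thing I would check is that these two facts are mutually consistent. Inserting the diagonal blocks $A=\mathrm{diag}(n_1,n_2)$, $B=\mathrm{diag}(m_1,m_2)$, $C=\mathrm{diag}(-d_1,d_2)$ of (\ref{Vsigma}) into the definition (\ref{noise}) gives $N=\mathrm{diag}(n_1+m_1-2d_1,\,n_2+m_2-2d_2)$, and the constraint (\ref{constr}), $n_1-n_2=m_1-m_2=d_1-d_2$, forces the two diagonal entries to coincide. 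Hence standard form III \emph{automatically} yields a noise matrix proportional to the identity, exactly the property that Lemma~\ref{lemN} imposes on the optimum. This is the structural reason one expects the optimal output to live in this normal form.

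Next I would exploit that local symplectic (unitary Gaussian) operations are themselves reversible GCP maps, so that appending one to an optimal map again yields an admissible GCP map. By the Standard form III lemma there exists a local symplectic $S$ taking the optimal output CM to the normal form (\ref{Vsigma}); the composite map then has, by construction, an output of standard form III, whose noise matrix is proportional to the identity by the computation above, so that by (\ref{fidn}) its fidelity is $1/(1+n)$ for the resulting parameter $n$.

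The delicate step, and the one I expect to be the real obstacle, is to show that this reduction can be carried out \emph{without degrading the fidelity}. A generic reducing symplectic changes $N$, and since the optimum already realizes the smallest possible $n$, an arbitrary reduction can only raise it ($n\geq n_{opt}$, since the composite map is admissible); I must therefore exhibit a \emph{noise-preserving} reduction to standard form III. The natural tool is the distinguished subclass of local symplectics used in Lemma~\ref{lemN}, namely $S_a=ZS_bZ$ with $S_b\in Sp(2,\mathbb{R})$, which acts on the noise matrix by the congruence $N\to S_b N S_b^{T}$: this preserves $\det N$, and when $S_b$ is a symplectic rotation it preserves $N=2n_{opt}I$ exactly. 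The plan is then to use the minimality of $n$ at the optimum, together with this residual rotational freedom, to rotate away the off--standard-form part of the optimal CM while holding $N$ fixed, so that some optimal output sits precisely in standard form III. Equivalently, one may invoke the explicit parametrization of the optimal TGCP map of Ref.~\cite{fiu02}, which, as shown in the next section, reduces to a local symplectic followed by a single-mode attenuation, and compute its output directly, verifying that its blocks are diagonal and obey (\ref{constr}). This detailed verification is essentially the same computation that supports the \emph{only if} part of Corollary~1, and is deferred to the Appendix.
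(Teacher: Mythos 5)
Your consistency check is correct and worth keeping: with $A=\mathrm{diag}(n_1,n_2)$, $B=\mathrm{diag}(m_1,m_2)$, $C=\mathrm{diag}(-d_1,d_2)$, Eq.~(\ref{noise}) gives $N=\mathrm{diag}(n_1+m_1-2d_1,\,n_2+m_2-2d_2)$, and the chain (\ref{constr}) makes the two entries equal. But note that the converse fails: for a CM in the tridiagonal form (\ref{Vsigma}), $N\propto I$ is a \emph{single} scalar condition, $(n_1-n_2)+(m_1-m_2)=2(d_1-d_2)$, whereas standard form III imposes the two independent conditions $n_1-n_2=m_1-m_2$ and $m_1-m_2=d_1-d_2$. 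So Lemma~\ref{lemN} alone can never yield the theorem, and everything hinges on your ``delicate step'' --- which is exactly where the proposal stops being a proof. The noise-preserving freedom you identify, $S_a=ZS_bZ$ with $S_b\in Sp(2,\mathbb{R})\cap SO(2,\mathbb{R})$, is a one-parameter group. A two-mode CM has ten independent entries, of which the condition $N=2n_{opt}I$ fixes only two, while the standard form III family is four-dimensional; sweeping a one-dimensional rotation orbit through a generic point of that eight-dimensional set cannot land on the four-dimensional target, so ``rotating away the off-standard-form part while holding $N$ fixed'' is not available in general. You say the minimality of $n_{opt}$ should be used together with this freedom, but you never say how, and that is precisely the missing idea.

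What the paper does at this point is a first-order variational argument, not a symmetry argument: bring the CM to tridiagonal form by local symplectics, note (citing \cite{fiu02}) that for a tridiagonal CM the optimal local symplectic must be a pair of diagonal squeezings $S_a=\mathrm{diag}(r_a,r_a^{-1})$, $S_b=\mathrm{diag}(r_b,r_b^{-1})$, write $N=\mathrm{diag}(\alpha,\beta)$ as a function of $(r_a,r_b)$, and impose stationarity of $\det N=\alpha\beta$ at the identity. Combined with Lemma~\ref{lemN} (which forces $\alpha=\beta\neq 0$), stationarity reduces to $\nabla[\alpha+\beta]\,|_{r_a=r_b=1}=0$, and these conditions hold \emph{iff} the full chain (\ref{constr}) holds; the extension from local symplectics to all GCP maps then follows by reductio, since a violation of (\ref{constr}) would permit a further fidelity-increasing symplectic. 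It is this gradient condition that supplies the extra equation your approach is missing. Finally, neither of your fallback routes is usable: invoking the optimal TGCP map of the following section restricts to trace-preserving maps, whereas the theorem concerns the full GCP class (measurements on ancillas included); and deferring the verification to the Appendix is circular, because the Appendix's proof of the ``only if'' part of Corollary 1 opens by invoking the present theorem.
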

\begin{proof}
By means of local symplectic operations, we can always put the CM of the bipartite state of Alice and Bob in the form of Eq.~(\ref{Vsigma}), but
without the constraints of Eq.~(\ref{constr}). We first restrict to local symplectic operations and show that the optimal local symplectic
operation always transforms to a state with a CM satisfying the constraints of Eq.~(\ref{constr}). Since the CM is tridiagonal, then any
possible optimal map must be a squeezing transformations of the two modes given by $S_a=\textrm{diag}(r_a,r_a^{-1})$ and
$S_b=\textrm{diag}(r_b,r_b^{-1})$ \cite{fiu02}. Let us define
\begin{eqnarray}
\alpha(r_a,r_b)&=& r_a^2 n_1 -2d_1 r_a r_b+r_b^2 m_1, \\
\beta(r_a,r_b)&=&r_a^{-2}n_2-2d_2 (r_a r_b)^{-1}+r_b^{-2} m_2,
\end{eqnarray}
so that the noise matrix of Eq.~(\ref{noise}) is equal to $N=\textrm{diag}(\alpha,\beta)$. The optimal map must minimize $\det N=\alpha\beta$,
and therefore we impose that
\begin{equation}
 \nabla \alpha(r_a,r_b) \beta(r_a,r_b) =0,\label{ab}
\end{equation}
where $\nabla=(\partial_{r_a},\partial_{r_b})$. Due to Lemma 1, we must also have that $\alpha(r_a,r_b)=\beta(r_a,r_b)\ne 0$, and therefore
Eq.~(\ref{ab}) reduces to
\begin{equation}
 \nabla [\alpha(r_a,r_b)+ \beta(r_a,r_b)] =0\label{a+b}.
\end{equation}
The CM of the transformed state is the optimized one iff the optimal local symplectic operation is the identity map, that is, if
$\alpha(1,1)=\beta(1,1)$ and
\begin{equation}
 \nabla [\alpha(r_a,r_b)+ \beta(r_a,r_b)]\Big|_{r_a=r_b=1} =0.
\end{equation}
It is easy to check that these conditions are satisfied iff
\begin{equation}
 n_1-n_2=m_1-m_2=d_1-d_2,
\end{equation}
which are exactly the constraints of Eq.~(\ref{constr}). The theorem is proved if we show that the normal form $V_1$ of Eq.~(\ref{Vsigma2}) is
actually kept also if one maximizes over the broader class of GCP maps. In fact, if by reductio ad absurdum, we assume that an optimal
(non-symplectic) GCP map exists leading to a CM not satisfying the constraints of Eq.~(\ref{constr}), we could always apply a further symplectic
map which, by repeating the maximization above, would transform to a state with a CM satisfying the constraints (\ref{constr}) and yielding a
larger teleportation fidelity. But this is impossible, because it contradicts the initial assumption of starting from the optimal bipartite
state.
\end{proof}
In other words, the CM of the state shared by Alice and Bob after the maximization of the teleportation fidelity is the one with the standard
form III of Eq.~(\ref{Vsigma2}) because it is the unique CM for which the optimal map is the identity operation on both Alice and Bob site.

\subsection{Optimal trace-preserving Gaussian CP map}

In the former subsection we have determined the form of the CM of the optimized state of Alice and Bob, without determining however which is the
local GCP map which maximizes the teleportation fidelity. Here we find this optimal map, restricting however to the smaller class of
\textit{trace-preserving} GCP maps. The case of Gaussian maps including Gaussian measurements on ancillas will be afforded elsewhere.

Ref.~\cite{lin00} has introduced the notion of \textit{minimal noise} TCGP maps, as the extremal solution of the condition of
Eq.~(\ref{condiTGCP}). These maps are the ones that, for a given matrix $S$, possess the ``smallest'' positive matrix $G$ realizing a CP map. It
is easy to check that a minimal noise TGCP map satisfies the relation $\det G = \left(1-\det S\right)^2$. An example of minimal noise TGCP map
is an \textit{attenuation} \cite{Cav82}, i.e., the transmission of a single boson mode through a beam splitter with transmissivity $\tau$ ($0\leq
\tau\leq 1$), such that
\begin{equation} a \rightarrow \tau a + \sqrt{1-\tau^2} a_V, \label{atte}
\end{equation} where $a=\left(\hat{x}+i\hat{p}\right)/\sqrt{2}$ is the annihilation operator of the mode, and $a_V$ that of the vacuum mode
entering the unused port of the beam splitter.

It is evident that the TGCP map maximizing the teleportation fidelity has to be a minimal noise TGCP map \cite{fiu02}. We prove now a useful
decomposition theorem.
\begin{thm}[\textbf{Decomposition of TGCP maps}]
A minimal noise TGCP map $\omega $ on a single mode system with $\det G\le 1$ can always be decomposed into a symplectic transformation
$\sigma_1$, followed by an attenuation $\tau$ and by a second symplectic transformation $\sigma_2$, that is,
\begin{equation}\label{teo}
\omega =\sigma_2 \circ \tau \circ \sigma_1.
\end{equation}
Therefore a local minimal noise TGCP map on a bipartite CV system can always be decomposed into a local symplectic map, followed by the tensor
product of two local attenuations and by a second local symplectic map. \end{thm}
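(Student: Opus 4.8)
The plan is to run the entire argument at the level of covariance matrices, where the three building blocks act transparently: by (\ref{genTGCP}) a symplectic map contributes $V\mapsto\sigma V\sigma^{T}$ with no added noise, while by (\ref{atte}) an attenuation of transmissivity $\tau$ contributes $V\mapsto\tau^{2}V+(1-\tau^{2})I$, i.e. it is the TGCP map with $S=\tau I$ and $G=(1-\tau^{2})I$. The structural fact I would exploit throughout is that for a \emph{single} mode the symplectic group coincides with $SL(2,\mathbb R)$, so ``symplectic'' and ``unit determinant'' are interchangeable. Composing the maps in the order $\sigma_{2}\circ\tau\circ\sigma_{1}$ yields the effective pair
\begin{equation}
S_{\mathrm{eff}}=\tau\,\sigma_{2}\sigma_{1},\qquad G_{\mathrm{eff}}=(1-\tau^{2})\,\sigma_{2}\sigma_{2}^{T},
\end{equation}
so the theorem reduces to solving $S_{\mathrm{eff}}=S$ and $G_{\mathrm{eff}}=G$ for a transmissivity $\tau\in[0,1]$ and two matrices $\sigma_{1},\sigma_{2}\in Sp(2,\mathbb R)$, given minimal-noise data $(S,G)$.

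I would then fix the three unknowns explicitly. Taking determinants of $S=\tau\sigma_{2}\sigma_{1}$ and using $\det\sigma_{1}=\det\sigma_{2}=1$ forces $\tau^{2}=\det S$, so I set $\tau=\sqrt{\det S}$. The hypothesis $\det G\le1$ together with the minimal-noise identity $\det G=(1-\det S)^{2}$ gives $0\le\det S\le2$, so $\tau$ is real; in the lossy (attenuation) regime $\det S\le1$ one moreover has $\tau\in[0,1]$ and $1-\tau^{2}\ge0$. The noise equation then reads $\sigma_{2}\sigma_{2}^{T}=M$ with $M:=G/(1-\tau^{2})$, a symmetric \emph{positive-definite} matrix because $G>0$ and $1-\tau^{2}>0$.

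The crux of the proof — and the step I expect to be the real obstacle — is showing that $M$ admits the factorization $\sigma_{2}\sigma_{2}^{T}$ with $\sigma_{2}$ \emph{symplectic}. Here the minimal-noise condition does all the work: since
\begin{equation}
\det M=\frac{\det G}{(1-\tau^{2})^{2}}=\frac{(1-\det S)^{2}}{(1-\det S)^{2}}=1,
\end{equation}
the symmetric positive square root $\sigma_{2}:=M^{1/2}$ satisfies $\det\sigma_{2}=1$, hence $\sigma_{2}\in SL(2,\mathbb R)=Sp(2,\mathbb R)$, and $\sigma_{2}\sigma_{2}^{T}=\sigma_{2}^{2}=M$ by symmetry. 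I would stress that without the minimal-noise saturation one would get $\det M\neq1$, the excess corresponding precisely to a thermal (squeezing-plus-extra-noise) contribution that cannot be absorbed into a single attenuation. Finally I define $\sigma_{1}:=\sigma_{2}^{-1}S/\tau$, which has $\det\sigma_{1}=\det S/\tau^{2}=1$ and is therefore also symplectic; by construction $\sigma_{2}\circ\tau\circ\sigma_{1}$ reproduces $(S,G)$.

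To conclude I would dispose of the boundary cases and lift the result to two modes. If $\det S=1$ the minimal-noise relation gives $\det G=0$, forcing $G=0$, so $\omega$ is already symplectic and one takes $\tau=1$, $\sigma_{2}=I$, $\sigma_{1}=S$; if $\det S=0$ one takes $\tau=0$ (total damping to the vacuum), $\sigma_{2}=G^{1/2}$, and $\sigma_{1}$ arbitrary. For the bipartite claim, a \emph{local} minimal-noise TGCP map factorizes as $\omega_{a}\otimes\omega_{b}$ with $S=S_{a}\oplus S_{b}$ and $G=G_{a}\oplus G_{b}$; applying the single-mode decomposition to each factor and regrouping the block-diagonal pieces gives a first local symplectic map $\sigma_{1}^{a}\oplus\sigma_{1}^{b}$, then the tensor product of two local attenuations, and finally a second local symplectic map $\sigma_{2}^{a}\oplus\sigma_{2}^{b}$, which is exactly the asserted form.
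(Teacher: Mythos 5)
Your proof is correct and follows essentially the same route as the paper's: the paper likewise fixes the transmissivity at $\sqrt{\det S}$, factors $G=(1-\det S)\,T_2T_2^T$ with $T_2$ symplectic, and sets $T_1=\frac{1}{\sqrt{\det S}}T_2^{-1}S$, exactly your $(\tau,\sigma_2,\sigma_1)$. Your write-up merely makes explicit what the paper leaves implicit, namely that $\det\bigl[G/(1-\det S)\bigr]=1$ by the minimal-noise identity, so its symmetric positive square root lies in $SL(2,\mathbb R)=Sp(2,\mathbb R)$, and it additionally spells out the boundary cases and the bipartite lift.
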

\begin{proof}
We consider a generic minimal noise TGCP map such that $V \rightarrow V'=SVS^T+G$ for a generic CM $V$, with $\det G = \left(1-\det S\right)^2$.
$G$ is a positive symmetric matrix, and therefore a symplectic matrix $T_{2}$ exists such that $G=T_{2}T_{2}^T (1-s)$, where $s=\det S$. We then
define the symplectic matrix $T_{1}=\frac{1}{\sqrt{s}} T_{2}^{-1} S$, and we also consider an attenuation map with transmissivity $\sqrt{s}$. If
we now first apply the symplectic map defined by $T_1$, then the attenuation map and finally the second symplectic map defined by $T_2$, by
using the relations $T_{2}\sqrt{s}T_{1}=S$ and $G=T_{2}T_{2}^T (1-s)$, one can check that the composition of the three maps reproduces the given
TGCP map.
\end{proof}
\begin{cor} \label{cordec}
A minimal noise TGCP map $\omega $ on a single mode system with $G$ proportional to the identity matrix, i.e., $G=(1-s) I$ $(0 \leq s \leq 1$)
can always be decomposed into a symplectic transformation $\sigma_1$, followed by an attenuation $\tau$,
\begin{equation}\label{teo2}
\omega =\tau \circ \sigma_1.
\end{equation}
Therefore a local minimal noise TGCP map on a bipartite CV system with $G_i=(1-s_i) I$ $(0 \leq s_i \leq 1$, $i=a,b$) can always be decomposed
into a local symplectic map, followed by the tensor product of two local attenuations.
\end{cor}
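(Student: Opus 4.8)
The plan is to read off this corollary as a specialization of the preceding decomposition theorem: the hypothesis $G=(1-s)I$ collapses the \emph{second} symplectic factor $\sigma_2$ into a phase-space rotation, which then commutes with the attenuation and can be absorbed into the first symplectic factor. First I would recall the construction used to prove that theorem for a single mode: it factorises $G=(1-s)\,T_2 T_2^T$ with $T_2\in Sp(2,\mathbb R)$ and $s=\det S$, takes the attenuation $\tau$ to have transmissivity $\sqrt s$ (so that at the level of the CM it acts as $V\mapsto sV+(1-s)I$), and sets $\sigma_1$ to be the symplectic map with matrix $T_1=\tfrac{1}{\sqrt s}T_2^{-1}S$, so that $\omega=\sigma_2\circ\tau\circ\sigma_1$.

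The key step is to feed the hypothesis $G=(1-s)I$ into the factorization $G=(1-s)\,T_2 T_2^T$. For $s\ne1$ this forces $T_2 T_2^T=I$, i.e. $T_2$ is simultaneously symplectic and orthogonal, hence an element of $SO(2)$ --- a phase-space rotation. I would then observe that the attenuation noise is isotropic and therefore invariant under conjugation by such a $T_2$: at the level of the CM,
\[
(\sigma_2\circ\tau)(V)=T_2\bigl[sV+(1-s)I\bigr]T_2^T=s\,T_2 V T_2^T+(1-s)I=(\tau\circ\sigma_2)(V),
\]
using $T_2 T_2^T=I$. Thus $\sigma_2$ commutes with $\tau$, and the three-factor decomposition of the theorem becomes $\omega=\sigma_2\circ\tau\circ\sigma_1=\tau\circ(\sigma_2\circ\sigma_1)=\tau\circ\sigma_1'$, where $\sigma_1':=\sigma_2\circ\sigma_1$ is symplectic as a composition of two symplectic maps. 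The degenerate case $s=1$ gives $G=0$, so $\omega$ is already purely symplectic and the statement holds with a trivial (unit-transmissivity) attenuation.

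For the bipartite statement I would simply apply the single-mode result mode by mode. A local minimal noise TGCP map is $\omega=\omega_a\otimes\omega_b$ with each $\omega_i$ a single-mode minimal noise map carrying $G_i=(1-s_i)I$; decomposing each as $\omega_i=\tau_i\circ\sigma_{1,i}$ and regrouping gives $\omega=(\tau_a\otimes\tau_b)\circ(\sigma_{1,a}\otimes\sigma_{1,b})$, i.e. a local symplectic map followed by the tensor product of two local attenuations. The computation is light; the only real content --- and the point I would emphasize --- is recognizing that $G\propto I$ is exactly the condition making $T_2$ orthogonal, so that the rotation-invariance of the isotropic attenuation noise lets $\sigma_2$ be commuted away. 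No genuine obstacle remains beyond checking this commutation.
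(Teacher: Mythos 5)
Your proof is correct, and it reaches the conclusion by a mildly different route than the paper. The paper's own proof is a one-liner: since $G=(1-s)I$, it takes $T_2=I$ in the factorization $G=(1-s)T_2T_2^T$ of the decomposition theorem, so the second symplectic map $\sigma_2$ is the identity and the three-factor decomposition collapses immediately to $\tau\circ\sigma_1$. You instead keep $T_2$ arbitrary, note that the hypothesis only forces $T_2T_2^T=I$ (so $T_2$ is a symplectic rotation), and then dispose of $\sigma_2$ by commuting it through the attenuation, using the isotropy of the attenuation noise $(1-s)I$. The two arguments differ precisely in how they eliminate $\sigma_2$: the paper exploits the freedom of choice in the (non-unique) factorization of $G$, while you show that whichever factorization one is handed, the resulting rotation can be absorbed into $\sigma_1$. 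Your version is slightly longer but arguably more careful --- read literally, $G=(1-s)I$ implies only $T_2T_2^T=I$, not $T_2=I$, and your commutation step is exactly what closes that small ambiguity in the paper's phrasing. Your explicit treatment of the degenerate case $s=1$ (where $G=0$ and the map is purely symplectic) and the mode-by-mode reduction for the bipartite statement match what the paper leaves implicit.
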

\begin{proof}It is sufficient to repeat the former proof and consider that, since $G=(1-s) I$, $T_2=I$ and therefore the second symplectic map is
the identity operation.
\end{proof}
This latter case is of interest because the optimal TGCP map must have in fact the property $G_i=(1-s_i) I$, $i=a,b$. To show this we first
simplify the scenario by exploiting the results of Ref.~\cite{fiu02}, which provides the general analytical procedure to derive the optimal
local TGCP map. In fact, Ref.~\cite{fiu02} shows that when the optimal local TGCP map is a minimal noise, non-symplectic one, it can be
performed on one site only, i.e., either on Alice or on Bob alone. Suppose that the non-symplectic map is performed on Bob site; it is
straightforward to see that, under a generic local TGCP map $V \rightarrow V'=SVS^T+G$, with $S=I \oplus S_b$ and $G=I \oplus G_b$, the noise
matrix $N$ transforms according to $N \rightarrow N'=\Gamma+G_b$ where
\begin{equation}\label{transfoN}
\Gamma=Z A  Z  + Z C S_b^T +S_b C^T Z+S_b B S_b^T.
\end{equation}
Ref.~\cite{fiu02} shows that the optimal map is such that $\Gamma \propto G_b$, and since Lemma 1 shows that the optimal $N$ is proportional to
the identity, this implies that $G_b$ must be proportional to the identity. Therefore Corollary 2 leads us to conclude that \textit{the optimal
local TGCP map is either a local symplectic map, or a local symplectic map followed by an attenuation by a beam splitter, placed either on Alice
or on Bob mode}.

Theorem 4 and Corollary 2 therefore provide a very simple and clear description of the TGCP map which maximizes the teleportation fidelity,
which is not evident in the treatment of Ref.~\cite{fiu02}. We can further characterize the optimal local TGCP map by determining: i) the form
of the first local symplectic map; ii) the conditions under which the optimal local operation is noisy, i.e., when one has also to add a beam
splitter with appropriate transmissivity on Alice or Bob mode in order to maximize the teleportation fidelity. In order to do that we first need
a further lemma, similar to lemma 2.

\begin{lem} For any given positive real parameter $\eta$, the correlation matrix $V$ of every bipartite
Gaussian state can be transformed by local symplectic operations into the following normal form
\begin{equation}
V_{\eta}= \left(
\begin{array}{cccc}
n_1  &  &-d_1     & \\
     &n_2   &     &d_2  \\
-d_1     &  &m_1&       \\
     &d_2   &     &m_2
\end{array}
\right),\label{Veta}
\end{equation}
where all the coefficients are positive and satisfy the following constraint
\begin{equation}
 n_1-n_2=\eta(d_1-d_2)=\eta^2(m_1-m_2)=\lambda, \qquad \lambda \in \mathbb R \label{consteta}.
\end{equation}
That is, we have a family of normal forms depending on the parameter $\eta$,
\begin{equation}
V_{\eta}= \left(
\begin{array}{cccc}
n+\lambda   &   &-d-\lambda/\eta    &   \\
        &n  &          &d   \\
-d-\lambda/\eta  &  &m+\lambda/\eta^2  &    \\
        &d  &          &m
\end{array}
\right).\label{Veta2}
\end{equation}
\end{lem}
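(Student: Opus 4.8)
The plan is to repeat, almost verbatim, the proof of the preceding Lemma (Standard form III), changing only the three scalar constraints that the squeezing parameters must satisfy. I would again start from the standard form I of Eq.~(\ref{Vstandard}), with positive entries $a,b,c_1,c_2$, and act with the local squeezings $S_a=\textrm{diag}(\sqrt{r_a},1/\sqrt{r_a})$ and $S_b=\textrm{diag}(\sqrt{r_b},1/\sqrt{r_b})$. Computing $(S_a\oplus S_b)V_N(S_a\oplus S_b)^T$ produces the tridiagonal, positive entries $n_1=a r_a$, $n_2=a/r_a$, $m_1=b r_b$, $m_2=b/r_b$, $d_1=c_1\sqrt{r_a r_b}$ and $d_2=c_2/\sqrt{r_a r_b}$, so the desired normal form $V_\eta$ is reached once the constraint (\ref{consteta}) holds.

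Condition (\ref{consteta}) translates into $a(r_a-r_a^{-1})=\lambda$, $b(r_b-r_b^{-1})=\lambda/\eta^2$ and $c_1\sqrt{r_a r_b}-c_2/\sqrt{r_a r_b}=\lambda/\eta$. The first two are quadratics in the respective parameters and are solved exactly as in (\ref{ra})--(\ref{rb}), giving the positive roots
\begin{equation}
r_a(\lambda)=\frac{\lambda}{2a}+\sqrt{1+\left(\frac{\lambda}{2a}\right)^{2}},\qquad
r_b(\lambda)=\frac{\lambda}{2b\eta^{2}}+\sqrt{1+\left(\frac{\lambda}{2b\eta^{2}}\right)^{2}}.
\end{equation}
Inserting these into the third equation reduces the whole problem to a single scalar one: after setting aside the trivial case $c_1=c_2$ (solved by $\lambda=0$, $r_a=r_b=1$) and dividing by $\lambda/\eta$, the lemma holds iff the continuous function
\begin{equation}
f(\lambda)=\frac{\eta}{\lambda}\left[c_1\sqrt{r_a(\lambda)\,r_b(\lambda)}-\frac{c_2}{\sqrt{r_a(\lambda)\,r_b(\lambda)}}\right]
\end{equation}
attains the value $1$ for some real $\lambda$, in complete analogy with (\ref{equation}).

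The decisive step, and the only point where the $\eta\neq1$ case could in principle differ from the $\eta=1$ one, is the evaluation of the limits of $f$. As $\lambda\to0^{\pm}$ one has $r_a,r_b\to1$, hence $f(\lambda)\to\pm\,\textrm{sign}(c_1-c_2)\,\infty$, exactly as in (\ref{lim0}), the positive factor $\eta$ not affecting the sign. For the behaviour at infinity I would expand the roots, using $r_a\sim\lambda/a$, $r_b\sim\lambda/(b\eta^{2})$ for $\lambda\to+\infty$ and $r_a\sim-a/\lambda$, $r_b\sim-b\eta^{2}/\lambda$ for $\lambda\to-\infty$; in both regimes the powers of $\eta$ carried by $\sqrt{r_a r_b}$ cancel against the prefactor $\eta/\lambda$, so that $\lim_{\lambda\to+\infty}f=c_1/\sqrt{ab}\le1$ and $\lim_{\lambda\to-\infty}f=c_2/\sqrt{ab}\le1$, the inequalities being the Cauchy--Schwartz bounds $c_1^2,c_2^2\le ab$ on the quadrature correlations used in (\ref{ine1})--(\ref{ine2}). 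Since $f$ is continuous off the origin, diverges with the sign of $c_1-c_2$ near $\lambda=0$, and stays $\le1$ at infinity, the intermediate value theorem supplies a solution of $f(\lambda)=1$ (with the same sign as $c_1-c_2$), which establishes the normal form $V_\eta$. I expect the only genuine work to be bookkeeping: tracking the $\eta$ factors through the asymptotics and checking that they indeed cancel so that the limiting values reproduce those of the $\eta=1$ lemma, with no new estimate required.
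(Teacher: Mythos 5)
Your proposal is correct and follows exactly the paper's own route: the paper proves this lemma by repeating the procedure of the Standard form III lemma with the modified constraints, noting that the factor $\eta$ cancels in the three limits (\ref{ine1})--(\ref{ine2})--(\ref{lim0}) so the continuity argument carries over unchanged. You have simply made explicit the bookkeeping (the modified quadratics for $r_a$, $r_b$ and the cancellation of $\eta$ in the asymptotics) that the paper leaves implicit.
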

\begin{proof}
With the same procedure used in the proof of Lemma 2, we arrive to an equation similar to (\ref{equation}), while the corresponding three limits
are exactly the same of (\ref{ine1}), (\ref{ine2}) and (\ref{lim0}), since the factor $\eta$ cancels out. As a consequence the continuity
argument is valid also in this case, and therefore, for every fixed parameter $\eta$, one can find a transformation which puts the CM in the
normal form $V_\eta$.
\end{proof}
We notice two facts that will be useful for the next theorem: i) the optimal CM standard form of Theorem 3, $V_1$,  belongs to the class of
normal forms $V_{\eta}$, since it is obtained for $\eta =1$; ii) when $0 < \eta < 1$, the state with CM $V_{\eta}$ is transformed into the
Gaussian state with CM equal to $V_1$ when a beam splitter with transmissivity $\eta$ is put on Bob mode. We then arrive at the theorem about
the optimal TGCP map.

\begin{thm}
The optimal local TGCP map maximizing the fidelity of teleportation of coherent states can always be decomposed into a local symplectic map,
eventually followed by an attenuation either on Alice or on Bob mode. The first local symplectic map is the one transforming the CM of the
Gaussian state shared by Alice and Bob into one particular normal form of the family $V_\eta$ defined in Eqs.~(\ref{Veta})-(\ref{consteta}),
with $0 < \eta \leq 1$. One has to add the attenuation on one of the two modes for realizing the optimal TGCP map if there is a value of $\eta$,
let us say $\eta=\tau$, such that the coefficients of $V_\tau$ satisfy the relations
\begin{equation}\label{tau}
\tau=\frac{d}{m-1},\qquad \tau<1.
\end{equation}
If instead the condition of Eq.~(\ref{tau}) is never satisfied in the interval $\eta \in [0,1)$, the optimal TGCP map is formed only by the
local symplectic map transforming the CM into the normal form $V_1$ (i.e., $\eta=1$ and no attenuation is required).
\end{thm}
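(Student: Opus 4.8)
The plan is to combine the structural facts already in hand---Theorem 4 (decomposition of minimal-noise TGCP maps), Corollary 2, and the reduction of Ref.~\cite{fiu02} recalled above---with an explicit minimization over the family of normal forms $V_\eta$ of Lemma 3. From those facts the optimal local TGCP map is already known to be a local symplectic transformation followed by at most one attenuation acting on a single mode, say Bob, with $G_b$ a multiple of the identity. What remains is to pin down that first symplectic map and to decide when the attenuation is actually present, together with the value of its transmissivity.

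First I would show that the requirement $N\propto I$ of Lemma 1 ties the transmissivity to the normal form. Writing the pre-attenuation state in the form $V_\eta$ of Eqs.~(\ref{Veta})-(\ref{consteta}) and letting an attenuation of transmissivity $\tau$ on Bob act as $A\mapsto A$, $C\mapsto\tau C$, $B\mapsto\tau^2 B+(1-\tau^2)I$, the constraints~(\ref{consteta}) give
\begin{equation}
N_{11}-N_{22}=(n_1-n_2)-2\tau(d_1-d_2)+\tau^2(m_1-m_2)=\lambda(1-\tau/\eta)^2.
\end{equation}
Hence $N$ is a multiple of the identity exactly when $\tau=\eta$ (or $\lambda=0$, the symmetric case of Corollary 1), so the attenuation transmissivity must equal the family parameter; by the second remark following Lemma 3 the output then lands in the standard form III $V_1$, consistently with Theorem 3. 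This matches the structure asserted by the theorem (a symplectic map to $V_\eta$ followed by an attenuation by $\eta$) and collapses the search to the single parameter $\eta\in(0,1]$.

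The engine of the argument is the joint minimization of the noise over the two local squeezings $r_a,r_b$ and the attenuation $\tau$, which I would carry out directly from the standard form I of Eq.~(\ref{Vstandard}) so as to avoid tracking the implicit $\eta$-dependence of the normal-form coefficients. Applying the squeezings and one attenuation $\tau$ on Bob, one forms $N=\mathrm{diag}(\alpha,\beta)$; since the optimum has $\alpha=\beta$ by Lemma 1, stationarity reduces to $\partial_{r_a}(\alpha+\beta)=\partial_{r_b}(\alpha+\beta)=\partial_\tau(\alpha+\beta)=0$. The two squeezing equations turn out to be exactly the standard-form-III constraints~(\ref{constr}) for the output (equivalently, they identify the pre-attenuation state as $V_\tau$), and the $\tau$-equation, once these constraints are substituted, collapses to $\tau(m-1)=d$, i.e. $\tau=d/(m-1)$, where $d=d_2$ and $m=m_2$ are the coefficients of the pre-attenuation normal form $V_\tau$. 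If this self-consistent $\tau$ lies in $(0,1)$ it is the interior minimum and the attenuation is genuinely required; if instead it would be $\ge 1$, the minimum is attained at the boundary $\tau=1$, i.e. at the pure symplectic map bringing the CM to $V_1$ and no attenuation appears. Running the same computation with the roles of Alice and Bob interchanged supplies the option of attenuating Alice's mode, and one keeps whichever choice yields the smaller noise.

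The main obstacle is this last minimization. One must verify that the critical point found is a genuine minimum rather than a saddle, that the self-consistent equation $\tau=d/(m-1)$ has at most one admissible root in $(0,1)$, and that when that root would exceed $1$---so that only amplification, which is not an attenuation and is excluded, could lower the noise---the quadratic behaviour of $\alpha+\beta$ in $\tau$ makes the boundary value $\tau=1$ optimal. Keeping straight that the coefficients $d,m$ entering $\tau=d/(m-1)$ are those of the pre-attenuation normal form $V_\tau$, and not of the output $V_1$, is the other delicate point; once these sign and range questions are settled, the identification of the squeezing conditions with standard form III and of the attenuation condition with $\tau=d/(m-1)$ is a direct calculation.
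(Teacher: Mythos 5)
Your proposal is correct, and its key computations all check out: with the pre-attenuation state in the form $V_\eta$ one indeed finds $N_{11}-N_{22}=\lambda(1-\tau/\eta)^{2}$; the stationarity conditions in the two squeezing parameters reproduce the constraints of Eq.~(\ref{consteta}) with $\eta=\tau$; and the stationarity condition in $\tau$ collapses to $\tau(m-1)=d$, with $d,m$ the coefficients of the pre-attenuation form $V_\tau$, which is exactly Eq.~(\ref{tau}). The ingredients are the same as the paper's (Lemma 1, the reduction of Ref.~\cite{fiu02}, Corollary 2, Lemma 3), but you organize the minimization differently. The paper works sequentially: for each $\tau$ it uses Lemma 3 to build the candidate map ``symplectic onto $V_\tau$, then attenuation $\tau$'' (so that $N\propto I$ holds by construction), computes $\alpha(\tau)=n-2\tau d+\tau^{2}m+1-\tau^{2}$, and minimizes over the single variable $\tau$, excluding the boundary $\tau=0$ because a vacuum channel caps the fidelity at the classical value $1/2$, below the lower bound of Eq.~(\ref{lbound}). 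You instead minimize jointly over $(r_a,r_b,\tau)$ starting from standard form I of Eq.~(\ref{Vstandard}) and let stationarity itself force the pre-attenuation state into $V_\tau$. This buys something the paper leaves implicit: it proves that the optimum must lie in the paper's one-parameter family of candidate maps, rather than inferring this from the structural theorems; the price is a three-variable computation where the paper has a one-variable one. Two quibbles: your parenthetical identifying $\lambda=0$ with ``the symmetric case of Corollary 1'' is imprecise ($\lambda=0$ corresponds to $c_1=c_2$ in standard form I, not to $\det A=\det B$); and the exclusion of $\tau=0$ should be stated explicitly, as the paper does---though in fact, since Lemma 3 gives $d>0$, one has $\alpha'(0)=-2d<0$, so $\tau=0$ is never a minimum, and both the convex case with vertex $\geq 1$ and the concave case $m\leq 1$ land on the boundary $\tau=1$, exactly as your quadratic argument requires.
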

\begin{proof}
Using Lemma 3, we can always apply a local symplectic map which transform the CM into one of the form of the family of Eq.~(\ref{Veta2}), with
$\eta=\tau$. We then apply an attenuation on Bob mode with transmissivity $\tau$ and then try to find the maximum fidelity as in the proof of
Theorem 3. We have that the two diagonal elements of the noise matrix $N$ now read
\begin{equation}
\alpha(\tau)= \beta(\tau)= n-2 \tau d +\tau^2 m+1-\tau^2,
\end{equation}
The optimal map must minimize $\det N=\alpha^2$, and therefore we impose
\begin{equation}
 \frac{d \alpha(\tau)}{d \tau}=0,\label{abeta}
\end{equation}
which is satisfied iff $\tau=d/(m-1)$. If $0 < \tau < 1$, this map composed by the local symplectic map and the attenuation is the optimal map.
If instead for any $\tau \in [0,1]$, the condition of Eq.~(\ref{tau}) is not satisfied, there is no critical point in this interval and
therefore the optimal map is just the symplectic transformation to the normal form $V_1$. One has also to check the behavior at the lower
boundary value $\tau=0$, but this is trivial because this means that Bob uses the vacuum to implement the teleportation which is never the
optimal solution if we have an entangled channel. In fact, if Bob uses the vacuum, the channel looses its quantum nature and the maximum of the
fidelity is the classical one $\mathcal F=1/2$, which is below the lower bound for any entangled state given by Eq.~(\ref{lbound}).
\end{proof}
Theorem 5 therefore characterizes in detail the optimal TGCP map, giving in particular the conditions under which this map is noisy, i.e.,
non-symplectic and therefore when teleportation is improved by \textit{increasing the noise and decreasing the entanglement} of the shared state.

Using this latter theorem we can also determine how, from an operational point of view, one can compute the value of the teleportation fidelity
maximized over all TGCP maps, starting from the symplectic invariants of the bipartite Gaussian state initially shared by Alice and Bob. From
the CM of this latter state one can:
\begin{enumerate}
 \item compute the four symplectic invariants $a=\sqrt{\det A}$, $b=\sqrt{\det B}$, $c=\sqrt{|\det C|}$ and $v=\det V$.
\item Knowing the first three invariants of the channel, the elements $n$, $m$, and $d$ of the normal form $V_{\eta}$ can be expressed as
functions of only the two unknown parameters $\lambda$ and $\eta$ as
\begin{eqnarray}
 n(\lambda) &=& -\lambda/2       + \sqrt{a^2+(\lambda/2)^2},  \\
 m(\lambda, \eta) &=& -\lambda/2\eta^2 + \sqrt{b^2+(\lambda/2\eta^2)^2},  \\
 d(\lambda, \eta) &=& -\lambda/2\eta   + \sqrt{c^2+(\lambda/2\eta)^2}.
\end{eqnarray}
\item The two parameters $\lambda$ and $\eta$ can be found solving the following system in the region $0<\eta<1$,
\begin{equation}\label{sys}
\Bigg \{
 \begin{array}{rcl}
  \det V_{\eta}(\lambda,\eta)&=&v \\
  \eta\; [m(\lambda,\eta)-1]&=&d(\lambda,\eta)\\
 \end{array}
\end{equation}
and solving also the first equation in the boundary $\eta=1$,
\begin{equation}
 \det V_1(\lambda)=v,\label{sys1}
\end{equation}
(the two conditions of (\ref{sys}) come from the invariance property of the determinant of the channel and form the maximization condition of
Eq.~(\ref{tau})).

\item We call $(\lambda_i,\eta_i)$ with $i=1,2\dots k$, the union of the solutions of (\ref{sys}) and
(\ref{sys1}) (we have at least one solution because (\ref{sys1}) admits at least a solution). We then compute the candidate fidelities
\begin{eqnarray}
&& F_i=2\left[2+\sqrt{a^2+\lambda_i^2/4}+\sqrt{b^2\eta_i^4+\lambda_i^2/4}\right. \nonumber \\
&& \left.-2\sqrt{c^2\eta_i^2+\lambda_i^2/4}+1-\eta_i^2 \right ]^{-1}, \label{candifide}
\end{eqnarray}
so that the maximum fidelity will be $\mathcal F_{opt}=\max\{F_i\}$.
\end{enumerate}

\section{Appendix}

We now prove the ``only if'' part of Corollary 1, i.e., that if $\mathcal F_{opt}=1/(1+\nu)$ (the upper bound of the optimal fidelity), then the
bipartite Gaussian state shared by Alice and Bob is symmetric.
\begin{proof}
Theorem 3 shows that the final CM after any optimization map must be $V_1$ of Eq.~(\ref{Vsigma2}). Using Lemma 1 and the explicit form of $V_1$,
the hypothesis is equivalent to $n+m-2d=2\nu$, so we can make the substitution $d=(n+m)/2-\nu$, which is just a different parametrization:
$V_1(n,m,d,\lambda)\rightarrow V_1(n,m,\nu,\lambda)$. Now, the condition that $\nu$ is equal to the PT minimum symplectic eigenvalue gives us a
constraint on the parameters of the matrix $V_1$
\begin{equation}
\nu\{V_1(n,m,\nu,\lambda)\}=\nu. \label{nuconstr}
\end{equation}
If the state is symmetric, which means $n=m$, then the condition of Eq.~($\ref{nuconstr}$) is identically satisfied. If the state is non
symmetric, Eq.~(\ref{nuconstr}) is a non-trivial equation that solved for $\lambda$ gives $\bar \lambda=(m-n)^2/8\nu-n-m$. However the
corresponding matrix $V_1=(n,m,\nu,\bar \lambda)$ is not the CM of a physical state; in fact, the characteristic polynomial of $V_1$ can be
written as $P(x)=(c_0+c_1 x + x^2)(g_0+g_1x+x^2)$ where $c_0=-\nu(n+m+\nu)$, but this means that $V_1$ has at least one negative eigenvalue and
therefore it is not positive definite. Therefore $\mathcal F_{opt}=1/(1+\nu)$ is realized only if Alice and Bob state is a Gaussian symmetric
state.
\end{proof}
\end{chapter}

\part[\small{OPTOMECHANICAL DEVICES}]{OPTOMECHANICAL DEVICES}

\chapter{Quantum Optomechanics}\label{optomech}
In this chapter we give some fundamental theoretical tools, useful to study the dynamics of optical and mechanical systems and their mutual interaction. 

There is one inconvenience that is usually neglected in abstract Quantum Information theories, but that we need to take into account when considering real systems; this is \textit{noise}, thermal and quantum noise. Thermal noise, is the same which is studied in classical mechanics, is due to the thermal excitation of the system and vanishes when $T\rightarrow 0$, while quantum noise is due to intrinsic quantum uncertainties of non-commuting operators and cannot be completely eliminated even at zero temperature. Both thermal and quantum noises can be described by different approaches: the \textit{master equation} (Schr\"odinger picture), the \textit{Fokker-Planck equation} (phase space), the \textit{Langevin equations} (Heisenberg picture). 
Here we are going to consider only the latter approach, which is suitable for linearized systems (the ones we are interested in) and moreover it gives a consistent treatment of quantum Brownian motion.

The last section is about the coupling between a mechanical resonator and an optical mode, due to the effect of radiation pressure.

\section{Input-output theory}
The dynamics of an optical cavity mode can be formulated using an input-output theory, in which the noise is seen as an input filed entering in the cavity or, equivalently, as an output field exiting from the cavity. 

\begin{figure}[H]
\center{ \includegraphics[width=0.7\textwidth]{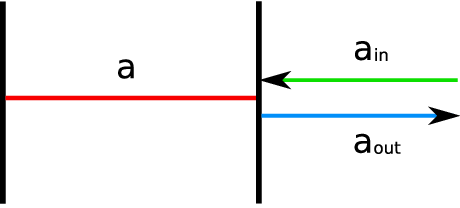}
\caption{Optical cavity mode interacting with input and output fields.}}
\end{figure} 

We derive the Langevin equations following the treatment of Gardiner and Collett \cite{GardColl}. We start form the full Hamiltonian
\begin{equation}
 \hat H= \hat H_c + \hat H_{bath}+ \hat H_{int},
\end{equation}
where $\hat H_c$ is the Hamiltonian of the cavity mode (we neglect constant vacuum terms)
\begin{equation}
 \hat H_c=\hbar \omega_c \hat a^\dag \hat a,
\end{equation}
$H_{bath}$ is the Hamiltonian of the external fields modeled as a bath of harmonic oscillators
\begin{equation}
 \hat H_{bath}= \int_0^\infty d\omega\, \hbar \omega \hat b^\dag (\omega) \hat b(\omega),
\end{equation}
while $H_{int}$ is the linear interaction Hamiltonian in the rotating wave approximation
\begin{equation}\label{HintRWA}
 \hat H_{int}= i \hbar \int_0^\infty d\omega \kappa (\omega) \left[\hat a \hat b^\dag (\omega)-\hat b(\omega) \hat a^\dag\right] ,
\end{equation}
where $\kappa(\omega)$ is the coupling constant between the modes. Consistently with the rotating wave approximation, we can assume that the integral in (\ref{HintRWA}) gives a time average non-zero contribution only for frequencies near the cavity resonance $\omega_c\simeq 10^{15}$, so that we can extend the lower integration limit to $-\infty$,
\begin{equation}\label{HintRWA2}
 \hat H_{int}= i \hbar \int_{-\infty}^\infty d\omega \kappa (\omega)  \left[\hat a \hat b^\dag (\omega)-\hat b(\omega) \hat a^\dag \right].
\end{equation}
The commutation relation are the usual bosonic CCR,
\begin{equation}
 [\hat a, \hat a^{\dag}]=1\qquad [\hat b(\omega), \hat b^\dag(\omega')]=\delta(\omega-\omega'), \label{abCCR}
\end{equation}
from which it is evident that $\hat a$ and $\hat a^\dag$ are dimensionless operators, while $\hat b(\omega)$ and  $\hat b^\dag(\omega)$ have the dimensions of a square root of time.
Let us apply the Heisenberg equation of motion, given in the first chapter (\ref{HeisEq}), to the bath annihilation operator,
\begin{equation}
 \dot {\hat b}(\omega)=\frac{1}{i\hbar}[\hat b(\omega),\hat H]=-i\omega \hat b(\omega)+\kappa(\omega)\hat a,
\end{equation}
which integrated for $t>t_0$ gives
\begin{equation}\label{binitial}
 \hat b(\omega)= e^{-i\omega (t-t_0)}\hat b_0(\omega)+\kappa(\omega)\int_{t_0}^{t}dt' e^{-i\omega(t-t')} \hat a (t'),
\end{equation}
where $\hat b_0(\omega)$ is the bath operator at the initial time $t_0$. 
We can write the same solution for $t<t_1$ 
\begin{equation}\label{bfinal}
 \hat b(\omega)= e^{-i\omega (t-t_1)}\hat b_1(\omega)-\kappa(\omega)\int_{t}^{t_1}dt' e^{-i\omega(t-t')} \hat a (t'),
\end{equation}
where $\hat b_1(\omega)$ is the bath operator at the final time $t_1$.

Now we consider the Heisenberg equation for the cavity annihilation operator,
\begin{equation}
 \dot {\hat a}=\frac{1}{i\hbar}[\hat a,\hat H]=-i\omega_c \hat a -\int_{-\infty}^{\infty}d\omega \kappa(\omega)\hat b(\omega),
\end{equation}
if we substitute the solution for $\hat b(\omega)$ depending on the initial condition (\ref{binitial}), we get
\begin{equation}\label{dotab0pre}
 \dot {\hat a}=-i\omega_c \hat a- \int_{-\infty}^{\infty}d\omega \kappa(\omega) e^{-i\omega(t-t_0)}\hat b_0(\omega)-\int_{-\infty}^{\infty}d\omega \kappa^2(\omega) \int_{t_0}^{t}dt'e^{i\omega(t-t')} \hat a(t').
\end{equation}
Now we make the Markovian assumption of a frequency-independent coupling constant $\kappa(\omega)=\gamma/2\pi$,
so that (\ref{dotab0pre}) becomes
\begin{equation}\label{dotab0delta}
 \dot {\hat a}=-i\omega_c \hat a-  \sqrt{\frac{\gamma}{2\pi}} \int_{-\infty}^{\infty}d\omega e^{-i\omega(t-t_0)}\hat b_0(\omega)-\gamma \int_{t_0}^{t}dt' \delta(t-t')\hat a(t').
\end{equation}
We define the input-output field operators
\begin{eqnarray}
 \hat a_{in}(t)&=&-\frac{1}{2\pi} \int_{-\infty}^{\infty}d\omega e^{-i\omega(t-t_0)}\hat b_0(\omega),\\
 \hat a_{out}(t)&=&\frac{1}{2\pi} \int_{-\infty}^{\infty}d\omega e^{-i\omega(t-t_0)}\hat b_1(\omega), 
\end{eqnarray}
satisfying the CCR
\begin{equation}
 [\hat a_{in}(t),\hat a_{in}^\dag(t')]=[\hat a_{out}(t),\hat a_{out}^\dag(t')]=\delta(t-t').
\end{equation}
If we apply the property of the delta function according to which, if it is centered on one of the integration limits, it gives only half of its contribution
\begin{equation}
\int_{a}^{b}dx \delta(x-b)f(x)=\frac{1}{2}f(b),
\end{equation}
then (\ref{dotab0delta}) becomes
\begin{equation}\label{dotab0}
 \dot {\hat a}=-i\omega_c \hat a- \frac{\gamma}{2}\hat a + \sqrt{\gamma} \hat a_{in},
\end{equation}
while if we repeat the same procedure using the final time solution for $\hat b(\omega)$ (\ref{bfinal}), we get
\begin{equation}\label{dotab1}
 \dot {\hat a}=-i\omega_c \hat a+ \frac{\gamma}{2}\hat a - \sqrt{\gamma} \hat a_{out}.
\end{equation}
The difference between (\ref{dotab1}) and (\ref{dotab0}) give us a useful input-output relation
\begin{equation}\label{InOut}
 \hat a_{in}+\hat a_{out}=\sqrt{\gamma} \hat a.
\end{equation}
Equations (\ref{dotab0}) and (\ref{dotab1}) are two equivalent quantum Langevin equations describing the cavity mode dynamics with a damping rate $\gamma$ and where the noise field is seen as an input or output optical field. 
However, since $\hat a_{out}$ depends on future unknown boundary conditions, Eq. (\ref{dotab1}) is useless form a practical point of view, while (\ref{dotab0}) gives us the possibility to easily calculate the time evolution of the first and the second moments of $\hat a$, thanks to the statistical properties of $\hat a_{in}$.

In fact, if at the initial time $t_0$ the state of the system is factorized like $\hat \rho_0=\hat \rho_c \otimes \hat \rho_n$, where $\hat \rho_c$ is the cavity mode density operator and $\hat \rho_n$ is the thermal state of the bath with 
\begin{equation}
\textrm{tr}\{\hat \rho_n \hat b^\dag(\omega)\hat b(\omega')\}=\bar n(\omega)\delta(\omega-\omega')=\frac{1}{\exp(\frac{\hbar \omega}{K_B T})-1}\delta(\omega-\omega'),
\end{equation}
now, since the cavity mode has a spectrum centered in $\omega=\omega_c$, within the rotating wave approximation we have
\begin{eqnarray}
 \langle \hat a_{in}(t) \rangle&=&0,\label{stoc1}\\ 
\langle \hat a_{in}^\dag(t) \hat a_{in}(t')\rangle&\simeq&\bar n(\omega_c)\delta(t-t'),  \label{stoc2}\\
\langle \hat a_{in}(t) \hat a_{in}^\dag(t')\rangle&\simeq&[\bar n(\omega_c)+1]\delta(t-t'), \label{stoc3}
\end{eqnarray}
moreover, if we deal with optical frequencies, then $\bar n (\omega_c)<<1$ and the mean excitation number can be usually neglected.

We observe that the Langevin equation (\ref{dotab0}) is very general, since it is valid not only if the input state is a thermal filed, but e.g. also if it is a coherent or squeezed state. Obviously, in the these cases, Eq. (\ref{stoc1}-\ref{stoc2}-\ref{stoc3}) will be different.

\section{Quantum Brownian motion}

The quantum Brownian motion of a massive particle in a potential is more intricate than the optical counterpart described in the previous section. The problem is that the mechanical coupling is different form the optical one and moreover we cannot apply the rotating wave approximation; this implies different properties of the noise operator, which has no more the physical interpretation of an optical input field but it represents a driving stochastic force.

Let us start with the Hamiltonian of a particle (lower case operators) in a bath of mechanical oscillators (upper case operators),
\begin{equation}
 \hat H= \frac{\hat p^2}{2m}+\hat V(\hat q) + \int_{0}^{\infty}d\omega \frac{\hat P_\omega'^2}{2 m_\omega}+\frac{k_\omega}{2} (\hat Q_\omega' - \hat q)^2  \label{Hmechanic}
\end{equation}
where the interaction is assumed to depend only on the relative distance between the particle and the oscillators: $\hat Q_\omega' - \hat q$.

The following calculations are very similar but not exactly equal to that given by Giovannetti and Vitali \cite{GIOV01}, because here we start from an equivalent but different Hamiltonian (\ref{Hmechanic}), proposed by Gardiner and Zoller \cite{gard}.

Calculations can be simplified using dimensionless quadratures for the bath operators,
\begin{eqnarray}
 \hat Q_\omega &=& \hat Q_\omega' \sqrt{\frac{k_\omega}{\hbar \omega}}\\
\hat P_\omega&=&\hat P_\omega' \sqrt{\frac{1}{m_\omega \hbar \omega}}
\end{eqnarray}
so that the Hamiltonian becomes,
\begin{equation}
 \hat H= \frac{\hat p^2}{2m}+\hat V(\hat q) + \frac{1}{2}\int_{0}^{\infty}d\omega  \left[\hbar \omega \hat P_\omega^2+ (\sqrt{\hbar \omega}\hat Q_\omega - \kappa_\omega \hat q)^2\right]. \label{Hadim}
\end{equation}
where the only non-vanishing commutation relations are
\begin{eqnarray}
[\hat q,\hat p]&=&i \hbar \;, \\
\left[\hat Q_\omega,\hat P_{\omega'}\right]&=&i\delta(\omega-\omega') \;,
\end{eqnarray}
and $\kappa_\omega=\sqrt{k_\omega}$ is the coupling constant.

The Heisenberg equations for the bath operators are
\begin{eqnarray}
\dot{\hat Q}_\omega&=&\omega \hat P_\omega \;, \\
\dot{\hat P}_\omega&=&-\omega \hat Q_\omega +\kappa_\omega\sqrt{\frac{\omega}{\hbar}} \hat q\;,
\end{eqnarray}
and if we define the annihilation operators $\hat b_\omega=(\hat Q_\omega+i \hat P_\omega)/\sqrt{2}$, we have
\begin{equation}
\dot{\hat b}_\omega=-i\omega \hat b_\omega+i\kappa_\omega\sqrt{\frac{\omega}{2\hbar}}\hat q,
\end{equation}
which integrated gives
\begin{equation}
{\hat b}_\omega(t)=e^{-i\omega(t-t_0)} \hat b_\omega(t_0)+i\kappa_\omega \sqrt{\frac{\omega}{2\hbar}}\int_{t_0}^t dt'e^{-i\omega(t-t')}\hat q(t'), \label{bdotBrow}
\end{equation}
where $t_0$ is an arbitrary initial time.

Now we write the Heisenberg equation for the particle variables
\begin{eqnarray}
\dot{\hat q}&=&\frac{\hat p}{m}, \\
\dot{\hat p}&=&\frac{i}{\hbar}[\hat V,\hat p]-\int_{0}^\infty d\omega k_\omega^2 \hat q+ \int_{0}^\infty d\omega k_\omega \sqrt{\frac{\hbar \omega}{2}}(\hat b_\omega+\hat b_\omega^\dag) \;, \label{pdotBrow1}
\end{eqnarray}
if we substitute (\ref{bdotBrow}) in (\ref{pdotBrow1}) we get
\begin{equation}
 \dot{\hat p}(t)=\frac{i}{\hbar}[\hat V,\hat p(t)]+\hat \xi(t)+\int_{0}^\infty d\omega k_\omega^2\left[- \hat q(t)+ \omega \int_{t_0}^t dt' \sin[\omega(t-t')]q(t')\right] \;, \label{pdotBrow2}
\end{equation}
where we have defined the Brownian force operator
\begin{equation}
\hat \xi(t)=\int_{0}^\infty d\omega k_\omega \sqrt{\frac{\hbar \omega}{2}}\left[e^{-i\omega(t-t_0)}\hat b_\omega(t_o)+e^{i\omega(t-t_0)}\hat b_\omega^\dag(t_o)\right] \;. \label{BrownNoise}
\end{equation}
The time integral in (\ref{pdotBrow2}) can be done by parts, giving
\begin{equation}
 \dot{\hat p}(t)=\frac{i}{\hbar}[\hat V, \hat p(t)]+\hat \xi(t)-\int_{0}^\infty d\omega k_\omega^2\left[\cos[\omega(t-t_0)]\hat q(t_0)+ \int_{t_0}^t dt' \cos[\omega(t-t')]\frac{\hat p(t')}{m}\right] \;. \label{pdotBrow3}
\end{equation}
If we suppose a frequency independent coupling $\kappa_\omega^2=2\gamma/\pi$, using that $\int_0^{\infty}d\omega\cos(\omega t)=\pi \delta(t)$, we get
\begin{equation}
 \dot{\hat p}(t)=\frac{i}{\hbar}[\hat V, \hat p(t)]-\frac{\gamma}{m} \hat p(t)+\hat \xi(t)-2 \gamma \delta(t-t_0) \hat q(t_0) \; \label{pdotBrow4}.
\end{equation}
The last term in (\ref{pdotBrow4}) can be neglected for $t>t_0$, therefore we arrive to a quantum Langevin equation which is very similar to the classical one
\begin{equation}
 \dot{\hat p}(t)=\frac{i}{\hbar}[\hat V, \hat p(t)]-\frac{\gamma}{m} \hat p(t)+\hat \xi(t) \; \label{pdotBrow5}.
\end{equation}
The difference between the QLE (\ref{pdotBrow5}) and the classical countrepart is on the commutation rules and on the second moments of the stochastic force $\hat \xi(t)$ (\ref{BrownNoise}).
In fact it can be shown \cite{GIOV01} that
\begin{equation}
 [\hat \xi(t),\hat \xi(t')]=2i\hbar \gamma \frac{d}{dt}\delta(t-t')
\end{equation}
and, if the particle is in a thermal bath, then
\begin{eqnarray}
 \langle \hat\xi(t) \rangle&=&0 \\
\langle [\hat\xi(t),\hat \xi(t')]_+\rangle&=& \frac{2\hbar\gamma}{\pi} \int_0^\infty d\omega\, \omega \cos[\omega (t-t')]\coth(\frac{\hbar \omega}{2 K_B T}),\label{symBrown} 
\end{eqnarray}
It is notable form ((\ref{symBrown}), that a truly quantum Brownian motion is not a Markov process, however in almost every situation, we can apply the high temperature limit 
\begin{equation}
\hbar \omega \coth(\frac{\hbar \omega}{2 K_B T}) \xrightarrow{T>>\hbar \omega/K_B} 2K_B T \; 
\end{equation}
and we recover a Markov process
\begin{equation}
 \langle [\hat\xi(t),\hat \xi(t')]_+\rangle= 4\gamma K_B T \delta(t-t').\label{symBrownMark}
\end{equation}

\section{Radiation pressure}

J. C. Maxwell, in his \textit{Treatise on Electricity and Magnetism} of 1873, first predicted that light possesses a momentum and therefore it exerts a pressure when it hits upon a surface,
\begin{quote}``... in a medium in which waves are propagated,
there is a pressure in the direction normal to the
waves, and numerically equal to the energy in
unit of volume.'' \cite{maxwell}
\end{quote}

This pressure is very small, for example the Sun light pressure on Earth is of the order of some micropascals.
However it becomes extremely important in the dynamics of micro/nano optomechanical systems, where the typical interactions are very weak and the pressure of a laser field on a mechanical device, e.g. on a mirror, is no more negligible. 

We are going to derive the quantum mechanical Hamiltonian of a cavity mode coupled with a movable mirror (usually one of the cavity extremes).

\subsection{Semiclassical theory of radiation pressure}
We model the system as $n$ photons hitting a movable mirror of an optical cavity.

\begin{figure}[H]
\centerline{\includegraphics[width=0.7\textwidth]{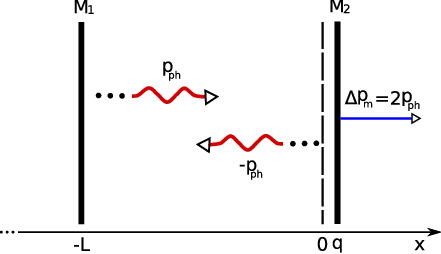}} 
\caption{Elastic collisions of photons upon the movable mirror $M_2$. $M_1$ is fixed.} \label{semipressure}
\end{figure}

When a photon with momentum $p_{ph}=E_{ph}/c$ makes an elastic collision with the mirror (see Fig. \ref{semipressure}), the conservation of total momentum imposes a change in the mirror momentum of 
\begin{equation}
\Delta p_{m}=2p_{ph}=\frac{2E_{ph}}{c}.
\end{equation}
The time delay between different collisions of a photon with the mirror is given by the round trip time 
\begin{equation}
\Delta t=\frac{2L}{c}.
\end{equation}
The total force exerted by all the $n$ photons on the mirror is given by the Newton law
\begin{equation}
F=n\frac{\Delta p_m}{\Delta t}=n\frac{E_{ph}}{L},
\end{equation}
corresponding to the classical Hamiltonian
\begin{equation}
 H=H_{light}+H_{mirror}+H_{int},
\end{equation}
with
\begin{equation}
 H_{int}=-n\frac{E_{ph}}{L}q,
\end{equation}
where $q$ is the displacement of the mirror form its equilibrium position.

The Hamiltonian quantization can be done associating to $E_{ph}$ the photon energy $E_{ph}=\hbar \omega$ and to the variables $n$ and $q$ the corresponding quantum mechanical operators,
\begin{eqnarray}
 n&\longrightarrow& \hat a^\dag \hat a ,\\
q&\longrightarrow& \hat q ,
\end{eqnarray}
so that the quantum Hamiltonian becomes
\begin{equation}
 \hat H=\hat H_{light}+\hat H_{mirror}+\hat H_{int},
\end{equation}
with
\begin{equation}
 \hat H_{int}=-\frac{\hbar \omega}{L}\hat a^\dag \hat a\,\hat q. \label{radpress}
\end{equation}

\subsection{Eigenmodes theory of radiation pressure}
Another derivation of the radiation pressure potential (\ref{radpress}) can be obtained starting from the eigenmodes equation connecting the supported frequencies $\omega_n$ and the length $L$ of the cavity.

\begin{figure}[H]
\centerline{\includegraphics[width=0.7\textwidth]{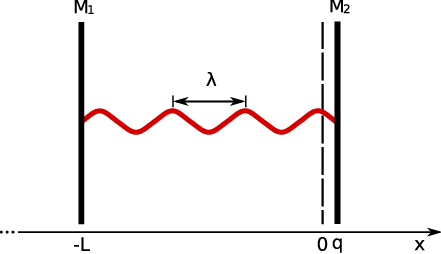}} 
\caption{Variations of the cavity length cause variations of the cavity eigenfrequencies.}\label{modespressure}
\end{figure}

In the simple configuration of Fig.(\ref{modespressure}), if we suppose the mirrors $M_1$ and $M_2$ fixed at $x=-L$ and $x=0$ respectively, then the standing waves condition requires that the eigenmodes are such that $n\lambda/2=L$ 
with $n\in \mathbb N$. This means that the allowed wave vectors are 
\begin{equation}
 k_n=\frac{2\pi}{\lambda}=\frac{n\pi}{L}
\end{equation}
and using the plane wave relation $\omega_n=k_nc$, we find the eigenmodes equation
\begin{equation}
 \omega_n=\frac{n\pi c}{L}. \label{eigenf}
\end{equation}
Let us suppose that we pump a particular mode $\omega_{\bar n}$ and that the mirror of the cavity are fixed, then the Hamiltonian will be
\begin{equation}
 \hat H=\hbar \omega_{\bar n} \hat a^\dag \hat a.
\end{equation}
Now we relax the condition of a fixed cavity length and we let the right mirror $M_2$ free to move around its equilibrium position $x=0$ (see Fig. \ref{modespressure}), if we call with $\hat q<<L$ the displacement of the mirror, then the cavity eigenmodes will depend on $\hat q$ through
\begin{equation}
 \omega_{\bar n}(\hat q)=\frac{n\pi c}{L+\hat q}\simeq \frac{n\pi c}{L}\left[1-\frac{\hat q}{L}\right]=\omega_{\bar n}(0)\left[1-\frac{\hat q}{L}\right]. \label{omegaq}
\end{equation}

The Hamiltonian of the optomechanical system will be
\begin{equation}
 \hat H=\hbar \omega(\hat q) \hat a^\dag \hat a + \hat H_m, \label{Homegaq}
\end{equation}
where $\hat H_m$ is the Hamiltonian of the non interacting mirror, which can be treated for example like a free mass or like a harmonic oscillator, depending on the particular system.

Substituting the linear expansion of the frequency (\ref{omegaq}) in (\ref{Homegaq}), we can separate a free field energy and an interaction potential which is exactly equal to that given in (\ref{radpress}),
\begin{equation}\label{hamiltmirr}
 \hat H=\hbar \omega_{\bar n}(0) \hat a^\dag \hat a +\hat H_m- \frac{\hbar \omega_{\bar n}(0)}{L} \hat a^\dag \hat a\,\hat q\,.
\end{equation}

\subsubsection{Cavity with a membrane inside}
Differently from the semiclassical derivation which is good for perfectly reflective mirrors, the expansion of the eigen-frequencies as functions of $\hat q$ can be easily generalized to more complex systems, like partially reflective mirrors, or cavities with a thick membrane inside \cite{harris,membrane}.

\begin{figure}[H]
\centerline{\includegraphics[width=0.7\textwidth]{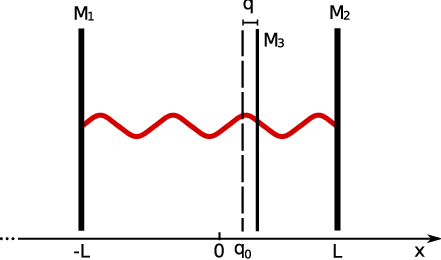}} 
\caption{A thick membrane $M_3$ is placed inside an optical cavity. Radiation pressure and vibrations cause small displacements $q$ from the rest position $q_0$.}\label{Figmemb}
\end{figure}

Let us consider a cavity of length $2L$ and a vibrating membrane positioned at $x=q_0$ (see Fig. \ref{Figmemb}).

It can be shown that the eigenmodes of the system satisfy the following equations
\begin{equation}
\cot k (L+q_0) + \cot k (L-q_0) = 2 \sqrt{R/T}, \label{eigenmemb}
\end{equation}
where $k$ is the wave vector, while $R$ and $T$ are the membrane reflectivity and transmissivity respectively.

We first study the special limit $q_0=T=0$. In this case we have two equal cavities and the supported frequencies are the same of (\ref{eigenf}). For each integer $n$ we have two modes of frequency $\omega_n$, one on the left and one on the right of the membrane.

Now we relax the condition of perfect reflectivity by considering a membrane with $T \ne 0$ and we assume also a de-centered position of the membrane $q_0\ne 0$. The symmetry has been broken, and for each integer $n$ the previous eigenmode $\omega_n$ splits in
\begin{eqnarray}
\omega_n^-(q_0) &=& \omega_n + \frac{c}{2 L} \big[ \sin^{-1} (\sqrt{R} \cos 2 k_n q_0)-\sin^{-1}\sqrt{R}\big], \label{split1}\\
\omega_n^+(q_0)&=& \omega_n + \frac{c}{2 L} \big[\pi- \sin^{-1} (\sqrt{R} \cos 2 k_n q_0)-\sin^{-1}\sqrt{R}\big]. \label{split2}
\end{eqnarray}
These expressions are just a different rewriting of to the implicit relation (\ref{eigenmemb}).
Now we consider small displacements $\hat q<<q_0$ of the membrane around its equilibrium position $q_0$.
Equations (\ref{split1}-\ref{split2}) can be linearized, giving
\begin{eqnarray}
\omega_n^-(\hat q) &=& \omega_n -\delta^- - f \frac{\omega_n}{L} \hat q ,\label{omega-} \\
\omega_n^+(\hat q)&=& \omega_n + \delta^+ + f \frac{\omega_n}{L} \hat q ,\label{omega+}
\end{eqnarray}
where
\begin{eqnarray}
\delta^- &=&   \frac{c}{2 L} \big[\sin^{-1}\sqrt{R}- \sin^{-1} (\sqrt{R} \cos 2 k_n q_0)\big] , \\
\delta^+ &=& \frac{c}{2 L} \big[\pi- \sin^{-1} (\sqrt{R} \cos 2 k_n q_0)-\sin^{-1}\sqrt{R}\big], \\
f &=& \frac{\sin 2 k_n q_0}{\sqrt{R^{-1}-\cos^2 2 k_n q_0}}.\label{f}
\end{eqnarray}

If the equilibrium position $q_0$ is such that the sine function in (\ref{f}) is different from zero, then the position dependace of the frequency is very similar to (\ref{omegaq}).

The Hamiltonian of the system, corresponding to the splitted modes (\ref{omega-}-\ref{omega+}), is
\begin{eqnarray}\label{Hmembrane}
 \hat H&=&\hbar (\omega_{\bar n}-\delta^-)\hat a^\dag \hat a+\hbar (\omega_{\bar n}+\delta^+)\hat b^\dag \hat b +\hat H_m \\
       && - \hbar f\frac{\hbar \omega_{\bar n}}{L} \hat a^\dag \hat a\,\hat q+ \hbar f\frac{\hbar \omega_{\bar n}}{L} \hat b^\dag \hat b\,\hat q\,,
\end{eqnarray}
where, in the first line there are the free field Hamiltonians of the two shifted optical modes (with lowering operators $\hat a$ and $\hat b$) and of the moving membrane, while in the second line there are the radiation pressure contributions of the two modes. We observe that, in this configuration, the radiation pressure terms have opposite signs corresponding to opposite force directions.

\begin{chapter}[One driven mode and a movable mirror]{One driven mode and a movable mirror\normalsize{ \cite{1mfilter}}}\label{1mchapter}
\section{Introduction}

Experimental demonstration of genuine quantum states of macroscopic mechanical resonators with a mass in the nanogram-milligram range represents an important step not only for the high-sensitive detection of displacements and forces, but also for the foundations of physics. 
It would represent, in fact, a remarkable signature of the quantum behavior of a macroscopic object, allowing to shed further light onto the quantum-classical boundary. Significant experimental
\cite{sidebcooling,harris}
and theoretical \cite{genes07} efforts are currently devoted to cooling such microresonators to their quantum ground state.

However, the generation of other examples of quantum states of a micro-mechanical resonator has been also considered recently. The most relevant
examples are given by squeezed and entangled states. Squeezed states of nano-mechanical resonators are potentially useful
for surpassing the standard quantum limit for position and force detection. 

The conditions under which entanglement between macroscopic objects can arise is also currently investigated. After the proposal of Ref.~\cite{Camerino}, in which two mirrors of a ring cavity are entangled by the radiation pressure of the cavity mode, other optomechanical systems have been proposed for entangling optical and/or mechanical modes by means of the radiation pressure interaction. 
Refs.~\cite{prl07} considered the simplest scheme capable of generating stationary optomechanical entanglement, i.e., a single Fabry-Perot cavity with a movable mirror.

Here we shall reconsider the Fabry-Perot model of Ref.~\cite{prl07}, which is remarkable for its simplicity and robustness against temperature of the resulting entanglement, and extend its study in various directions. 
In particular we shall develop a general theory showing how the entanglement between the mechanical resonator and optical output modes can be properly defined and calculated.
This is important since any quantum communication application involves \emph{traveling output} modes rather than intracavity ones, furthermore, by considering the output field, one can adopt a multiplexing approach that is, using appropriate filters, one can always select many different traveling output modes originating from a single intracavity mode (see Fig.~1). 

\begin{figure}[H]
\centerline{\includegraphics[width=0.7\textwidth]{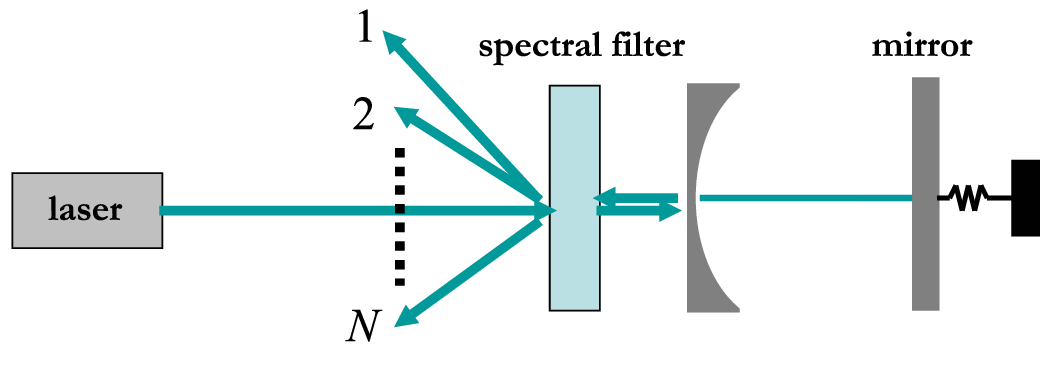}} \caption{Scheme of the cavity, which is driven by a laser and has a vibrating
mirror. With appropriate filters one can select $N$ independent modes from the cavity output field.}\label{fig1}
\end{figure}
 
We shall see that the relevant dynamics induced by radiation pressure interaction is carried by the two output modes corresponding to the first Stokes and anti-Stokes sidebands of the driving laser. 
In particular, the optomechanical entanglement with the intracavity mode is optimally transferred to the output Stokes sideband mode, which is however robustly entangled also with the anti-Stokes output mode. We shall see that the present Fabry-Perot cavity system is preferable with respect to the free space model of Refs.~\cite{prltelep}, because entanglement is achievable in a much more accessible experimental parameter region.

\section{System dynamics}

We consider a driven optical cavity coupled by radiation pressure to a micromechanical oscillator. The typical experimental configuration is a
Fabry-Perot cavity with one mirror much lighter than the other (see e.g. \cite{harris}), but our
treatment applies to other configurations such as the silica toroidal microcavity of Refs.~\cite{sidebcooling}. Radiation
pressure couples each cavity mode with many vibrational normal modes of the movable mirror. However, by choosing the detection bandwidth so that
only an isolated mechanical resonance significantly contributes to the detected signal, one can restrict to a single mechanical oscillator,
since inter-mode coupling due to mechanical nonlinearities are typically negligible. The
Hamiltonian of the system reads \cite{GIOV01}
\begin{eqnarray}
&& H=\hbar\omega_{c}a^{\dagger}a+\frac{1}{2}\hbar\omega_{m}(p^{2}+q^{2})-
\hbar G_{0}a^{\dagger}a q  \\
&& +i\hbar E(a^{\dagger}e^{-i\omega_{0}t}-ae^{i\omega_{0}t}).  \label{ham0}
\end{eqnarray}
The first term describes the energy of the cavity mode, with lowering operator $a$ ($[a,a^{\dag}]=1$), cavity frequency $\omega_c$ and decay
rate $ \kappa$. The second term gives the energy of the mechanical mode, modeled as harmonic oscillator at frequency $\omega_m$ and described by
dimensionless position and momentum operators $q$ and $p$ ($ [q,p]=i$). The third term is the radiation-pressure coupling of rate $
G_0=(\omega_c/L)\sqrt{\hbar/m \omega_m}$, where $m$ is the effective mass of the mechanical mode \cite{Pinard}, and $L$ is an effective length
that depends upon the cavity geometry: it coincides with the cavity length in the Fabry-Perot case, and with the toroid radius in the case of
Refs.~\cite{sidebcooling}. The last term describes the input driving laser with frequency $\omega_0$, where $E$ is related to the
input laser power $P$ by $|E|=\sqrt{2P \kappa/\hbar \omega_0}$. One can adopt the single cavity mode description of Eq.~(\ref{ham0}) as long as
one drives only one cavity mode and the mechanical frequency $\omega_m$ is much smaller than the cavity free spectral range $FSR \sim c/L$. In
this case, scattering of photons from the driven mode into other cavity modes is negligible \cite{law}.

The dynamic is also determined by the fluctuation-dissipation processes affecting both the optical and the mechanical mode. As we have seen in the previous sections, these fluctuations can be taken into account using the Langevin equations approach (\ref{dotab1},\ref{pdotBrow5}). In this case, we can write a set of nonlinear QLE in the interaction picture with
respect to $ \hbar \omega_0 a^{\dag}a$,
\begin{subequations}
\label{nonlinlang}
\begin{eqnarray}
\dot{q}&=&\omega_m p, \\
\dot{p}&=&-\omega_m q - \gamma_m p + G_0 a^{\dag}a + \xi, \\
\dot{a}&=&-(\kappa+i\Delta_0)a +i G_0 a q +E+\sqrt{2\kappa} a^{in},
\end{eqnarray}
where $\Delta_0=\omega_c-\omega_0$. The mechanical mode is affected by a viscous force with damping rate $\gamma_m$ and by a Brownian stochastic
force with zero mean value $\xi$, that obeys the correlation function (\ref{BrownNoise}), which rewritten for dimensionless mirror operators becomes
\end{subequations}
\begin{equation}  \label{browncorre}
\left \langle \xi(t) \xi(t^{\prime})\right \rangle = \frac{\gamma_m}{\omega_m%
} \int \frac{d\omega}{2\pi} e^{-i\omega(t-t^{\prime})} \omega \left[%
\coth\left(\frac{\hbar \omega}{2k_BT}\right)+1\right],
\end{equation}
where $k_B$ is the Boltzmann constant and $T$ is the temperature of the reservoir of the micromechanical oscillator. The cavity mode amplitude instead decays at the rate $\kappa$ and is affected by the vacuum radiation input noise
$a^{in}(t)$, whose correlation functions are given by (see \ref{stoc1}-\ref{stoc3})
\begin{equation}
\langle a^{in}(t)a^{in,\dag}(t^{\prime})\rangle =\left[N(\omega_c)+1\right]%
\delta (t-t^{\prime}),  \label{input1}
\end{equation}
and
\begin{equation} \langle a^{in,\dag}(t)a^{in}(t^{\prime})\rangle =N(\omega_c)\delta (t-t^{\prime}), \label{input2}
\end{equation}
where $N(\omega_c)=\left(\exp\{\hbar \omega_c/k_BT\}-1\right)^{-1}$ is the equilibrium mean thermal photon number. At optical frequencies $\hbar
\omega_c/k_BT \gg 1$ and therefore $N(\omega_c)\simeq 0$, so that only the correlation function of Eq.~(\ref{input1}) is relevant. We shall
neglect here technical noise sources, such as the amplitude and phase fluctuations of the driving laser. They can hinder the achievement of
genuine quantum effects (see e.g. \cite{sidebcooling}), but they could be easily accounted for by introducing fluctuations of the modulus and of
the phase of the driving parameter $E$ of Eq.~(\ref{ham0}).

\subsection{Linearization and stability analysis}

As shown in \cite{prl07}, significant optomechanical entanglement is achieved when radiation pressure coupling is strong, which is realized when
the intracavity field is very intense, i.e., for high-finesse cavities and enough driving power. In this limit (and if the system is stable) the
system is characterized by a semiclassical steady state with the cavity mode in a coherent state with amplitude $ \alpha_s = E/(\kappa+i
\Delta)$, and the micromechanical mirror displaced by $q_s = G_0 |\alpha_s|^2/\omega_m$ (see Refs.~\cite{prl07,manc-tomb} for details).
The expression giving the intracavity amplitude $\alpha_s$ is actually an implicit nonlinear equation for $\alpha_s$ because
\begin{equation}
\Delta = \Delta_0- \frac{G_0^2 |\alpha_s|^2}{\omega_m}.
\end{equation}
is the effective cavity detuning including the effect of the stationary radiation pressure. As shown in Refs.~\cite{prl07}, when
$|\alpha_s| \gg 1$ the quantum dynamics of the fluctuations around the steady state is well described by linearizing the nonlinear QLE of
Eqs.~(\ref{nonlinlang}). Defining the cavity field fluctuation quadratures $\delta X\equiv(\delta a+\delta a^{\dag})/\sqrt{2}$ and $\delta
Y\equiv(\delta a-\delta a^{\dag})/i\sqrt{2}$, and the corresponding Hermitian
input noise operators $X^{in}\equiv(a^{in}+a^{in,\dag})/\sqrt{2}$ and $%
Y^{in}\equiv(a^{in}-a^{in,\dag})/i\sqrt{2}$, the linearized QLE can be written in the following compact matrix form \cite{prl07}
\begin{equation}\label{compaeq}
\dot{u}(t)=A u(t)+n(t),
\end{equation}
where $u^{T}(t) =(\delta q(t), \delta p(t),\delta X(t), \delta Y(t))^T$ ($^{T}$ denotes the transposition) is the vector of CV fluctuation
operators , $n^{T}(t) =(0, \xi(t),\sqrt{2\kappa}X^{in}(t), \sqrt{2\kappa}Y^{in}(t))^T$ the corresponding vector of noises and $A$ the matrix
\begin{equation}\label{dynmat}
  A=\left(\begin{array}{cccc}
    0 & \omega_m & 0 & 0 \\
     -\omega_m & -\gamma_m & G & 0 \\
    0 & 0 & -\kappa & \Delta \\
    G & 0 & -\Delta & -\kappa
  \end{array}\right),
\end{equation}
where
\begin{equation} G=G_0 \alpha_s\sqrt{2}=\frac{2\omega_c}{L}\sqrt{\frac{P \kappa}{m \omega_m \omega_0 \left(\kappa^2+\Delta^2\right)}},
\label{optoc}
\end{equation}
is the \emph{effective} optomechanical coupling (we have chosen the phase reference so that $\alpha_s$ is real and positive). When $\alpha_s \gg
1$, one has $G \gg G_0$, and therefore the generation of significant optomechanical entanglement is facilitated in this linearized regime.

The formal solution of Eq.~(\ref{compaeq}) is $ u(t)=M(t) u(0)+\int_0^t ds M(s) n(t-s)$, where $M(t)=\exp\{A t\}$. The system is stable and
reaches its steady state for $t \to \infty$ when all the eigenvalues of $A$ have negative real parts so that $M(\infty)=0$. The stability
conditions can be derived by applying the Routh-Hurwitz criterion \cite{grad}, yielding the following two nontrivial conditions on the system parameters,
\begin{subequations}
\label{stabpre}
\begin{eqnarray}
&& s_1=2\gamma_m\kappa\left\{ \left[ \kappa^{2}+\left( \omega_m-\Delta\right) ^{2}\right] \left[ \kappa^{2}+\left(
\omega_m+\Delta\right) ^{2}\right] \right.  \notag \\
&&\left.+\gamma_m\left[ \left( \gamma_m+2\kappa\right) \left( \kappa^{2}+\Delta ^{2}\right) +2\kappa\omega_{m}^{2}\right] \right\}  \notag
\\
&&+\Delta\omega_{m} G^{2}\left( \gamma_m+2\kappa\right) ^{2}>0, \\
&&s_2=\omega_{m}\left( \kappa^{2}+\Delta^{2}\right) -G^{2}\Delta>0.\label{stabbis}
\end{eqnarray}
\end{subequations}
which will be considered to be satisfied from now on. Notice that when $\Delta > 0$ (laser red-detuned with respect to the cavity) the first
condition is always satisfied and only $s_2$ is relevant, while when $\Delta < 0$ (blue-detuned laser), the second condition is always satisfied
and only $s_1$ matters.

\subsection{Correlation matrix of the quantum fluctuations}

The steady state of the bipartite quantum system formed by the vibrational mode of interest and the fluctuations of the intracavity mode can be
fully characterized. In fact, the quantum noises $\xi$ and $a^{in}$ are zero-mean quantum Gaussian noises and the dynamics is linearized, and as
a consequence, the steady state of the system is a zero-mean bipartite Gaussian state, fully characterized by its $4 \times 4 $ correlation
matrix (CM) $ V_{ij}=\left(\langle u_i(\infty)u_j(\infty)+ u_j(\infty)u_i(\infty)\rangle\right)/2$. Starting from Eq.~(\ref{compaeq}), this
steady state CM can be determined in two equivalent ways. Using the Fourier transforms $\tilde{u}_i(\omega)$ of $u_i(t)$, one has
\begin{equation}
V_{ij}(t)=\int \int\frac{d\omega d\omega'}{4\pi}e^{-it(\omega+\omega')}\left\langle \tilde{u}_i(\omega) \tilde{u}_j(\omega')+
\tilde{u}_j(\omega')\tilde{u}_i(\omega)\right\rangle. \label{defV}
\end{equation}
Then, by Fourier transforming Eq.~(\ref{compaeq}) and the correlation functions of the noises, Eqs.~(\ref{browncorre}) and (\ref{input1}), one
gets
\begin{eqnarray}\label{lemma}
&&\frac{\left\langle \tilde{u}_i(\omega) \tilde{u}_j(\omega')+ \tilde{u}_j(\omega')\tilde{u}_i(\omega)\right\rangle}{2} \\
&&= \left[\tilde{M}(\omega)D(\omega)\tilde{M}(\omega')^{T}\right]_{ij}\delta(\omega+\omega'),
\end{eqnarray}
where we have defined the $4 \times 4 $ matrices
\begin{equation} \tilde{M}(\omega)=\left(i\omega+A\right)^{-1} \label{invea}
\end{equation}
and
\begin{equation}\label{diffuom}
  D(\omega)=\left(\begin{array}{cccc}
    0 & 0 & 0 & 0 \\
     0 & \frac{\gamma_m \omega}{\omega_m} \coth\left(\frac{\hbar \omega}{2k_BT}\right) & 0 & 0 \\
    0 & 0 & \kappa & 0 \\
    0 & 0 & 0 & \kappa
  \end{array}\right).
\end{equation}
The $\delta(\omega+\omega')$ factor is a consequence of the stationarity of the noises, which implies the stationarity of the CM $V$: in fact,
inserting Eq.~(\ref{lemma}) into Eq.~(\ref{defV}), one gets that $V$ is time-independent and can be written as
\begin{equation}
V=\int d\omega \tilde{M}(\omega)D(\omega)\tilde{M}(\omega)^{\dagger}. \label{Vfin}
\end{equation}
It is however reasonable to simplify this exact expression for the steady state CM, by appropriately approximating the thermal noise
contribution $D_{22}(\omega)$ in Eq.~(\ref{diffuom}). In fact $k_B T/\hbar \simeq 10^{11}$ s$^{-1}$ even at cryogenic temperatures and it is
therefore much larger than all the other typical frequency scales, which are at most of the order of $10^9$ Hz. The integrand in
Eq.~(\ref{Vfin}) goes rapidly to zero at $\omega \sim 10^{11}$ Hz, and therefore one can safely neglect the frequency dependence of
$D_{22}(\omega)$ by approximating it with its zero-frequency value
\begin{equation}  \label{thermappr}
\frac{\gamma_m \omega}{\omega_m} \coth\left(\frac{\hbar \omega}{2k_BT}%
\right) \simeq \gamma_m \frac{2k_B T}{\hbar \omega_m} \simeq \gamma_m\left(2%
\bar{n}+1\right),
\end{equation}
where $\bar{n}=\left(\exp\{\hbar \omega_m/k_BT\}-1\right)^{-1}$ is the mean thermal excitation number of the resonator.

It is easy to verify that assuming a frequency-independent diffusion matrix $D$ is equivalent to make the following Markovian approximation on
the quantum Brownian noise $\xi(t)$,
\begin{equation}
\label{browncorreMa} \left\langle \xi(t)\xi(t^{\prime})+\xi(t^{\prime})\xi(t)\right\rangle/2 \simeq \gamma_m (2n+1) \delta(t-t^{\prime}).
\end{equation}
Within this Markovian approximation, the above frequency domain treatment is equivalent to the time domain derivation considered in \cite{prl07}
which, starting from the formal solution of Eq.~(\ref{compaeq}), arrives at
\begin{equation} \label{cm2}
V_{ij}(\infty)=\sum_{k,l}\int_0^{\infty} ds \int_0^{\infty}ds' M_{ik}(s) M_{jl}(s')D_{kl}(s-s'),
\end{equation}
where $D_{kl}(s-s')=\left(\langle n_k(s)n_l(s')+ n_l(s')n_k(s)\rangle\right)/2$ is the matrix of the stationary noise correlation functions. The
Markovian approximation of the thermal noise on the mechanical resonator yields $D_{kl}(s-s')= D_{kl} \delta(s-s')$, with $D = \mathrm{Diag }
[0,\gamma_m (2 \bar{n}+1),\kappa,\kappa]$, so that Eq.~(\ref{cm2}) becomes
\begin{equation} \label{cm3}
V =\int_0^{\infty} ds  M(s)D M(s)^{T},
\end{equation}
which is equivalent to Eq.~(\ref{Vfin}) whenever $D$ does not depend upon $\omega$. When the stability conditions are satisfied ($M(\infty)=0$),
Eq.~(\ref{cm3}) is equivalent to the following Lyapunov equation for the steady-state CM,
\begin{equation} \label{lyap}
AV+VA^{T}=-D,
\end{equation}
which is a linear equation for $V$ and can be straightforwardly solved, but the general exact expression is too cumbersome and will not be
reported here.

\section{Optomechanical entanglement with the intracavity mode}

\label{Sec:intra}

In order to establish the conditions under which the optical mode and the
mirror vibrational mode are entangled we consider the logarithmic negativity
$E_{\mathcal{N}}$, which we have defined in (\ref{entnegformula}) as
\begin{equation}
E_{\mathcal{N}}=\max [0,-\ln 2\eta ^{-}],  \label{logneg}
\end{equation}%
where $\eta ^{-}\equiv 2^{-1/2}\left[ \Sigma (V)-\left[ \Sigma (V)^{2}-4\det
V\right] ^{1/2}\right] ^{1/2}$, 
with $\Sigma (V)\equiv \det V_{m}+\det V_{c}-2\det V_{mc}$, and we have used
the $2\times 2$ block form of the CM
\begin{equation}
V\equiv \left(
\begin{array}{cc}
V_{m} & V_{mc} \\
V_{mc}^{T} & V_{c}%
\end{array}%
\right) .  \label{blocks}
\end{equation}%

\subsection{Correspondence with the down-conversion process}

As already shown in \cite{sidebcooling,genes07} many features of the radiation pressure interaction in the cavity can be understood by
considering that the driving laser light is scattered by the vibrating cavity boundary mostly at the first Stokes ($\omega _{0}-\omega _{m}$)
and anti-Stokes ($\omega _{0}+\omega _{m}$) sidebands. Therefore we expect that the optomechanical interaction and eventually entanglement will
be enhanced
when the cavity is resonant with one of the two sidebands, i.e., when $%
\Delta =\pm \omega _{m}.$

It is useful to introduce the mechanical annihilation operator $\delta
b=(\delta q+i\delta p)/\sqrt{2}$, obeying the following QLE
\begin{equation}
\delta \dot{b}=-i\omega _{m}\delta b-\frac{\gamma _{m}}{2}\left( \delta
b-\delta b^{\dagger }\right) +i\frac{G}{2}\left( \delta a^{\dag }+\delta
a\right) +\frac{\xi }{\sqrt{2}}.  \label{bequat}
\end{equation}
Moving to another interaction picture by introducing the slowly-moving tilded operators $\delta b(t)=\delta \tilde{b}(t)e^{-i\omega _{m}t}$ and
$ \delta a(t)=\delta \tilde{a}(t)e^{-i\Delta t}$, we obtain from the linearized version of Eq.~(\ref{nonlinlang}c) and Eq.~(\ref{bequat}) the
following QLEs
\begin{eqnarray}
\delta \dot{\tilde{b}} &=&-\frac{\gamma _{m}}{2}\left( \delta \tilde{b}%
-\delta \tilde{b}^{\dagger }e^{2i\omega _{m}t}\right) +\sqrt{\gamma _{m}}%
b^{in}  \notag \\
&&\ \ +i\frac{G}{2}\left( \delta \tilde{a}^{\dag }e^{i(\Delta +\omega
_{m})t}+\delta \tilde{a}e^{i(\omega _{m}-\Delta )t}\right)   \label{mode2} \\
\delta \dot{\tilde{a}} &=&-\kappa \delta \tilde{a}+i\frac{G}{2}\left( \delta
\tilde{b}^{\dag }e^{i(\Delta +\omega _{m})t}+\delta \tilde{b}e^{i(\Delta
-\omega _{m})t}\right)   \notag \\
&&\ \ +\sqrt{2\kappa }\tilde{a}^{in}.  \label{modemech2}
\end{eqnarray}%
Note that we have introduced two noise operators: i) $\tilde{a}%
^{in}(t)=a^{in}(t)e^{i\Delta t}$, possessing the same correlation function
as $a^{in}(t)$; ii) $b^{in}(t)=\xi (t)e^{i\omega _{m}t}/\sqrt{2}$ which, in
the limit of large $\omega _{m}$, acquires the correlation functions \cite%
{gard}
\begin{eqnarray}
\langle b^{in,\dag }(t)b^{in}(t^{\prime })\rangle  &=&\bar{n}\delta
(t-t^{\prime }), \\
\langle b^{in}(t)b^{in,\dag }(t^{\prime })\rangle  &=&\left[ \bar{n}+1\right]
\delta (t-t^{\prime }).
\end{eqnarray}%
Eqs.~(\ref{mode2})-(\ref{modemech2}) are still equivalent to the linearized QLEs of Eq.~(\ref{compaeq}), but now we particularize them by
choosing $\Delta =\pm \omega _{m}.$
If the cavity is resonant with the Stokes sideband of the driving laser, $%
\Delta =-\omega _{m}$, one gets
\begin{eqnarray}
\delta \dot{\tilde{b}} &=&-\frac{\gamma _{m}}{2}\delta \tilde{b}+\frac{%
\gamma _{m}}{2}\delta \tilde{b}^{\dagger }e^{2i\omega _{m}t}+i\frac{G}{2}%
\delta \tilde{a}^{\dag }\nonumber \\
&+& i\frac{G}{2}\delta \tilde{a}e^{2i\omega _{m}t}+%
\sqrt{\gamma _{m}}b^{in},  \label{mode3} \\
\delta \dot{\tilde{a}} &=&-\kappa \delta \tilde{a}+i\frac{G}{2}\delta \tilde{%
b}^{\dag }+i\frac{G}{2}\delta \tilde{b}e^{2i\omega _{m}t}+\sqrt{2\kappa }%
\tilde{a}^{in},  \label{mode4}
\end{eqnarray}%
while when the cavity is resonant with the anti-Stokes sideband of the driving laser, $\Delta =\omega _{m}$, one gets
\begin{eqnarray}
\delta \dot{\tilde{b}} &=&-\frac{\gamma _{m}}{2}\delta \tilde{b}+\frac{%
\gamma _{m}}{2}\delta \tilde{b}^{\dagger }e^{2i\omega _{m}t}+i\frac{G}{2}%
\delta \tilde{a}\nonumber \\
&+& i\frac{G}{2}\delta \tilde{a}^{\dag }e^{-2i\omega _{m}t}+%
\sqrt{\gamma _{m}}b^{in},  \label{mode3prime} \\
\delta \dot{\tilde{a}} &=&-\kappa \delta \tilde{a}+i\frac{G}{2}\delta \tilde{%
b}+i\frac{G}{2}\delta \tilde{b}^{\dag }e^{-2i\omega _{m}t}+\sqrt{2\kappa }%
\tilde{a}^{in}.  \label{mode4prime}
\end{eqnarray}%
From Eqs.~(\ref{mode3})-(\ref{mode4}) we see that, for a blue-detuned driving laser, $\Delta =-\omega _{m}$, the cavity mode and mechanical
resonator are coupled via two kinds of interactions: i) a down-conversion process characterized by $\delta \tilde{b}^{\dag }\delta
\tilde{a}^{\dag }+\delta \tilde{a}\delta \tilde{b}$, which is resonant and ii) a beam-splitter-like process characterized by $\delta
\tilde{b}^{\dagger }\delta \tilde{a}+\delta \tilde{a}^{\dagger }\delta \tilde{b}$, which is off resonant. Since the beam splitter interaction is
not able to entangle modes starting from classical input states, and it is also off-resonant in this case, one can invoke the
rotating wave approximation (RWA) (which is justified in the limit of $\omega _{m} \gg G,\kappa $) and simplify the interaction to a down
conversion process, which is known to generate bipartite entanglement. In the red-detuned driving laser case,
Eqs.~(\ref{mode3prime})-(\ref{mode4prime}) show that the two modes are strongly coupled by a beam-splitter-like interaction, while the
down-conversion process is off-resonant. If one chose to make the RWA in this case, one would be left with an effective beam splitter
interaction which cannot entangle. Therefore, in the RWA limit $\omega _{m} \gg G,\kappa $, the best regime for strong optomechanical
entanglement is when the laser is blue-detuned from the cavity resonance and down-conversion is enhanced.  However, as it will be seen in the
following section, this is hindered by instability and one is rather forced to work in the opposite regime of a red-detuned laser where
instability takes place only at large values of $G$.

\subsection{Entanglement in the blue-detuned regime}

The CM of the Gaussian steady state of the bipartite system, can be obtained
from Eqs.~(\ref{mode3})-(\ref{mode4}) and Eqs.~Eqs.~(\ref{mode3prime})-(\ref%
{mode4prime}) in the RWA limit, with the techniques of the former section
(see also \cite{Vitali07})
\begin{equation}
V\equiv V^{\pm }=\left(
\begin{array}{cccc}
V_{11}^{\pm } & 0 & 0 & V_{14}^{\pm } \\
0 & V_{11}^{\pm } & \pm V_{14}^{\pm } & 0 \\
0 & \pm V_{14}^{\pm } & V_{33}^{\pm } & 0 \\
V_{14}^{\pm } & 0 & 0 & V_{33}^{\pm }%
\end{array}%
\right) ,  \label{corremat2}
\end{equation}%
where the upper (lower) sign corresponds to the blue- (red-)detuned case,
and
\begin{subequations}
\label{corrematelemrwa}
\begin{eqnarray}
V_{11}^{\pm } &=&\bar{n}+\frac{1}{2}+\frac{2G^{2}\kappa \left[ 1/2\pm \left(
\bar{n}+1/2\right) \right] }{\left( \gamma _{m}+2\kappa \right) \left(
2\gamma _{m}\kappa \mp G^{2}\right) }, \\
V_{33}^{\pm } &=&\frac{1}{2}+\frac{G^{2}\gamma _{m}\left[ \bar{n}+1/2\pm 1/2%
\right] }{\left( \gamma _{m}+2\kappa \right) \left( 2\gamma _{m}\kappa \mp
G^{2}\right) }, \\
V_{14}^{\pm } &=&\frac{2G\gamma _{m}\kappa \left[ \bar{n}+1/2\pm 1/2\right]
}{\left( \gamma _{m}+2\kappa \right) \left( 2\gamma _{m}\kappa \mp
G^{2}\right) }.
\end{eqnarray}%
For clarity we have included the red-detuned case in the RWA approximation and we see that $\det V_{mc}^{\pm }=\mp (V_{14}^{\pm })^{2}$, i.e.,
is non-negative in this latter case, which is a sufficient condition for the separability of bipartite states \cite{Simon}. Of course, this is
expected, since it is just the beam-splitter interaction that generates this CM. Thus, in the weak optomechanical coupling regime of the RWA
limit, entanglement is obtained only for a blue-detuned laser, $\Delta =-\omega _{m} $. However, the amount of achievable optomechanical
entanglement at the steady state is seriously limited by the stability condition of Eq.~(\ref%
{stabpre}a), which in the RWA limit $\Delta =-\omega _{m}\gg \kappa ,\gamma _{m} $, simplifies to $G<\sqrt{2\kappa \gamma _{m}}$. Since one needs
small mechanical dissipation rate $\gamma _{m}$ in order to see quantum effects, this means a very low maximum value for $G$. The logarithmic
negativity $E_{\mathcal{N}}$ is an increasing function of the effective optomechanical coupling $G$ (as expected) and therefore the stability
condition puts a strong upper bound also on $E_{\mathcal{N}}$. In fact, it is possible to prove that the following bound on $E_{\mathcal{N}}$
exists
\end{subequations}
\begin{equation}
E_{\mathcal{N}}\leq \ln \left[ \frac{1+G/\sqrt{2\kappa \gamma _{m}}}{1+\bar{n%
}}\right] ,  \label{logneg2}
\end{equation}%
showing that $E_{\mathcal{N}}\leq \ln 2$ and above all that entanglement is extremely fragile with respect to temperature in the RWA limit
because, due to the stability condition, $E_{\mathcal{N}}$ vanishes as soon as $\bar{n} \geq 1$.

\subsection{Entanglement in the red-detuned regime}

We conclude that, due to instability, one can find significant
optomechanical entanglement, which is also robust against temperature, only
far from the RWA regime, in the strong coupling regime in the region with
positive $\Delta $, because Eq.~(\ref{stabbis}) allows for higher values of $G$%
. This is confirmed by Fig.~\ref{intracav-ent}, where $E_{\mathcal{N}}$ is plotted versus the normalized detuning $\Delta /\omega _{m}$ and the
normalized input power $P/P_{0}$, ($P_{0}=50$ mW) at a fixed value of the cavity finesse $F=F_{0}=1.67\times 10^{4}$ in (a), and versus the
normalized finesse $F/F_{0}$ and normalized input power $P/P_{0}$ at a fixed cavity detuning $\Delta =\omega _{m}$ in (b). We have assumed an
experimentally achievable situation, i.e., a mechanical mode with $\omega _{m}/2\pi =10$ MHz, $\mathcal{Q}=10^{5}$, mass $m=10$ ng, and a cavity
of length $L=1$ mm, driven by a laser with wavelength $810$ nm, yielding $G_{0}=0.95$ kHz and a cavity bandwidth $\kappa =0.9\omega _{m}$ when
$F=F_{0}$. We have assumed a
reservoir temperature for the mirror $T=0.4$ K, corresponding to $\bar{n}%
\simeq 833$. Fig.~\ref{intracav-ent}a shows that $E_{\mathcal{N}}$ is peaked around $\Delta \simeq \omega _{m}$, even though the peak shifts to
larger values of $\Delta $ at larger input powers $P$. For increasing $P$ at fixed $%
\Delta $, $E_{\mathcal{N}}$ increases, even though at the same time the instability region (where the plot suddenly interrupts) widens. In
Fig.~\ref{intracav-ent}b we have fixed the detuning at $\Delta =\omega _{m}$ (i.e., the cavity is resonant with the anti-Stokes sideband of the
laser) and varied
both the input power and the cavity finesse. We see again that $E_{\mathcal{N%
}}$ is maximum just at the instability threshold and also that, once that
the finesse has reached a sufficiently large value, $F\simeq F_{0}$, $E_{%
\mathcal{N}}$ roughly saturates at larger values of $F$. That is, one gets
an optimal optomechanical entanglement when $\kappa \simeq \omega _{m}$ and
moving into the well-resolved sideband limit $\kappa \ll \omega _{m}$ does
not improve the value of $E_{\mathcal{N}}$. The parameter region analyzed is
analogous to that considered in \cite{prl07}, where it has been shown that
this optomechanical entanglement is extremely robust with respect to the
temperature of the reservoir of the mirror, since it persists to more than $%
20$ K.

\begin{figure}[H]
\centerline{\includegraphics[width=0.7\textwidth]{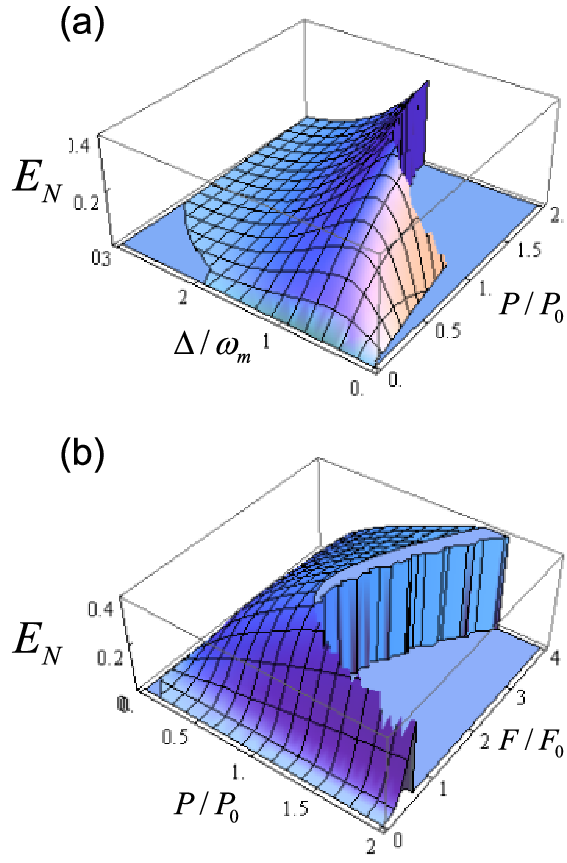}} \caption{(Color online) (a) Logarithmic negativity $E_{\mathcal{N}}$
versus the normalized detuning $\Delta/\protect\omega_m$ and normalized input power $P/P_0$, ($P_0=50$ mW) at a fixed value of the cavity
finesse $F=F_0=1.67 \times 10^4$; (b) $E_{\mathcal{N}}$ versus the normalized finesse $F/F_0$
and normalized input power $P/P_0$ at a fixed detuning $\Delta =\protect%
\omega_m$. Parameter values are $\protect\omega _{m}/2\protect\pi =10$ MHz, $%
\mathcal{Q}=10^5$, mass $m=10$ ng, a cavity of length $L=1$ mm driven by a
laser with wavelength $810$ nm, yielding $G_0=0.95$ KHz and a cavity
bandwidth $\protect\kappa=0.9 \protect\omega_m$ when $F=F_0$. We have
assumed a reservoir temperature for the mirror $T=0.4$ K, corresponding to $%
\bar{n}\simeq 833$. The sudden drop to zero of $E_{\mathcal{N}}$ corresponds
to entering the instability region.}
\label{intracav-ent}
\end{figure}

\subsection{Relationship between entanglement and cooling}

As discussed in detail in \cite{genes07} the same cavity-mechanical resonator system can be used for realizing
cavity-mediated optical cooling of the mechanical resonator via the back-action of the cavity mode. In particular, back-action
cooling is optimized just in the same regime where $\Delta \simeq \omega _{m} $. This fact is easily explained by taking into account the
scattering of the laser light by the oscillating mirror into the Stokes and anti-Stokes sidebands. The generation of an anti-Stokes photon takes
away a vibrational phonon and is responsible for cooling, while the generation of a Stokes photon heats the mirror by producing an extra phonon.
If the cavity is resonant with the anti-Stokes sideband, cooling prevails and one has a positive net laser cooling rate given by the difference
of the scattering rates.

It is therefore interesting to discuss the relation between optimal
optomechanical entanglement and optimal cooling of the mechanical resonator.
This can easily performed because the steady state CM $V$ determines also
the resonator energy, since the effective stationary excitation number of
the resonator is given by $n_{eff}=\left( V_{11}+V_{22}-1\right) /2$ (see
Ref.~\cite{genes07} for the exact expression of these matrix elements giving
the steady state position and momentum resonator variances). In Fig.~\ref%
{intracav-cool} we have plotted $n_{eff}$ under \emph{exactly the same} parameter conditions of Fig.~\ref{intracav-ent}. We see that
\emph{ground state cooling is approached ($n_{eff}<1$) simultaneously with a significant entanglement}. This shows that a significant
back-action cooling of the resonator by the cavity mode is an important condition for achieving an entangled steady state which is robust
against the effects of the resonator thermal bath.

Nonetheless, entanglement and cooling are different phenomena and optimizing one does not generally optimize also the other. This can be
seen by comparing Figs.~\ref{intracav-ent} and \ref{intracav-cool}: $E_{%
\mathcal{N}}$ is maximized always just at the instability threshold, i.e., for the maximum possible optomechanical coupling, while this is not
true for $n_{eff}$, which is instead minimized quite far from the instability threshold. For a more clear understanding we make use of some of
the results obtained for ground state cooling in Refs.~\cite{genes07}. In the perturbative limit where $G \ll \omega
_{m},\kappa $, one can define scattering rates into the Stokes ($A_{+}$) and anti-Stokes ($A_{-}$) sidebands as
\begin{equation}
A_{\pm }=\frac{G^{2}\kappa /2}{\kappa ^{2}+(\Delta \pm \omega _{m})^{2}},
\end{equation}
so that the net laser cooling rate is given by
\begin{equation}\label{netlaser}
\Gamma =A_{-}-A_{+}>0.
\end{equation}
The final occupancy of the mirror mode is consequently given by \cite{genes07}
\begin{equation}
n_{eff}=\frac{\gamma _{m}\bar{n}}{\gamma _{m}+\Gamma }+\frac{A_{+}}{\gamma
_{m}+\Gamma },
\end{equation}
where the first term in the right hand side of the above equation is the minimized thermal noise, that can be made vanishingly small provided
that $\gamma_m \ll \Gamma$, while the second term shows residual heating produced by Stokes scattering off the vibrational ground state. When
$\Gamma \gg \gamma _{m}\bar{n}$, the lower bound for $n_{eff}$ is practically set by the ratio $A_{+}/\Gamma $. However, as soon as $G$ is
increased for improving the entanglement generation, scattering into higher order sidebands takes place, with rates proportional to higher
powers of $G$. As a consequence, even though the effective thermal noise is still close to zero, residual scattering off the ground state takes
place at a rate that can be much higher than $A_{+}$. This can be seen more clearly in the exact expression of $\langle \delta q^{2}\rangle
=V_{11}$ given in \cite{genes07}, which is shown to diverge at the threshold given by Eq.~(\ref{stabbis}).

\begin{figure}[H]
\centerline{\includegraphics[width=0.7\textwidth]{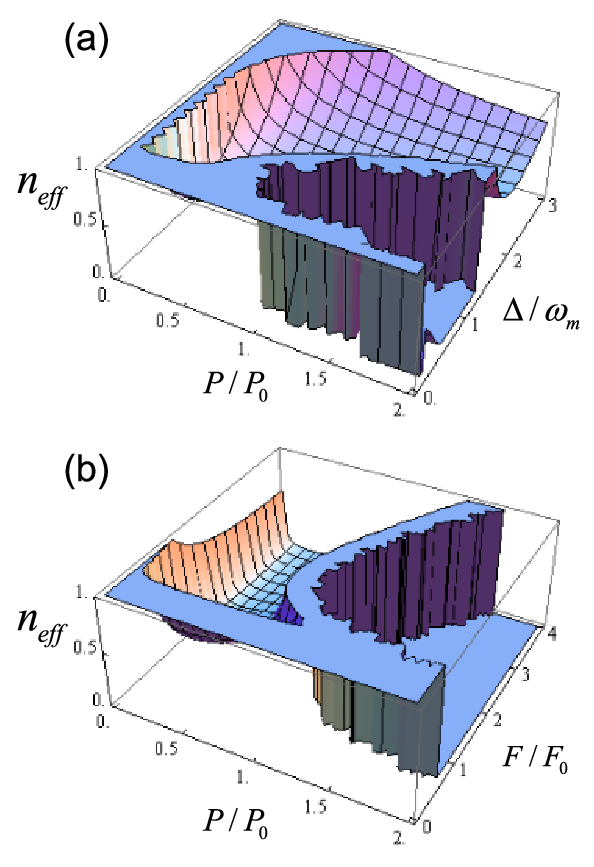}} \caption{(Color online) (a) Effective stationary excitation number of
the
resonator $n_{eff}$ versus the normalized detuning $\Delta /\protect\omega %
_{m}$ and normalized input power $P/P_{0}$, ($P_{0}=50$ mW) at a fixed value
of the cavity finesse $F=F_{0}=1.67\times 10^{4}$; (b) $n_{eff}$ versus the
normalized finesse $F/F_{0}$ and normalized input power $P/P_{0}$ at a fixed
detuning $\Delta =\protect\omega _{m}$. Parameter values are the same as in
Fig.~\protect\ref{intracav-ent}. Again, the sudden drop to zero corresponds
to entering the instability region.}
\label{intracav-cool}
\end{figure}

\section{Optomechanical entanglement with cavity output modes}

The above analysis of the entanglement between the mechanical mode of interest and the intracavity mode provides a detailed description of the
internal dynamics of the system, but it is not of direct use for practical applications. In fact, one typically does not have direct access to
the intracavity field, but one detects and manipulates only the cavity output field. For example, for any quantum communication application, it
is much more important to analyze the entanglement of the mechanical mode with the \emph{optical cavity output}, i.e., how the intracavity
entanglement is transferred to the output field. Moreover, considering the output field provides further options. In fact, by means of spectral
filters, one can always select many different traveling output modes originating from a single intracavity mode and this gives the opportunity
to easily produce and manipulate a multipartite system, eventually possessing multipartite entanglement.

\subsection{General definition of cavity output modes}

The intracavity field $\delta a(t)$ and its output are related by the input-output relation given in (\ref{InOut}), which rewritten with the notation of this chapter becomes
\begin{equation}  \label{inout}
a^{out}(t)= \sqrt{2\kappa}\delta a(t)-  a^{in}(t),
\end{equation}
where the output field possesses the same correlation functions of the optical input field $a^{in}(t)$ and the same commutation relation, i.e.,
the only nonzero commutator is $\left[a^{out}(t),a^{out}(t ^{\prime})^{\dagger}\right]=\delta(t-t^{\prime})$. From the continuous output field
$a^{out}(t)$ one can extract many independent optical modes, by selecting different time intervals or equivalently, different frequency
intervals (see e.g. \cite{fuchs}). One can define a generic set of $N$ output modes by means of the corresponding annihilation operators
\begin{equation}
a^{out}_k(t)=\int_{-\infty}^{t}ds g_k(t-s)a^{out}(s),\;\;\;k=1,\ldots N,\label{filter1}
\end{equation}
where $g_k(s)$ is the causal filter function defining the $k$-th output mode. These annihilation operators describe $N$ independent optical
modes when $\left[a^{out}_j(t),a^{out}_k(t)^{\dagger}\right]=\delta_{jk}$, which is verified when \begin{equation} \int_{0}^{\infty}ds g_j(s)^*
g_k(s)=\delta_{jk}\label{filter2},
\end{equation}
i.e., the $N$ filter functions $g_k(t)$ form an orthonormal set of square-integrable functions in $[0,\infty)$. The situation can be
equivalently described in the frequency domain: taking the Fourier transform of Eq.~(\ref{filter1}), one has
\begin{equation}
\tilde{a}^{out}_k(\omega)=\int_{-\infty}^{\infty}\frac{dt}{\sqrt{2\pi}} a^{out}_k(t)e^{i\omega
t}=\sqrt{2\pi}\tilde{g}_k(\omega)a^{out}(\omega),\label{filterFT}
\end{equation}
where $\tilde{g}_k(\omega)$ is the Fourier transform of the filter function.
An explicit example of an orthonormal set of filter functions is given by
\begin{equation}
g_k(t)=\frac{\theta(t)-\theta(t-\tau)}{\sqrt{\tau}}e^{-i\Omega_k t} , \label{filterex}
\end{equation}
($\theta$ denotes the Heavyside step function) provided that $\Omega_k$ and $\tau$ satisfy the condition
\begin{equation}\label{interfer}
\Omega_j-\Omega_k=\frac{2\pi}{\tau}p, \;\;\;{\rm integer} \;\;p.
\end{equation} These functions describe a set of independent optical modes, each centered
around the frequency $\Omega_k$ and with time duration $\tau$, i.e., frequency bandwidth $\sim 1/\tau$, since
\begin{equation}
\tilde{g}_k(\omega)=\sqrt{\frac{\tau}{2\pi}}e^{i(\omega-\Omega_k)\tau/2}\frac{\sin\left[(\omega-\Omega_k)\tau/2\right]}{(\omega-\Omega_k)\tau/2}
. \label{filterex2}
\end{equation}
When the central frequencies differ by an integer multiple of $2\pi/\tau$, the corresponding modes are independent due to the destructive
interference of the oscillating parts of the spectrum.

\subsection{Stationary correlation matrix of output modes}

The entanglement between the output modes defined above and the mechanical mode is fully determined by the corresponding $(2N+2)\times (2N+2)$
CM, which is defined by
\begin{equation}
V^{out}_{ij}(t)=\frac{1}{2}\left\langle u^{out}_i(t) u^{out}_j(t)+u^{out}_j(t) u^{out}_i(t)\right\rangle,\label{defVout}
\end{equation}
where
\begin{eqnarray}
&&u^{out}(t)\label{defuout} \\
&&=\left(\delta q(t),\delta p(t),X_1^{out}(t),Y_1^{out}(t),\ldots,X_N^{out}(t),Y_N^{out}(t)\right)^T \nonumber
\end{eqnarray}
is the vector formed by the mechanical position and momentum fluctuations and by the amplitude
($X_k^{out}(t)=\left[a^{out}_k(t)+a^{out}_k(t)^{\dagger}\right]/\sqrt{2}$), and phase
($Y_k^{out}(t)=\left[a^{out}_k(t)-a^{out}_k(t)^{\dagger}\right]/i\sqrt{2})$ quadratures of the $N$ output modes. The vector $u^{out}(t)$
properly describes $N+1$ independent CV bosonic modes, and in particular the mechanical resonator is independent of (i.e., it commutes with) the
$N$ optical output modes because the latter depend upon the output field at former times only ($s<t$). From the definition of $u^{out}(t)$, of
the output modes of Eq.~(\ref{filter1}), and the input-output relation of Eq.~(\ref{inout}) one can write
\begin{eqnarray}
&&u^{out}_i(t)=\int_{-\infty}^t ds T_{ik}(t-s)u^{ext}_k(s)\nonumber \\
&&-\int_{-\infty}^t ds T_{ik}(t-s)n^{ext}_k(s), \label{inoutgen}
\end{eqnarray}
where
\begin{equation}
u^{ext}(t)=\left(\delta q(t),\delta p(t),X(t),Y(t),\ldots,X(t),Y(t)\right)^T\label{defintra}
\end{equation}
is the $2N+2$-dimensional vector obtained by extending the four-dimensional vector $u(t)$ of the preceding section by repeating $N$ times the
components related to the optical cavity mode, and
\begin{equation}\label{defintranois}
n^{ext}(t) =\frac{1}{\sqrt{2\kappa}}\left(0,0,X_{in}(t),Y_{in}(t),\ldots,X_{in}(t),Y_{in}(t)\right)^T
\end{equation}
is the analogous extension of the noise vector $n(t)$ of the former section without however the noise acting on the mechanical mode. In
Eq.~(\ref{inoutgen}) we have also introduced the $(2N+2)\times (2N+2)$ block-matrix consisting of $N+1$ two-dimensional blocks
\begin{small}
\begin{equation}\label{transf}
  T(t)=\left(\begin{array}{ccccccc}
    \delta(t) & 0 & 0 & 0 & 0 & 0 & \ldots\\
     0 & \delta(t) & 0 & 0 & 0 & 0 & \ldots\\
    0 & 0 & \sqrt{2\kappa}{\rm Re}g_1(t) & -\sqrt{2\kappa}{\rm Im}g_1(t) & 0 & 0 & \ldots\\
    0 & 0 & \sqrt{2\kappa}{\rm Im}g_1(t) & \sqrt{2\kappa}{\rm Re}g_1(t)& 0 & 0 & \ldots\\
    0 & 0 &  0 & 0 & \sqrt{2\kappa}{\rm Re}g_2(t) & -\sqrt{2\kappa}{\rm Im}g_2(t) &\ldots\\
    0 & 0 &  0 & 0 & \sqrt{2\kappa}{\rm Im}g_2(t) & \sqrt{2\kappa}{\rm Re}g_2(t)&\ldots\\
  \vdots & \vdots & \vdots & \vdots & \vdots & \vdots & \ldots \end{array}\right).
\end{equation}
\end{small}
Using Fourier transforms, and the correlation function of the noises, one can derive the following general expression for the stationary output
correlation matrix, which is the counterpart of the $4\times 4$ intracavity relation of Eq.~(\ref{Vfin})
\begin{equation}
V^{out}=\int d\omega
\tilde{T}(\omega)\left[\tilde{M}^{ext}(\omega)+\frac{P_{out}}{2\kappa}\right]D^{ext}(\omega)\left[\tilde{M}^{ext}(\omega)^{\dagger}+\frac{P_{out}}{2\kappa}
\right]\tilde{T}(\omega)^{\dagger}, \label{Vfinout}
\end{equation}
where $P_{out}={\rm Diag}[0,0,1,1,\ldots]$ is the projector onto the $2N$-dimensional space associated with the output quadratures, and we have
introduced the extensions corresponding to the matrices $\tilde{M}(\omega)$ and $D(\omega)$ of the former section,
\begin{equation}
\tilde{M}^{ext}(\omega)=\left(i\omega+A^{ext}\right)^{-1},
\end{equation}
with
\begin{equation}\label{aexten}
  A^{ext}=\left(\begin{array}{ccccccc}
    0 & \omega_m & 0 & 0 & 0 & 0 & \ldots\\
     -\omega_m & -\gamma_m & G & 0 & G & 0 & \ldots\\
    0 & 0 & -\kappa & \Delta & 0 & 0 & \ldots\\
    G & 0 & -\Delta & -\kappa & 0 & 0 & \ldots\\
    0 & 0 &  0 & 0 & -\kappa & \Delta &\ldots\\
    G & 0 &  0 & 0 & -\Delta & -\kappa &\ldots\\
  \vdots & \vdots & \vdots & \vdots & \vdots & \vdots & \ldots \end{array}\right).
\end{equation}
and
\begin{equation}\label{dexten}
  D^{ext}(\omega)=\left(\begin{array}{ccccccc}
    0 & 0 & 0 & 0 & 0 & 0 & \ldots\\
     0 & \frac{\gamma_m \omega}{\omega_m} \coth\left(\frac{\hbar \omega}{2k_BT}\right) & 0 & 0 & 0 & 0 & \ldots\\
    0 & 0 & \kappa & 0 & \kappa & 0 & \ldots\\
    0 & 0 & 0 & \kappa & 0 & \kappa & \ldots\\
    0 & 0 & \kappa & 0 & \kappa & 0 & \ldots\\
    0 & 0 & 0 & \kappa & 0 & \kappa & \ldots\\
  \vdots & \vdots & \vdots & \vdots & \vdots & \vdots & \ldots \end{array}\right).
\end{equation}
A deeper understanding of the general expression for $V^{out}$ of Eq.~(\ref{Vfinout}) is obtained by multiplying the terms in the integral: one
gets
\begin{eqnarray}
V^{out}&=&\int d\omega
\tilde{T}(\omega)\tilde{M}^{ext}(\omega)D^{ext}(\omega)\tilde{M}^{ext}(\omega)^{\dagger}\tilde{T}(\omega)^{\dagger}+\frac{P_{out}}{2}+  \nonumber\\
&&\frac{1}{2}
\int d\omega \tilde{T}(\omega)\left[\tilde{M}^{ext}(\omega)R_{out}+R_{out}\tilde{M}^{ext}(\omega)^{\dagger}\right]\tilde{T}(\omega)^{\dagger},
\label{Vfinout2}
\end{eqnarray}
where $R_{out}=P_{out} D^{ext}(\omega)/\kappa=D^{ext}(\omega)P_{out}/\kappa$ and we have used the fact that
\begin{equation}
\int \frac{d\omega}{4\kappa^2} \tilde{T}(\omega)P_{out}D^{ext}(\omega)P_{out}\tilde{T}(\omega)^{\dagger}=\frac{P_{out}}{2}. \label{Vfinoutlem}
\end{equation}
The first integral term in Eq.~(\ref{Vfinout2}) is the contribution coming from the interaction between the mechanical resonator and the
intracavity field. The second term gives the noise added by the optical input noise to each output mode. The third term gives the contribution
of the correlations between the intracavity mode and the optical input field, which may cancel the destructive effects of the second noise term
and eventually, even increase the optomechanical entanglement with respect to the intracavity case. We shall analyze this fact in the following
section.

\subsection{A single output mode}

Let us first consider the case when we select and detect only one mode at the cavity output. Just to fix the ideas, we choose the mode specified
by the filter function of Eqs.~(\ref{filterex}) and (\ref{filterex2}), with central frequency $\Omega$ and bandwidth $\tau^{-1}$.
Straightforward choices for this output mode are a mode centered either at the cavity frequency, $\Omega=\omega_c-\omega_0$, or at the driving
laser frequency, $\Omega=0$ (we are in the rotating frame and therefore all frequencies are referred to the laser frequency $\omega_0$), and
with a bandwidth of the order of the cavity bandwidth $\tau^{-1} \simeq \kappa$. However, as discussed above, the motion of the mechanical
resonator generates Stokes and anti-Stokes motional sidebands, consequently modifying the cavity output spectrum. Therefore it may be nontrivial
to determine which is the optimal frequency bandwidth of the output field which carries most of the optomechanical entanglement generated within
the cavity. The cavity output spectrum associated with the photon number fluctuations $S(\omega)= \langle \delta a(\omega)^{\dagger} \delta
a(\omega)\rangle$ is shown in Fig.~\ref{outputspectrum}, where we have considered a parameter regime close to that considered for the
intracavity case, i.e., an oscillator with $\omega _{m}/2\pi =10$ MHz, ${\cal Q}=10^5$, mass $m=50$ ng, a cavity of length $L=1$ mm with finesse
$F=2 \times 10^4$, detuning $\Delta =\omega_m$, driven by a laser with input power $P=30$ mW and wavelength $810$ nm, yielding $G_0=0.43$ kHz,
$G=0.41 \omega_m$, and a cavity bandwidth $\kappa=0.75 \omega_m$. We have again assumed a reservoir temperature for the mirror $T=0.4$ K,
corresponding to $\bar{n}\simeq 833$. This regime is not far but does not corresponds to the best intracavity optomechanical entanglement regime
discussed in Sec.~\ref{Sec:intra}. In fact, optomechanical entanglement monotonically increases with the coupling $G$ and is maximum just at the
bistability threshold, which however is not a convenient operating point. We have chosen instead a smaller input power and a larger mass,
implying a smaller value of $G$ and an operating point not too close to threshold.

\begin{figure}[H]
\centerline{\includegraphics[width=0.7\textwidth]{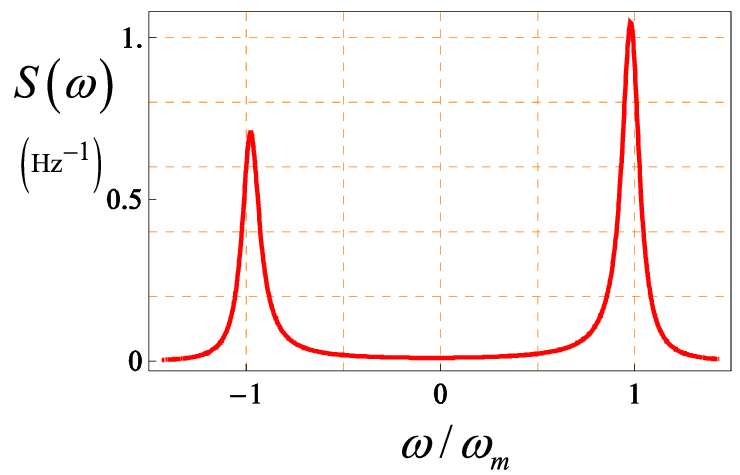}} \caption{(Color online) Cavity output spectrum in the case of an
oscillator with $\omega _{m}/2\pi =10$ MHz, ${\cal Q}=10^5$, mass $m=50$ ng, a cavity of length $L=1$ mm with finesse $F=2 \times 10^4$,
detuning $\Delta =\omega_m$, driven by a laser with input power $P=30$ mW and wavelength $810$ nm, yielding $G_0=0.43$ kHz, $G=0.41 \omega_m$,
and a cavity bandwidth $\kappa=0.75 \omega_m$. We have again assumed a reservoir temperature for the mirror $T=0.4$ K, corresponding to
$\bar{n}\simeq 833$. In this regime photons are scattered only at the two first motional sidebands, at $\omega_0\pm\omega_m$.}
\label{outputspectrum}
\end{figure}

In order to determine the output optical mode which is better entangled with the mechanical resonator, we study the logarithmic negativity
$E_{\mathcal{N}}$ associated with the output CM $V^{out}$ of Eq.~(\ref{Vfinout2}) (for $N=1$) as a function of the central frequency of the mode
$\Omega$ and its bandwidth $\tau^{-1}$, at the considered operating point. The results are shown in Fig.~\ref{output1}, where $E_{\mathcal{N}}$
is plotted versus $\Omega/\omega_m$ at five different values of $\varepsilon=\tau \omega_m$. If $\varepsilon \lesssim 1$, i.e., the bandwidth of
the detected mode is larger than $\omega_m$, the detector does not resolve the motional sidebands, and $E_{\mathcal{N}}$ has a value (roughly
equal to that of the intracavity case) which does not essentially depend upon the central frequency. For smaller bandwidths (larger
$\varepsilon$), the sidebands are resolved by the detection and the role of the central frequency becomes important. In particular
$E_{\mathcal{N}}$ becomes highly peaked around the \emph{Stokes sideband} $\Omega=-\omega_m$, showing that the optomechanical entanglement
generated within the cavity is mostly carried by this lower frequency sideband. What is relevant is that the optomechanical entanglement of the
output mode is significantly larger than its intracavity counterpart and achieves its maximum value at the optimal value $\varepsilon \simeq
10$, i.e., a detection bandwidth $\tau^{-1} \simeq \omega_m/10$. This means that in practice, by appropriately filtering the output light, one
realizes an \emph{effective entanglement distillation} because the selected output mode is more entangled with the mechanical resonator than the
intracavity field.

\begin{figure}[H]
\centerline{\includegraphics[width=0.7\textwidth]{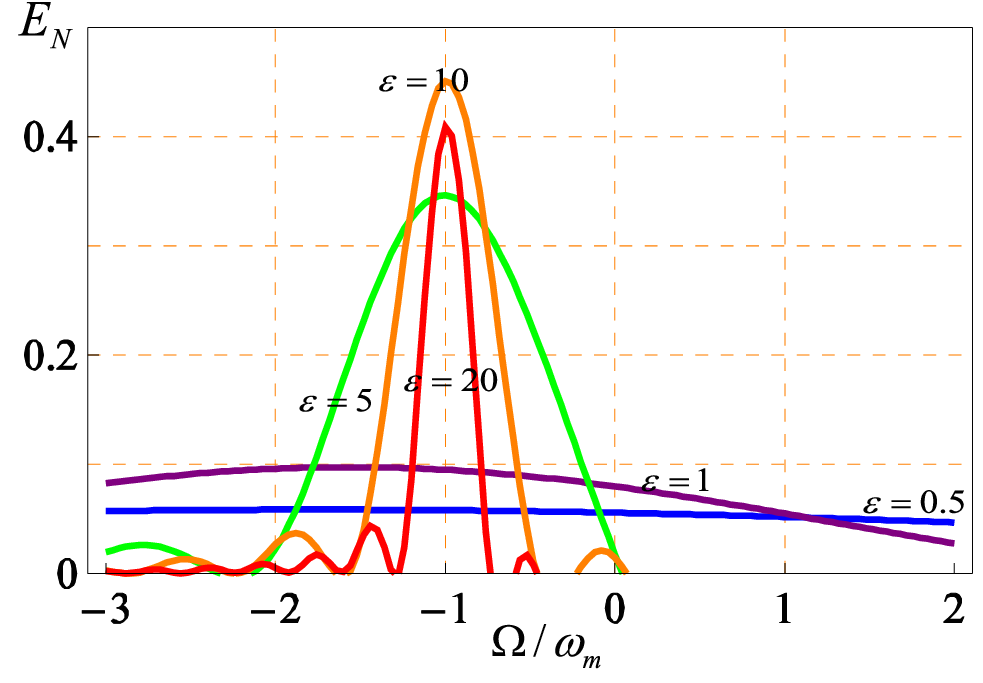}} \caption{(Color online) Logarithmic negativity
$E_{\mathcal{N}}$ of the CV bipartite system formed by mechanical mode and a single cavity output mode versus the central frequency of the
detected output mode $\Omega/\omega_m$ at five different values of its inverse bandwidth $\varepsilon=\omega_m \tau$. The other parameters are
the same as in Fig.~\protect\ref{outputspectrum}. When the bandwidth is not too large, the mechanical mode is significantly entangled only with
the first Stokes sideband at $\omega_0-\omega_m$.} \label{output1}
\end{figure}

The fact that the output mode which is most entangled with the mechanical resonator is the one centered around the Stokes sideband is also
consistent with the physics of two previous models analyzed in Refs.~\cite{fam,prltelep}. In \cite{fam} an atomic ensemble is inserted within
the Fabry-Perot cavity studied here, and one gets a system showing robust tripartite (atom-mirror-cavity) entanglement at the steady state only
when the atoms are resonant with the Stokes sideband of the laser. In particular, the atomic ensemble and the mechanical resonator become
entangled under this resonance condition, and this is possible only if entanglement is carried by the Stokes sideband because the two parties
are only indirectly coupled through the cavity mode. In \cite{prltelep}, a free-space optomechanical model is discussed, where the entanglement
between a vibrational mode of a perfectly reflecting micro-mirror and the two first motional sidebands of an intense laser beam shined on the
mirror is analyzed. Also in that case, the mechanical mode is entangled only with the Stokes mode and it is not entangled with the anti-Stokes
sideband.

By looking at the output spectrum of Fig.~\ref{outputspectrum}, one can also understand why the output mode optimally entangled with the
mechanical mode has a finite bandwidth $\tau^{-1} \simeq \omega_m/10$ (for the chosen operating point). In fact, the optimal situation is
achieved when the detected output mode overlaps as best as possible with the Stokes peak in the spectrum, and therefore $\tau^{-1}$ coincides
with the width of the Stokes peak. This width is determined by the effective damping rate of the mechanical resonator, $\gamma
_{m}^{eff}=\gamma_m+\Gamma$, given by the sum of the intrinsic damping rate $\gamma_m$ and the net laser cooling rate $\Gamma$ of
Eq.~(\ref{netlaser}. It is possible to check that, with the chosen parameter values, the condition $\varepsilon = 10$ corresponds to
$\tau^{-1}\simeq \gamma _{m}^{eff}$.

It is finally important to analyze the robustness of the present optomechanical entanglement with respect to temperature. As discussed above and
shown in \cite{prl07}, the entanglement of the resonator with the intracavity mode is very robust. It is important to see if this robustness is
kept also by the optomechanical entanglement of the output mode. This is shown by Fig.~\ref{robust-optomech}, where the entanglement
$E_{\mathcal{N}}$ of the output mode centered at the Stokes sideband $\Omega=-\omega_m$ is plotted versus the temperature of the reservoir at
two different values of the bandwidth, the optimal one $\varepsilon=10$, and at a larger bandwidth $\varepsilon =2$. We see the expected decay
of $E_{\mathcal{N}}$ for increasing temperature, but above all that also this output optomechanical entanglement is robust against temperature
because it persists even above liquid He temperatures, at least in the case of the optimal detection bandwidth $\varepsilon=10$.

\begin{figure}[H]
\centerline{\includegraphics[width=0.7\textwidth]{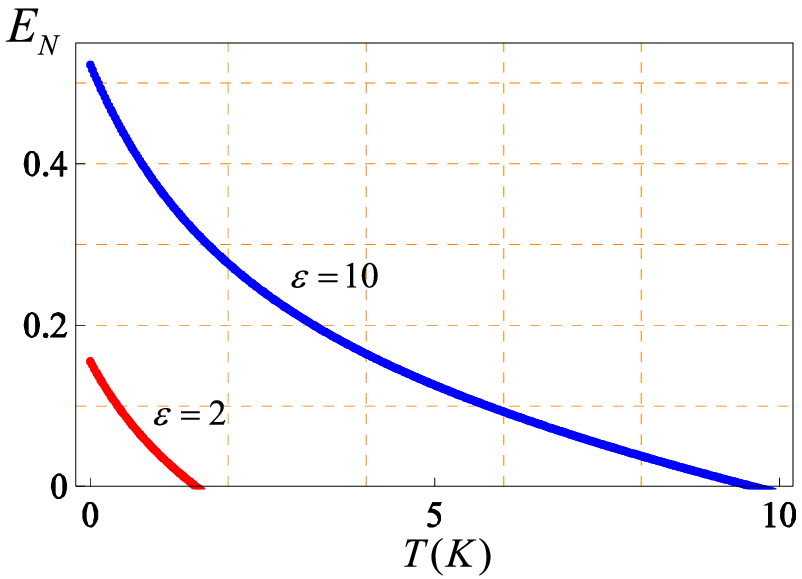}} \caption{(Color online) Logarithmic negativity $E_{\mathcal{N}}$ of the CV
bipartite system formed by mechanical mode and the cavity output mode centered around the Stokes sideband $\Omega=-\omega_m$ versus temperature
for two different values of its inverse bandwidth $\varepsilon=\omega_m \tau$. The other parameters are the same as in
Fig.~\protect\ref{outputspectrum}.} \label{robust-optomech}
\end{figure}

\subsection{Two output modes}

Let us now consider the case where we detect at the output two independent, well resolved, optical output modes. We use again the step-like
filter functions of Eqs.~(\ref{filterex}) and (\ref{filterex2}), assuming the same bandwidth $\tau^{-1}$ for both modes and two different
central frequencies, $\Omega_1$ and $\Omega_2$, satisfying the orthogonality condition of Eq.~(\ref{interfer}) $\Omega_1-\Omega_2=2p \pi
\tau^{-1}$ for some integer $p$, in order to have two independent optical modes. It is interesting to analyze the stationary state of the
resulting tripartite CV system formed by the two output modes and the mechanical mode, in order to see if and when it is able to show i) purely
optical bipartite entanglement between the two output modes; ii) fully tripartite optomechanical entanglement.

The generation of two entangled light beams by means of the radiation pressure interaction of these fields with a mechanical element has been
already considered in various configurations. In Ref.~\cite{giovaEPL01}, and more recently in Ref.~\cite{wipf}, two modes of a Fabry-Perot
cavity system with a movable mirror, each driven by an intense laser, are entangled at the output due to their common ponderomotive interaction
with the movable mirror. In the single mirror free-space model
of Ref.~\cite{prltelep}, the two first motional sidebands are also robustly entangled by the radiation pressure interaction as in a two-mode
squeezed state produced by a non-degenerate parametric amplifier.

The situation considered here is significantly different from that of Refs.~\cite{giovaEPL01,wipf}, which require many driven
cavity modes, each associated with the corresponding output mode. In the present case instead, the different output modes \emph{originate from
the same single driven cavity mode}, and therefore it is much simpler from an experimental point of view. The present scheme can be considered
as a sort of ``cavity version'' of the free-space case of Ref.~\cite{prltelep}, where the reflecting mirror is driven by a single intense laser.
Therefore, as in \cite{prltelep}, one expects to find a parameter region where the two output modes centered around the two motional
sidebands of the laser are entangled. This expectation is clearly confirmed by Fig.~\ref{sideband-ent-sweep}, where the logarithmic negativity
$E_{\mathcal{N}}$ associated with the bipartite system formed by the output mode centered at the Stokes sideband ($\Omega_1=-\omega_m$) and a
second output mode with the same inverse bandwidth ($\varepsilon=\omega_m \tau = 10 \pi$) and a variable central frequency $\Omega$, is plotted
versus $\Omega/\omega_m$. $E_{\mathcal{N}}$ is calculated from the CM of Eq.~(\ref{Vfinout2}) (for $N=2$), eliminating the first two rows
associated with the mechanical mode, and assuming the same parameters considered in the former subsection for the single output mode case. One
can clearly see that bipartite entanglement between the two cavity outputs exists only in a narrow frequency interval around the anti-Stokes
sideband, $\Omega=\omega_m$, where $E_{\mathcal{N}}$ achieves its maximum. This shows that, as in \cite{prltelep}, the two cavity output
modes corresponding to the Stokes and anti-Stokes sidebands of the driving laser are significantly entangled by their common interaction with
the mechanical resonator. The advantage of the present cavity scheme with respect to the free-space case of \cite{prltelep} is that the
parameter regime for reaching radiation-pressure mediated optical entanglement is much more promising from an experimental point of view because
it requires less input power and a not too large mechanical quality factor of the resonator. In Fig.~\ref{sideband-ent}, the dependence of
$E_{\mathcal{N}}$ of the two output modes centered at the two sidebands $\Omega=\pm\omega_m$ upon their inverse bandwidth $\varepsilon$ is
studied. We see that, differently from optomechanical entanglement of the former subsection, the logarithmic negativity of the two sidebands
always increases for decreasing bandwidth, and it achieves a significant value ($\sim 1$), comparable to that achievable with parametric
oscillators, for very narrow bandwidths. This fact can be understood from the fact that quantum correlations between the two sidebands are
established by the coherent scattering of the cavity photons by the oscillator, and that the quantum coherence between the two scattering processes is maximal for output photons with frequencies $\omega_0 \pm \omega_m$.

\begin{figure}[H]
\centerline{\includegraphics[width=0.7\textwidth]{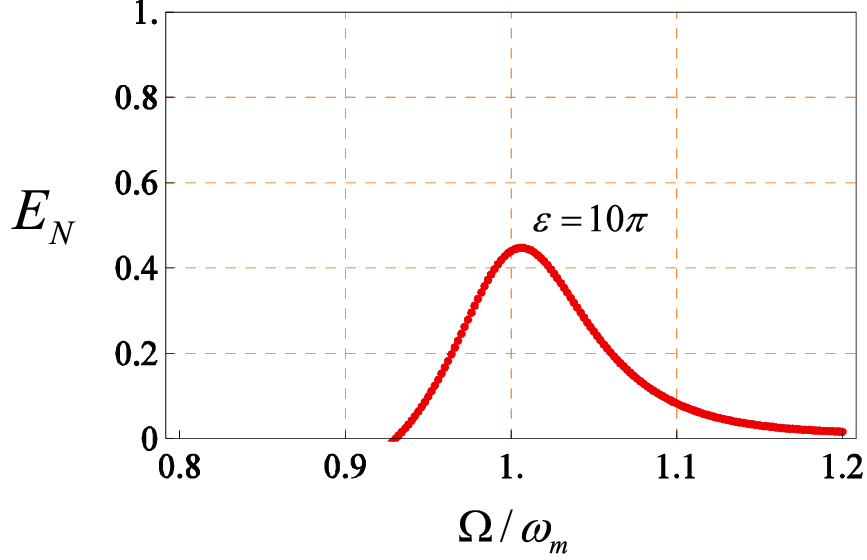}} \caption{(Color online) Logarithmic negativity $E_{\mathcal{N}}$ of the
bipartite system formed by the output mode centered at the Stokes sideband ($\Omega_1=-\omega_m$) and a second output mode with the same inverse
bandwidth ($\varepsilon=\omega_m \tau = 10 \pi$) and a variable central frequency $\Omega$, plotted versus $\Omega/\omega_m$. The other
parameters are the same as in Fig.~\protect\ref{outputspectrum}. Optical entanglement is present only when the second output mode overlaps with
the anti-Stokes sideband.} \label{sideband-ent-sweep}
\end{figure}

\begin{figure}[H]
\centerline{\includegraphics[width=0.7\textwidth]{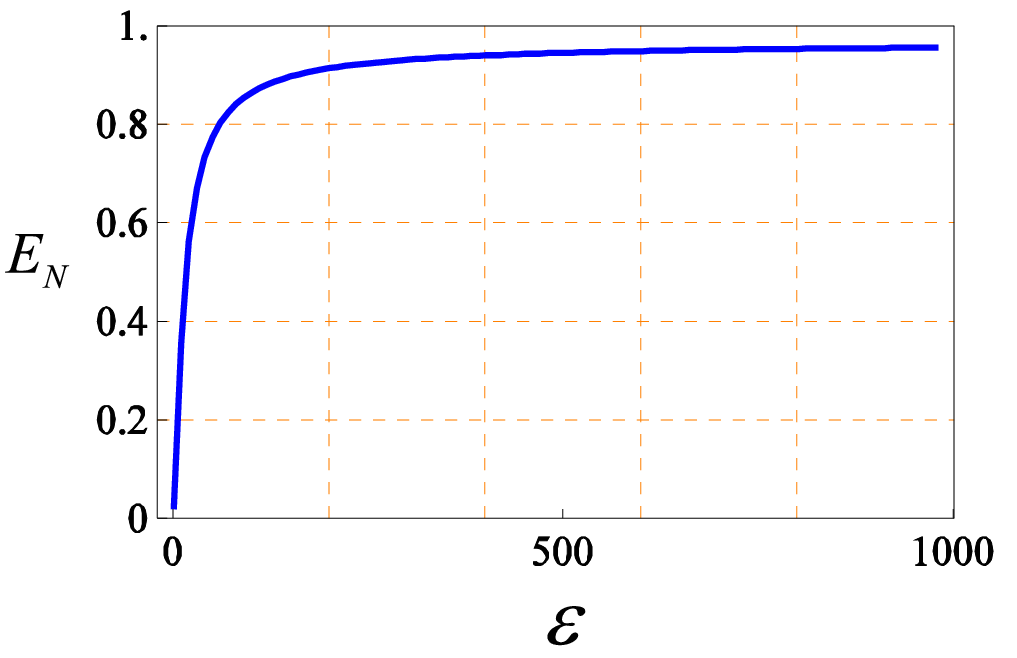}} \caption{(Color online) Logarithmic negativity $E_{\mathcal{N}}$ of the
bipartite system formed by the two output modes centered at the Stokes and anti-Stokes sidebands ($\Omega=\pm\omega_m$) versus the inverse
bandwidth $\varepsilon=\omega_m \tau $. The other parameters are the same as in Fig.~\protect\ref{outputspectrum}.} \label{sideband-ent}
\end{figure}

In Fig.~\ref{robu-sideband} we analyze the robustness of the entanglement between the Stokes and anti-Stokes sidebands with respect to the
temperature of the mechanical resonator, by plotting, for the same parameter regime of Fig.~\ref{sideband-ent}, $E_{\mathcal{N}}$ versus the
temperature $T$ at two different values of the inverse bandwidth ($\varepsilon=10\pi, 100\pi$). We see that this purely optical CV entanglement
is extremely robust against temperature, especially in the limit of small detection bandwidth, showing that the effective coupling provided by
radiation pressure can be strong enough to render optomechanical devices with high-quality resonator a possible alternative to parametric
oscillators for the generation of entangled light beams for CV quantum communication.

\begin{figure}[H]
\centerline{\includegraphics[width=0.7\textwidth]{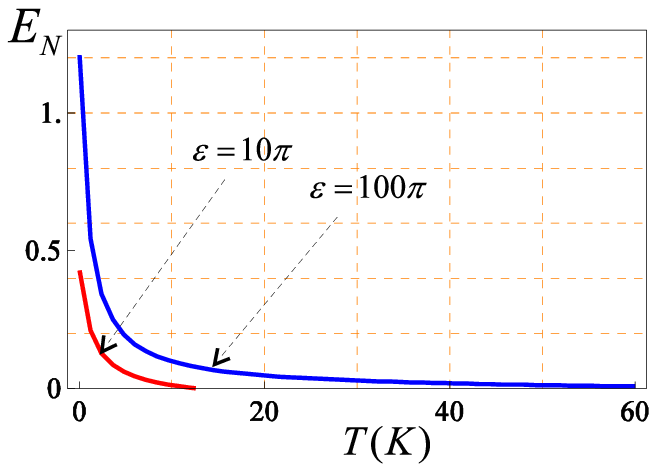}} \caption{(Color online) Logarithmic negativity $E_{\mathcal{N}}$ of
the two output modes centered at the Stokes and anti-Stokes sidebands ($\Omega=\pm\omega_m$) versus the temperature of the resonator reservoir,
at two different values of the inverse bandwidth $\varepsilon=\omega_m \tau $. The other parameters are the same as in
Fig.~\protect\ref{outputspectrum}.} \label{robu-sideband}
\end{figure}

Since in Figs.~\ref{sideband-ent-sweep} and \ref{sideband-ent} we have used the same parameter values for the cavity-resonator system used in
Fig.~\ref{output1}, we have that in this parameter regime, the output mode centered around the Stokes sideband mode shows bipartite entanglement
simultaneously with the mechanical mode and with the anti-Stokes sideband mode. This fact suggests that, in this parameter region, the CV
tripartite system formed by the output Stokes and anti-Stokes sidebands and the mechanical resonator mode could be characterized by a fully
tripartite-entangled stationary state. This is confirmed by Fig.~\ref{tripartite}, where we have applied the classification criterion of
Ref.~\cite{giedke}, providing a necessary and sufficient criterion for the determination of the entanglement class in the case of tripartite CV
Gaussian states, which is directly computable in terms of the eigenvalues of appropriate test matrices \cite{giedke}. These eigenvalues are
plotted in Fig.~\ref{tripartite} versus the inverse bandwidth $\varepsilon$ at $\Delta=\omega_m$ in the left plot, and versus the cavity
detuning $\Delta/\omega_m$ at the fixed inverse bandwidth $\varepsilon=\pi$ in the right plot (the other parameters are again those of
Fig.~\ref{outputspectrum}). We see that all the eigenvalues are negative in a wide interval of detunings and detection bandwidth of the output
modes, showing, as expected, that we have a fully tripartite-entangled steady state.

\begin{figure}[H]
\centerline{\includegraphics[width=0.7\textwidth]{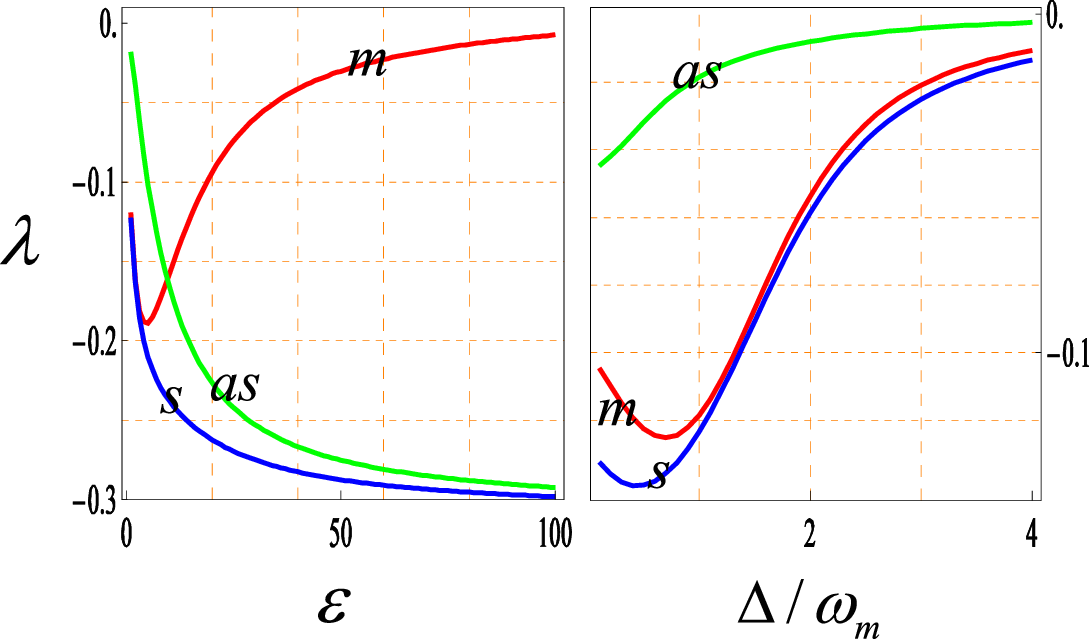}} \caption{(Color online) Analysis of tripartite entanglement. The minimum
eigenvalues after partial transposition with respect to the Stokes mode (blue line), anti-Stokes mode (green line) and the mechanical mode (red
line) are plotted versus the inverse bandwidth $\varepsilon$ at $\Delta=\omega_m$ in the left plot, and versus the cavity detuning
$\Delta/\omega_m$ at the fixed inverse bandwidth $\varepsilon=\pi$ in the right plot. The other parameters are the same as in
Fig.~\protect\ref{outputspectrum}. These eigenvalues are all negative in the studied intervals, showing that one has fully
tripartite-entanglement.} \label{tripartite}
\end{figure}

Therefore, if we consider the system formed by the two cavity output fields centered around the two motional sidebands at $\omega_0\pm \omega_m$
and the mechanical resonator, we find that the entanglement properties of its steady state are identical to those of the analogous tripartite
optomechanical free-space system of Ref.~\cite{prltelep}. In fact, the Stokes output mode shows bipartite entanglement both with the mechanical
mode and with the anti-Stokes mode, the anti-Stokes mode is not entangled with the mechanical mode, but the whole system is in a fully
tripartite-entangled state for a wide parameter regime. What is important is that in the present cavity scheme, such a parameter regime is much
easier to achieve with respect to that of the free-space case.
\end{chapter}

\begin{chapter}{Two driven modes and a movable mirror}\label{2mchapter}
\section{Introduction}
A bichromatic driving of a cavity has been experimentally realized by \cite{Mavalvala}, but their principal aim was to create an optical spring effect useful to trap and cool a heavy ($m\simeq 1 g$) movable mirror. They theoretically predicted also the presence of optical entanglement between the output fields and intracavity light-mirror entanglement \cite{wipf}. However, they computed the output optical entanglement only at the central lasers frequencies.

Here we will use the theory of filters functions, developed in the previous chapter, to select physical output modes centered at different frequencies and with different bandwidths. Moreover the theory of filters will give us the possibility to calculate also the optomechanical entanglement of the output fields with the mirror, in a fully consistent way.

In the previous chapter, we have seen that simply driving a cavity with a single laser we can get good optomechanical  entanglement and also optical entanglement between the Stokes and anti-Stokes sidebands.
That setup has the advantage to be very simple from an experimental point of view. 
However in this chapter we will see that, if we drive a cavity with two different lasers, a \textit{cooling laser} ($\Delta_A=\omega_m$) and a \textit{heating laser} ($\Delta_B=-\omega_m$), we can get better results.

It turns out that optomechanical entanglement is stronger for the output mode associated with the heating laser. This is predicted in the rotating wave approximation, where entanglement is achieved for a field rotating with a frequency opposite to that of mirror. This means that, for every cavity detuning, only the Stokes sideband is entangled with the mirror. Therefore, we can imagine that if the Stokes sidebend is on resonance ($\Delta_B=-\omega_m$), then we can get a better optomechanical entanglement. This is true indeed, but we have already seen that at negative detunings the system is very unstable, since negative detuned lasers tends to \textit{drive} the mirror. The idea is then to use another driving laser, with equal power and opposite detuning $\Delta_A=\omega_m$, to balance the heating/cooling rate and make the system stable. 

In this regime, both Stokes sidebands of the two lasers are entangled with the mirror and the entanglement of the mode driven by the heating laser is stronger, because of the cavity resonance. Moreover we get good results also for the optical entanglement of the output fields, where choosing the detection frequencies resonant with the cavity eigen-modes and using narrow detection bandwidths, we obtain a significant entanglement also at room temperature.

\section{Quantum Langevin equations}

We consider a Fabry-Perot cavity  in which one of the end mirror is a micromechanical oscillator. Two resonant modes of the cavity, $\omega_{cA}$ and $\omega_{cB}$, are driven by two lasers with detuned frequencies $\omega_{0A}$ and $\omega_{0B}$ and powers $P_A$ and $P_B$ respectively.
The Hamiltonian of the system can be written as,
\begin{eqnarray}
&& H=\hbar\omega_{cA}\,a^{\dagger}a+\hbar\omega_{cB}\,b^{\dagger}b+\frac{1}{2}\hbar\omega_{m}(p^{2}+q^{2}) \\
&&-\hbar( G_{0A}\,a^{\dagger}a + G_{0B}\,b^{\dagger}b) q  \notag \\
&& +i\hbar[ E_A(a^{\dagger}e^{-i\omega_{0A}t}-ae^{i\omega_{0A}t})+E_B(b^{\dagger}e^{-i\omega_{0B}t}-be^{i\omega_{0B}t})]. \notag \label{ham0bis}
\end{eqnarray}
In the first line there is the energy of the optical modes and the mirror, which is modeled like a harmonic oscillator of frequency $\omega_m$. The following commutation rules are satisfied, $[a,a^\dag]=[b,b^\dag]=1$ and $[q,p]=i$.  The second line describes the interaction between the mirror and the light due to radiation pressure, with coupling constants $G_{0x}=\sqrt{\hbar/m \omega_m}\,\omega_{cx}/L$, where $m$ is the effective mass of the mechanical oscillator, $L$ is the length of the cavity, and  $x=A,B$. The last line gives the contribution of the two driving lasers, in which $|E_x|=\sqrt{2P_x \kappa/\hbar \omega_{0x}}$, where $\kappa$ is the decay rate of the cavity, supposed to be the same for the two modes. In this treatment, scattering of photons of the driven modes into other cavity modes is neglected; this is acceptable if the mechanical frequency is much smaller then the free spectral range of the cavity, that is $\omega_m << c/L$.

In interaction picture with respect to $\hbar \omega_{0A} a^{\dag}a+\hbar \omega_{0B} b^{\dag}b$, we can derive a set of nonlinear QLE in which optical and mechanical noises are taken into account,
\begin{subequations}\label{QLE}
\begin{eqnarray} 
\dot{q}&=&\omega_m p, \\
\dot{p}&=&-\omega_m q - \gamma_m p + G_{0A} a^{\dag}a+G_{0B}b^\dag b + \xi, \\
\dot{a}&=&-[\kappa+i(\Delta_{0A}-G_{0A}q)]a +E_A +\sqrt{2\kappa} a^{in},\\
\dot{b}&=&-[\kappa+i(\Delta_{0B}-G_{0B}q)]b +E_B +\sqrt{2\kappa} b^{in}.
\end{eqnarray}
\end{subequations}
where $\Delta_{0x}\equiv\omega_{cx}-\omega_{0x}$ are the detunings of the two lasers. The mechanical mode is affected by a
viscous force with damping rate $\gamma_m$ and by a Brownian stochastic
force $\xi(t)$, with zero mean value and the same correlations as given in (\ref{browncorre}).

The optical modes amplitudes instead decay at the rate $\kappa$
and are disturbed by the vacuum radiation input noises $a^{in}(t)$ and $b^{in}(t)$, characterized by zero mean values and the following correlations
\begin{equation}
\langle a^{in}(t)a^{in\dag}(t^{\prime})\rangle=\langle b^{in}(t)b^{in\dag}(t^{\prime})\rangle =
\delta (t-t^{\prime}).
\end{equation}

Setting the time derivatives to zero and solving for the mean values $a_s=\langle a \rangle$, $b_s=\langle b \rangle$, $q_s=\langle q \rangle$, $p_s=\langle p \rangle$, we get
\begin{eqnarray}
&& a_s = \frac{E_A}{\kappa+i \Delta_A}, \label{as}  \\
&& b_s = \frac{E_B}{\kappa+i \Delta_B}, \label{bs}   \\
&& q_s = \frac{G_{0A} |a_s|^2 + G_{0B} |b_s|^2}{\omega_m}, \label{qs}\\
&& p_s = 0.
\end{eqnarray}
where effective detunings $\Delta_{x} \equiv \Delta_{0x}- (G_{0A}^2 |a_s|^2+G_{0B}^2 |b_s|^2)/\omega_m$ have been defined, so that (\ref{as}-\ref{bs}) is actually a nonlinear system, whose solutions gives the stationary amplitudes $a_s$ and $b_s$.

For every operator $O$, we can define the deviation with respect to the steady state as $\delta O = O-\langle O \rangle$. If we apply this transformation to equations (\ref{QLE}) and consider only first order terms in fluctuations, we obtain a set of linearized QLE for the deviation operators,

\begin{eqnarray}  \label{LQLE}
\delta \dot{q}&=&\omega_m \delta p,\\
\delta \dot{p}&=&-\omega_m \delta q - \gamma_m \delta p +G_{0A} \left(a_s\delta a^{\dag}+ a_s ^* \delta a \right) \notag\\
&&+G_{0B} \left(b_s\delta b^{\dag}+ b_s ^* \delta b \right) +\xi,  \notag \\
\delta \dot{a}&=&-(\kappa+i\Delta_A)\delta a +iG_{0A}\, a_s\, \delta q +\sqrt{2\kappa} a^{in},\notag\\
\delta \dot{b}&=&-(\kappa+i\Delta_B)\delta b +iG_{0A}\, b_s\, \delta q +\sqrt{2\kappa} b^{in}.\notag
\end{eqnarray}
To simplify the following steps, it is convenient to define the effective coupling constants $G_A=G_{0A} a_s\sqrt{2}$,   $G_B=G_{0B} b_s\sqrt{2}$ and to choose a reference frame for the cavity modes so that $a_s=|a_s|$, and $b_s=|b_s|$.\footnote{The linearized dynamics of this system is completely equivalent to that of the cavity with a membrane inside. In fact, the only qualitative difference between the two Hamiltonians (\ref{Hmembrane}) and (\ref{ham0bis}) is in the sign of the radiation pressure term of the mode $B$. So, for the ``membrane configuration'', we could repeat the same procedure obtaining the same equations but with the substitution $G_{0B}\rightarrow -G_{0B}$. However, by choosing the reference frame of the mode $B$ such that $b_s=-|b_s|$, the linearized QLE would be exactly equal to (\ref{LQLE}).}

It is better to work with field quadratures, which are defined for the mode ``A" as $\delta X_A\equiv(\delta a+\delta a^{\dag})/\sqrt{2}$, $\delta Y_A\equiv(\delta a-\delta a^{\dag})/i\sqrt{2}$ and similarly for all the other bosonic operators. Now we can finally rewrite equations (\ref{LQLE}) in matrix form as $\dot{\mathbf u} (t) =A \mathbf u(t)+\mathbf n(t) $, where $\mathbf u = (\delta q,\delta p, \delta X_A,\delta Y_A,\delta X_B,\delta Y_B)^T$, $\mathbf n = (0,\xi,\delta X_A^{in},\delta Y_A^{in},\delta X_B^{in},\delta Y_B^{in})^T$, and
\begin{equation}
 A=\left( \begin{array}{cccccc}
0	 &\omega_m&0	    &0	     &0	       &0	\\
-\omega_m&\gamma_m&G_A	    &0	     &G_B      &0	\\
0	 &0	  &-\kappa  &\Delta_A&0	       &0	\\
G_A	 &0	  &-\Delta_A&-\kappa &0	       &0	\\
0	 &0	  &0	    &0	     &-\kappa  &\Delta_B\\
G_B	 &0	  &0	    &0	     &-\Delta_B&-\kappa \\
 \end{array}\right).
\end{equation}
The differential equation can be formally integrated and gives the solution
\begin{equation}\label{u}
\mathbf u (t)=M(t)\mathbf u(0)+\int_0^t ds M(t-s)\mathbf n(s),
\end{equation}
where $M(t)\equiv\exp(At)$.

\section{Stability of the steady state}
The steady state exists and is stable if all the eigenvalues of the $A$ matrix have negative real parts, so that $M(\infty)=0$.
The characteristic polynomial of $A$ is $P(\lambda)=\lambda^6+c_1\lambda^5+c_2\lambda^4+c_3\lambda^3+c_4\lambda^2+c_5\lambda+c_6$, where
\begin{eqnarray}\label{coeff}
c1&=&\gamma_m+4\kappa, \\
c2&=& \Delta_A^2+\Delta_B^2+4 \gamma_m \kappa+6\kappa^2+\omega_m^2,\notag\\
c3&=&\gamma_m (\Delta_A^2+\Delta_B^2+6 \kappa^2)+2 \kappa [\Delta_A^2+\Delta_B^2+2(\kappa^2+\Omega_m^2)], \notag \\
c4&=& \kappa^4+2 \gamma_m \kappa (\Delta_B^2+2 \kappa^2)+6 \kappa^2 \omega_m^2+\Delta_B^2 (\kappa^2+\omega_m^2)+\notag\\
&&\Delta_A^2 (\Delta_B^2+2 \gamma_m \kappa +\kappa^2+\omega_m^2)-\omega_m(G_A^2 \Delta_A+G_B^2 \Delta_B), \notag\\
c5&=& \gamma_m (\Delta_A^2+\kappa^2) (\Delta_B^2+\kappa^2)+2 \kappa\omega_m^2(\Delta_A^2+\Delta^2+2 \kappa^2)\notag\\
&& - 2\kappa\omega_m  (G_A^2 \Delta_A+G_B^2 \Delta_B),\notag\\
c6&=&\omega_m^2 (\Delta_A^2+\kappa^2)(\Delta_B^2+\kappa^2)-\omega_m [G_B^2\Delta_B (\Delta_A^2+\kappa^2) \notag\\
&&  +G_A^2 \Delta_A(\Delta_B^2+\kappa^2)]. \notag
\end{eqnarray}

We may ask if there exists a particular relation between the parameters, such that all the terms containing $G_A$ and $G_B$ in the previous coefficients (\ref{coeff}) cancel out. In this case, the eigenvalues of $A$ would be independent from the power of the two lasers and, in particular, these would be the same eigenvalues of the system with switched off lasers. The stability would be necessarily guaranteed.

Directly from equations (\ref{coeff}), comes out that the eigenvalues of $A$ are independent form $G_A$ and $G_B$ if and only if
\begin{subequations}\label{stab}
\begin{eqnarray}
&&|G_a|=|G_B|=G, \\
&&\Delta_A=-\Delta_B=\Delta.
\end{eqnarray}
\end{subequations}

If the these constraints are satisfied, the system is stable for any values of $G$ and $\Delta$. If instead (\ref{stab}) are not satisfied, we have to apply the Routh-Hurwitz criterion. The inequalities that come out from this criterion are too involved to be reproduced here. 

The particular parameters configuration (\ref{stab}) has a simple physical interpretation. Without loss of generality we can suppose $\Delta>0$, so that the system (\ref{stab}) represents a perfect balance between a cooling laser ($\Delta_A>0$) and a heating laser $\Delta_B<0$. 

This is actually an equilibrium regime very near the stability threshold, in fact it is dangerously broken as soon as the heating rate becomes higher than the cooling one. However we can easily avoid this problem by choosing the power of the heating laser slightly smaller than the cooling one. In this way one remains with certainty in the stable regime, also in the presence of power fluctuations of the driving lasers.

\section{Entanglement measurement simulations using filter functions}
The two principal parameters which characterize a measurement of an output mode are the detection frequency $\Omega$ and the bandwidth $\gamma$ which is also connected with the measuring time $\tau\simeq 1/\gamma$.
Such a measurement can be modelled using the theory given in the previous chapter, where the output field is filtered with a normalized function $g(t)$ oscillating with a mean frequency $\Omega$ and with a bandwidth $\gamma$.

In our particular system, we want to investigate two output modes originating form two different cavity modes, therefore we do not need to choose orthogonal filter functions like that used for the single driving laser.

The simplest choice is then to filter the two output modes $\hat a_A^{out}(t), \hat a_B^{out}(t)$ with two normalized  damped plane waves with frequencies $\Omega_{A,B}$ and the same decay rate $\gamma$,
\begin{eqnarray}
 g_A(t)&=&2 \gamma e^{-\Omega_A i -\gamma t}\theta(t) \,, \\
g_B(t)&=& 2 \gamma e^{-\Omega_B i -\gamma t} \theta(t) \,,
\end{eqnarray}
where $\theta(t)$ is the Heavyside step function.

The two filtered modes are then
\begin{eqnarray}
 a_{\Omega_A}^{out}(t)&=&\int_{-\infty}^{t}ds g_A(t-s) a_A^{out}(s) \,, \\
 a_{\Omega_B}^{out}(t)&=&\int_{-\infty}^{t}ds g_B(t-s) a_B^{out}(s)  \,,
\end{eqnarray}
which Fourier transformed gives 
\begin{eqnarray}
\tilde a_{\Omega_A}^{out}(\omega)&=&\int_{-\infty}^{\infty}\frac{dt}{\sqrt{2\pi}}a_{\Omega_A}^{out}(t)e^{i\omega t}=\sqrt{2\pi}\tilde g_A(\omega) \tilde a_A^{out}(\omega) \,, \\
\tilde a_{\Omega_B}^{out}(\omega)&=&\int_{-\infty}^{\infty}\frac{dt}{\sqrt{2\pi}}a_{\Omega_B}^{out}(t)e^{i\omega t}=\sqrt{2\pi}\tilde g_B(\omega) \tilde a_B^{out}(\omega) \,,  \,
\end{eqnarray}
where $\tilde g_{A,B}(\omega)$ are the Fourier transforms of the filter functions
\begin{eqnarray}
 \tilde g_A(\omega)&=&  \frac{\sqrt{\gamma/\pi}}{\gamma+i(\Omega_A-\omega)}\,, \\
\tilde g_B(\omega)&=&  \frac{\sqrt{\gamma/\pi}}{\gamma+i(\Omega_B-\omega)} \,.
\end{eqnarray}
Now, in analogy with the previous chapter, we define:

\begin{itemize}
 \item a quadrature vector containing the mirror operators and the quadratures of the filtered modes,
\begin{equation}
u^{out}(t)=\left[0,0,\delta x_{\Omega_A}^{out}(t),\delta y_{\Omega_A}^{out}(t),\delta x_{\Omega_B}^{out}(t),\delta y_{\Omega_B}^{out}(t)\right]^T\,,
\end{equation}
\item the corresponding correlation matrix,
\begin{equation}
V^{out}(t)=\frac{1}{2}\langle u^{out}_i(t)u^{out}_j(t)+ u^{out}_j(t) u^{out}_i(t)\rangle \,,
\end{equation}
\item the transformation matrix containing the filter functions,
\begin{small}
\begin{equation}\label{transform2}
  T(t)=\left(\begin{array}{cccccc}
    \delta(t) & 0 & 0 & 0 & 0 & 0 \\
     0 & \delta(t) & 0 & 0 & 0 & 0\\
    0 & 0 & \sqrt{2\kappa}{\rm Re}g_A(t) & -\sqrt{2\kappa}{\rm Im}g_A(t) & 0 & 0\\
    0 & 0 & \sqrt{2\kappa}{\rm Im}g_A(t) & \sqrt{2\kappa}{\rm Re}g_A(t)& 0 & 0 \\
    0 & 0 &  0 & 0 & \sqrt{2\kappa}{\rm Re}g_B(t) & -\sqrt{2\kappa}{\rm Im}g_B(t)\\
    0 & 0 &  0 & 0 & \sqrt{2\kappa}{\rm Im}g_B(t) & \sqrt{2\kappa}{\rm Re}g_B(t)\\
 \end{array}\right),
\end{equation}
\end{small}
\item the matrix associated to the noise operators, which differently form the single mode case is diagonal because the input noise of the two optical modes are uncorrelated,
\begin{equation}\label{dextenbis}
  D(\omega)=\left(\begin{array}{cccccc}
    0 & 0 & 0 & 0 & 0 & 0\\
     0 & \frac{\gamma_m \omega}{\omega_m} \coth\left(\frac{\hbar \omega}{2k_BT}\right) & 0 & 0 & 0 & 0 \\
    0 & 0 & \kappa & 0 & 0 & 0 \\
    0 & 0 & 0 & \kappa & 0 & 0 \\
    0 & 0 & 0 & 0 & \kappa & 0 \\
    0 & 0 & 0 & 0 & 0 & \kappa \\
\end{array}\right).
\end{equation}

\end{itemize}
Repeating the same procedure used in the previous chapter, we find that $V^{out}$ is stationary in time, ad it is given by the same simple formula
\begin{equation}
V^{out}=\int d\omega
\tilde{T}(\omega)\left[\tilde{M}(\omega)+\frac{P_{out}}{2\kappa}\right]D(\omega)\left[\tilde{M}(\omega)^{\dagger}+\frac{P_{out}}{2\kappa}
\right]\tilde{T}(\omega)^{\dagger}, \label{Vfinoutbis}
\end{equation}
where $P_{out}={\rm Diag}[0,0,1,1,1,1]$ is the projector onto the optical quadratures, $\tilde T(\omega)$ is the Fourier transform of (\ref{transform2}), and $\tilde M(\omega)$ if the Fourier transform of $M(t)$ defined in (\ref{u}),
\begin{equation}
\tilde{M}(\omega)=\left(i\omega+A\right)^{-1}.
\end{equation}

From the correlation matrix $V^{out}$ we can extract all the informations about the steady state of the system.
In particular, choosing realistic parameters, we can compute the entanglement between the various $1\times1$ bipartite state obtained tracing out one of the optical output fields or the mechanical mode of the mirror.

\subsection{Cooling mode-mirror optomechanical entanglement}
First of all we consider the entanglement between the cooling mode (A) and the mirror. The situation is very similar to the case that we have already considered with only one driving cooling laser. We choose the parameters very similar to those of the previous chapter, i.e., an oscillator with $\omega _{m}/2\pi =10$ MHz, ${\cal Q}=10^5$, mass $m=50$ ng, a cavity of length $L=1$ mm with finesse
$F=2 \times 10^4$, driven by two lasers with input powers $P_A=15$ mW, $P_B=13$ mW, detunings $\Delta_A=\omega_m$, $\Delta_B=-\omega_m$ and wavelengths both around $810$ nm. We have again assumed a reservoir temperature for the mirror $T=0.4$ K,
corresponding to $\bar{n}\simeq 833$.

We are in the stable regime defined in (\ref{stab}), but the power of the heating laser $P_B$ is chosen slightly smaller than $P_A$ to be far from instabilities. In fact all the eigenvalues $\lambda_i$ of $A$ have negative real parts and the maximum of them is well far away from zero: $\max\{\textrm{Re}(\lambda_i)\}\simeq -2\times 10^{5} \textrm{ Hz}$.

In (Fig.\ref{optomech-cool}) a plot of the entanglement logarithmic negativity is given as a function of the measured frequency, for three different detection bandwidths determined by the parameter $\epsilon=\omega_m / \gamma$. 

\begin{figure}[H]
\centerline{\includegraphics[width=0.7\textwidth]{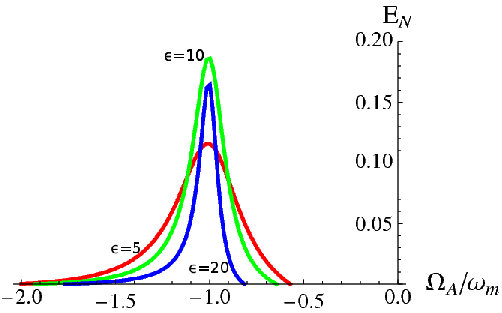}} 
\caption{Logarithmic negativity $E_{\mathcal{N}}$ of the CV bipartite system formed by the mechanical mode and the cooling output beam versus the measured frequency $\Omega_A/\omega_m$ at three different values of the inverse detection bandwidth $\epsilon=\omega_m/\gamma$. As expected only the Stokes sideband $\Omega_A=-\omega_m$ is entangled with the mirror but, since it is off resonance, we obtain a small entanglement. The other parameters are $\omega _{m}/2\pi =10$ MHz, ${\cal Q}=10^5$, $m=50$ ng, $L=1$ mm, $F=2 \times 10^4$, $P_A=15$ mW, $P_B=13$ mW, $\Delta_A=\omega_m$, $\Delta_B=-\omega_m$, $\lambda=810$ nm, $T=0.4$ K.}\label{optomech-cool}
\end{figure}

The behavior of the optomechanical entanglement with respect to temperature is given in (Fig.\ref{optomech-cool-T}).

\begin{figure}[H]
\centerline{\includegraphics[width=0.7\textwidth]{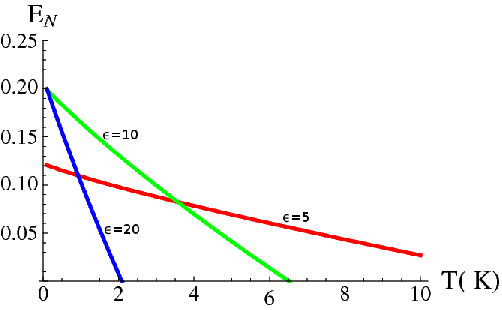}} 
\caption{Logarithmic negativity $E_{\mathcal{N}}$ of the CV bipartite system formed by the mechanical mode and the Stokes sideband ($\Omega_A=-\omega_m$) of the cooling laser versus the reservoir temperature $T$, at three different values of the inverse detection bandwidth $\epsilon=\omega_m/\gamma$. The other system parameters are the same as in Fig. \ref{optomech-cool}.}\label{optomech-cool-T}
\end{figure}

If we compare (Fig.\ref{optomech-cool}) with the monochromatic case (Fig.\ref{output1}), we observe that the addition of a heating driving laser does not improve the situation, but it introduces more noise, decreasing the entanglement of the cooling mode.

\subsection{Heating mode-mirror optomechanical entanglement}
The advantage of the bichromatic driving is instead significant if we consider the entanglement of the heating laser beam (B) with the mirror. The entangled sideband, as predicted in the RWA, is again the Stokes one, but this time it is resonant with the cavity, in fact $\Delta=-\omega_m$ implies $\omega_{cA}=\omega_{0A}-\omega_m=\omega_{Stokes}$.

In (Fig.\ref{optomech-cool}) a plot of the entanglement logarithmic negativity is given as a function of the measured frequency, for three different detection bandwidths. 

\begin{figure}[H]
\centerline{\includegraphics[width=0.7\textwidth]{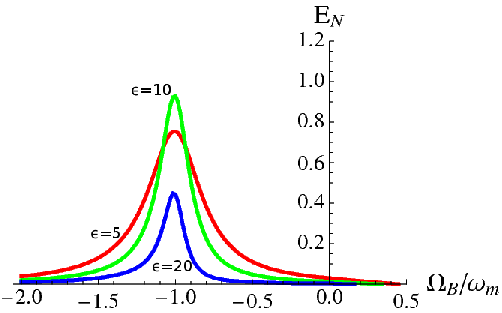}} 
\caption{Logarithmic negativity $E_{\mathcal{N}}$ of the CV bipartite system formed by the mechanical mode and the heating output beam versus the measured frequency $\Omega_B/\omega_m$ at three different values of the inverse detection bandwidth $\epsilon=\omega_m/\gamma$. The Stokes sideband $\Omega_B=-\omega_m$, which is on resonance with the cavity, is very entangled with the mirror. The other system parameters are the same in Fig. \ref{optomech-cool}.}\label{optomech-heat}
\end{figure}
The behavior of the optomechanical entanglement with respect to temperature is given in (Fig.\ref{optomech-heat-T}).
By choosing appropriate detection bandwidths, a significant amount of entanglement persists even above liquid He temperatures.
\end{chapter}

\begin{figure}[H]
\centerline{\includegraphics[width=0.7\textwidth]{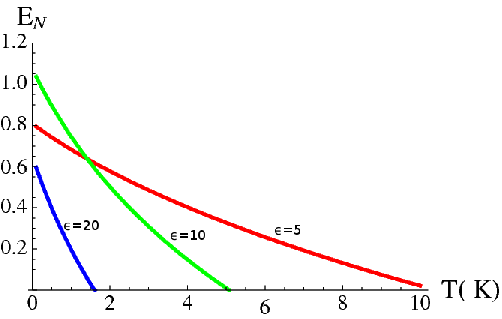}} 
\caption{Logarithmic negativity $E_{\mathcal{N}}$ of the CV bipartite system formed by the mechanical mode and the Stokes sideband ($\Omega_B=-\omega_m$) of the heating laser versus the reservoir temperature $T$, at three different values of the inverse detection bandwidth $\epsilon=\omega_m/\gamma$. The other system parameters are the same as in Fig. \ref{optomech-cool}.}\label{optomech-heat-T}
\end{figure}
\subsection{Heating mode-cooling mode optical entanglement}
Let us consider now the purely optical entanglement between the two output light beams. We optimize the system parameters, by choosing a cavity with a higher finesse $\mathcal F=8 \times 10^{5}$ and a lighter movable mirror $m=10$ ng. We increase also the power of the driving lasers to $P_A=75$ mW and $P_B=65$ mW. We are also in this case near the stable situation of (\ref{stab}), with $P_B$ (heating laser) slightly smaller than $P_A$ (cooling laser).

Stability is confirmed by the eigenvalues of $A$, all having negative real parts with $\max\{\textrm{Re}(\lambda_i)\}\simeq -5\times 10^{5}$.

The RWA suggests us to consider, also in this case, counter-rotating modes to achieve a better entanglement. In fact it comes out that the best choice is to consider the anti-Stokes sideband of the cooling beam and the Stokes sideband of the heating beam, that is $\Omega_A=\omega_m$ and $\Omega_B=-\omega_m$. We observe that also in this case the entangled sidebands are both resonant with the cavity.

In (Fig.\ref{opto-opto}), the logarithmic entanglement negativity is plotted versus the measured central frequency $\Omega_A/\omega_m$ of the cooling mode, while the detected frequency of the heating mode is centered on the Stokes sideband $\Omega_B=-\omega_m$. 

\begin{figure}[H]
\centerline{\includegraphics[width=0.7\textwidth]{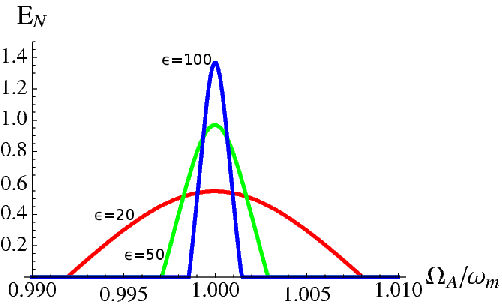}} 
\caption{Logarithmic negativity $E_{\mathcal{N}}$ of the CV bipartite system formed by the cooling and the heating output beams versus the measured frequency $\Omega_A/\omega_m$ of the cooling beam at three different values of the inverse detection bandwidth $\epsilon=\omega_m/\gamma$. The detected frequency of the heating beam is centered on the Stokes sideband. The system parameters are $\omega _{m}/2\pi =10$ MHz, ${\cal Q}=10^5$, $m=10$ ng, $L=1$ mm, $F=8 \times 10^5$, $P_A=75$ mW, $P_B=65$ mW, $\Delta_A=\omega_m$, $\Delta_B=-\omega_m$, $\lambda=810$ nm, $T=0.4$ K.}\label{opto-opto}
\end{figure}

Differently form the optomechanical entanglement, the logarithmic negativity associated to the two optical output beams, always increases for decreasing detection bandwidths. However the more the bandwidths are narrow, the more the measured frequencies must be centered with high precision and this imposes a realistic upper limit for the parameter $\epsilon$.

By measuring narrow bandwidths, thermal noise can be reduced, giving a significant amount of optical entanglement even at room temperature (see Fig.\ref{opto-opto-T}).

\begin{figure}[H]
\centerline{\includegraphics[width=0.7\textwidth]{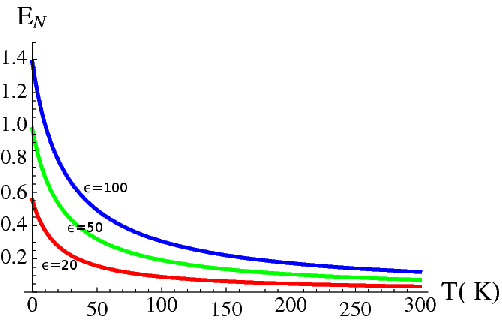}} 
\caption{Logarithmic negativity $E_{\mathcal{N}}$ of the CV bipartite system formed by the anti-Stokes sideband of the cooling mode ($\Omega_A=\omega_m$) and the Stokes sideband ($\Omega_B=-\omega_m$) of the heating laser versus the reservoir temperature $T$, at three different values of the inverse detection bandwidth $\epsilon=\omega_m/\gamma$. The other system parameters are the same as in Fig. \ref{opto-opto}.}\label{opto-opto-T}
\end{figure}

\subsection{Tripartite entanglement}
In analogy with the monochromatic configuration, we check the presence of tripartite entanglement between the Stokes sideband of the heating mode, the anti-Stokes sideband of the cooling mode and the mirror.

With the same procedure used in the previous chapter, we consider the correlation matrices after partial transposition with respect to one of the three sub-systems and then we check if the Heisenberg condition (\ref{bonafide}) is violated.
 In Fig. \ref{tripartite-plots}, the minimum eigenvalue of the three test matrices is plotted as a function of the inverse bandwidth $\epsilon$.
Also in this bichromatic setup, we find a large region in which the system is in a fully tripartite-entangled state (all the three test matrices are not positive definite).
\begin{figure}[H]
\centerline{\includegraphics[width=0.7\textwidth]{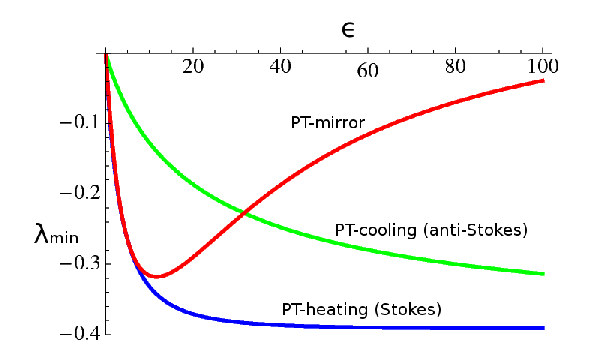}} 
\caption{(Color online) Analysis of tripartite entanglement. The minimum eigenvalues after partial transposition with respect to the Stokes sideband ($\Omega_B=-\omega_m$) of the heating mode (blue line), anti-Stokes sideband ($\Omega_A=\omega_m$) of the cooling mode (green line) and the mechanical mode (red
line) are plotted versus the inverse bandwidth $\epsilon=\omega_m/\gamma$. The other parameters are the same as in
Fig.~\protect\ref{optomech-cool}. These eigenvalues are all negative in the studied interval, showing that one has fully
tripartite-entanglement.}\label{tripartite-plots}
\end{figure}

\backmatter
\chapter{Conclusions}

In the first part of the thesis some aspects of continuous variable quantum information has been investigated.

After a general introduction on the phase space representation of Gaussian states, we have studied the quantum mechanical phenomenon of \textit{entanglement} restricted to CV systems. The most relevant separability criteria and entanglement measures have been given. 

In chapter \ref{cvtele}, the Braunstein-Kimble protocol for the teleportation of Gaussian states has been described in the Heisenberg picture, while in chapter \ref{optimization}, we have presented some original results on the optimization of the teleportation fidelity by local Gaussian operations. We have shown that, for a given shared entanglement, the maximum fidelity of teleportation is bounded below and above by simple expressions depending upon the lowest PT symplectic eigenvalue $\nu$ only (see Eq.~(\ref{lowupp})). We have seen that these bounds are quite tight and that the upper bound of the fidelity is reached if and only if Alice and Bob share a symmetric entangled state. We have also determined the general form of the CM of Alice and Bob state after the optimization procedure. Then we have restricted to local TGCP maps and shown that the optimal TGCP map is composed by a local symplectic map, eventually followed by an attenuation either on Alice or on Bob mode. Finally we have shown how the corresponding value of the maximum fidelity $\mathcal{F}_{opt}$ can be derived from the knowledge of the symplectic invariants of the initial CV entangled state shared by Alice and Bob. The optimization by generic GCP maps (i.e., including also Gaussian measurements on ancillas) or by non-Gaussian local operations are still open questions. 

The second part of the thesis is all focused on optomechanical quantum systems.

In chapter \ref{optomech}, we have derived the quantum Langevin equations related to optical cavities (input-output theory) and to mechanical systems (quantum Brownian motion). The radiation pressure potential between a mechanical resonator and an optical cavity mode has been also derived through different approaches.

In chapter \ref{1mchapter} we have considered an optical cavity mode coupled with an oscillating mirror by radiation pressure. We have studied in details the light-mirror entanglement and its relationship with the laser-cooling of the mechanical mode. We have developed a consistent theory to define filtered output modes, with given central frequencies and detection bandwidths.
We have used this theory to show that the Stokes and anti-Stokes sidebands (photons scattered by the mirror) are optically entangled, and that the Stokes one is also robustly entangled with the mirror vibrational mode. We have seen also that the two sidebands and the mirror mode form a fully tripartite-entangled state.

In chapter \ref{2mchapter} we have extended the previous model by using two driving lasers. We have observed that if the two lasers have equal powers and opposite detunings, then a balance between the cooling and the heating rate can be reached, giving the possibility to achieve interesting stable regimes. In this configuration, the Stokes sideband of the heating laser can be chosen resonant with the cavity and so it can reach a high amount of entanglement with the mirror. Moreover, two output modes associated to counter rotating sidebands (e.g. Stokes of the heating laser and anti-Stokes of the cooling laser) can achieve a significant optical entanglement even at room temperature, while at low temperature they form, together with the mirror vibrational mode, a fully tripartite-entangled state.


\begin{thebibliography}{99}
\addcontentsline{toc}{chapter}{Bibliography}

\bibitem{Galileo} G. Galilei, \textit{Lettera a Madama Cristina di Lorena Granduchessa di Toscana}, (1615).

\bibitem{maxwell} J. C. Maxwell, \textit{A Treatise on Electricity and Magnetism}, (Clarendon, Oxford, 1874), Vol. 2.

\bibitem{EPR}  A. Einstein, B. Podolsky, and N. Rosen, Phys. Rev. \textbf{47}
, 777 (1935).

\bibitem{SchCat} E. Schr\"odinger, Naturwissenshaften \textbf{23}, 807 (1935).

\bibitem{VonNeumann} J. Von Neumann, \textit{Mathematical Foundation of Quantum Mechanics}, (Princeton
University Press, 1955).


\bibitem{Dirac} P. A. M. Dirac, \textit{The Principles of Quantum Mechanics}, (Clarendon Press, Oxford, 1958).

\bibitem{grad} I. S. Gradshteyn and I. M. Ryzhik, \textit{Table of
Integrals, Series and Products}, Academic Press, Orlando, (1980), pag. 1119.

\bibitem{gard} C. W. Gardiner and P. Zoller, \textit{Quantum Noise},
(Springer, Berlin, 2000).

\bibitem{NoiseMatrix} R. Simon, N. Mukunda, B. Dutta, Phys. Rev. A \textbf{49}, 1567 (1994).

\bibitem{Williamson}  J. Williamson, Am. J. Math. 58, 141 (1936); R. Simon, S.
Chaturvedi, and V. Srinivasan, J. Math. Phys. 40, 3632 (1999).

\bibitem{TeleBennett}  C. Bennett, G. Brassard, C. Crepeau, R. Jozsa, A.
Peres, W. K. Wootters, Phys. Rev. Lett. \textbf{70}, 1895 (1993).

\bibitem{Vaidman}  L. Vaidman, Phys. Rev. A \textbf{49}, 1473 (1994).

\bibitem{BraunsteinK}  S. L. Braunstein and H. J. Kimble, Phys. Rev. Lett.
\textbf{80}, 869 (1998).

\bibitem{NoCloning}  W. K. Wootters and W. H. Zurek, Nature \textbf{299},
802 (1982).

\bibitem{Peres}  A. Peres, Phys. Rev. Lett. \textbf{77}, 1413 (1996).

\bibitem{Simon}  R. Simon, Phys. Rev. Lett. \textbf{84}, 2726 (2000).

\bibitem{Duan}  L.-M. Duan, G. Giedke, J. I. Cirac, and P. Zoller, Phys. 
Rev. Lett. \textbf{84}, 2722 (2000).

\bibitem{Camerino} S. Mancini, V. Giovannetti, D. Vitali, P. Tombesi, 
Phys. Rev. Lett.  \textbf{88}, 120401 (2002).

\bibitem{Camerino2} V. Giovannetti, S. Mancini, D. Vitali, P. Tombesi, 
Phys. Rev. A \textbf{67}, 22320 (2003).

\bibitem{Fclass}  S. L. Braunstein, C. A. Fuchs, and H. J. Kimble, J. Mod.
Opt. \textbf{47}, 267 (2000).

\bibitem{Furusawa}  A. Furusawa, J. L. S\o rensen, S. L. Braunstein, C. A.
Fuchs, H. J. Kimble, and E. S. Polzik, Science \textbf{282}, 706 (1998).

\bibitem{vaidman}L. Vaidman, Phys. Rev. A \textbf{49}, 1473 (1994).

\bibitem{ade-illu05}G. Adesso and F. Illuminati, Phys. Rev. Lett. \textbf{95}, 150503 (2005).

\bibitem{ade-chi08}G. Adesso and G. Chiribella, Phys. Rev. Lett. \textbf{100}, 170503 (2008).

\bibitem{Bow}W. P. Bowen, P. K. Lam, and T. C. Ralph, J. Mod. Opt. \textbf{50}, 801 (2003).

\bibitem{kim}M. S. Kim and J. Lee, Phys. Rev. A \textbf{64}, 012309 (2001).

\bibitem{fiu02}J. Fiurasek, Phys. Rev. A \textbf{66}, 012304 (2002).

\bibitem{logneg}J. Eisert, Ph.D. thesis, University of Potsdam, 2001; G. Vidal and R. F. Werner, Phys. Rev. A \textbf{65}, 032314 (2002).

\bibitem{pirrew}S. Pirandola, and S. Mancini, Laser Physics \textbf{16}, 1418 (2006); arXiv:quant-ph/0604027v2.

\bibitem{marian}P. Marian, and T. A. Marian, Phys. Rev. A \textbf{74}, 042306 (2006).

\bibitem{swap} P. Van Loock and S. L. Braunstein, Phys. Rev. A, \textbf{61}, 010302(R) (2000); S. Pirandola, D. Vitali, P. Tombesi, and S. Lloyd, Phys. Rev. Lett. \textbf{97}, 150403 (2006).

\bibitem{lin00} G. Lindblad, J. Phys. A \textbf{33}, 5059 (2000).
\bibitem{TGCP2} F. Caruso, J. Eisert, V. Giovannetti, and A. S. Holevo, arXiv:quant-ph/0804.0511v2.

\bibitem{giedkemap} G. Giedke and J. I. Cirac, Phys. Rev. A \textbf{66}, 032316 (2002).

\bibitem{nonGtelep} F. Dell'Anno, S. De Siena, L. Albano Farias, and F. Illuminati, Phys. Rev. A \textbf{76}, 022301 (2007).

\bibitem{Cav82}C. M. Caves, Phys. Rev. D \textbf{26}, 1817 (1982).

\bibitem{sidebcooling}A. Schliesser, R. Rivi\`ere, G. Anetsberger, O. Arcizet, and T. J. Kippenberg, Nat. Phys. \textbf{4}, 415 (2008).

\bibitem{harris} J. D. Thompson, B. M. Zwickl, A. M. Jayich, F. Marquardt, S. M. Girvin, and J. G. E. Harris, Nature (London) \textbf{452}, 72 (2008).

\bibitem{Mancini98} S. Mancini, D. Vitali, and P. Tombesi, Phys. Rev. Lett. \textbf{80}, 688 (1998).



\bibitem{genes07} C. Genes, D. Vitali, P. Tombesi, S. Gigan, and M. Aspelmeyer, Phys. Rev. A \textbf{77}, 033804 (2008).


\bibitem{Vitali07}D. Vitali, P. Tombesi, M. J. Woolley, A. C. Doherty, G. J. Milburn, Phys. Rev. A \textbf{76}, 042336 (2007).



\bibitem{wipf}C. Wipf, T. Corbitt, Y. Chen, N. Mavalvala, arXiv:0803.4001v1[quant-ph].

\bibitem{prl07} D. Vitali, S. Gigan, A. Ferreira, H. R. B\"ohm, P. Tombesi, A. Guerreiro, V. Vedral,
A. Zeilinger, and M. Aspelmeyer, Phys. Rev. Lett. \textbf{98}, 030405 (2007).


\bibitem{prltelep}S. Mancini, D. Vitali, and P. Tombesi, Phys. Rev. Lett. \textbf{90}, 137901 (2003);
S. Pirandola, S. Mancini, D. Vitali, and P. Tombesi, Phys. Rev. A \textbf{68}, 062317 (2003).

\bibitem{Pir06} S. Pirandola, D. Vitali, P. Tombesi, and S. Lloyd, Phys. Rev. Lett. \textbf{97},
150403 (2006).


\bibitem{GIOV01} V. Giovannetti, D. Vitali, Phys. Rev. A \textbf{63}, 023812
(2001).

\bibitem{Pinard} M. Pinard, Y. Hadjar, and A. Heidmann, Eur. Phys. J. D \textbf{7}, 107
(1999).

\bibitem{law} C. K. Law, Phys. Rev. A \textbf{51}, 2537 (1995).

\bibitem{manc-tomb}S. Mancini and P. Tombesi, Phys. Rev. A \textbf{49}, 4055 (1994).

\bibitem{fuchs}
S. J. van Enk and C. A. Fuchs, Phys. Rev. Lett. {\bf 88}, 027902 (2002); D. Vitali, P. Canizares, J. Eschner, and G. Morigi, New J. Phys.
\textbf{10}, 033025 (2008).

\bibitem{fam}C. Genes, D. Vitali, and P. Tombesi, Phys. Rev. A \textbf{77}, 050307(R) (2008).

\bibitem{giovaEPL01}V. Giovannetti, S. Mancini, and P. Tombesi, Europhys. Lett. \textbf{54}, 559 (2001).


\bibitem{giedke}G. Giedke, B. Kraus, M. Lewenstein, and J. I. Cirac, Phys. Rev.
A \textbf{64}, 052303 (2001).

\bibitem{simon-eof} J. Solomon Ivan and R. Simon, arXiv:0808.1658v1A [quant-ph].
\bibitem{GardColl} C. W. Gardiner and M. J. Collett,  Phys. Rev. A \textbf{31}, 3761 (1985).
\bibitem{membrane} A. M. Jayich, J. C. Sankey, B. M. Zwickl, C. Yang, J. D. Thompson, S. M. Girvin, A. A. Clerk, F. Marquardt and J. G. E. Harris, 	arXiv:0805.3723v1 [quant-ph].
\bibitem{Mavalvala} T. Corbitt, Y. Chen, H. Mueller-Ebhardt, E. Innerhofer, D. Ottaway, H. Rehbein, D. Sigg, S. Whitcomb, C. Wipf and N. Mavalvala, Phy. Rev. Lett. \textbf{98}, 150802 (2007).
\bibitem{OptFid} A. Mari and D. Vitali,	arXiv:0808.2829v1 [quant-ph].
\bibitem{1mfilter} C. Genes, A. Mari, P. Tombesi, D. Vitali, Phys. Rev. A \textbf{78}, 032316 (2008).

\end{thebibliography}
\end{document}